\newcommand{\qw}[1][-1]{\ar @{-} [0,#1]}
\newcommand{\qwx}[1][-1]{\ar @{-} [#1,0]}
\newcommand{\gate}[1]{*+<.6em>{#1} \POS ="i","i"+UR;"i"+UL **\dir{-};"i"+DL **\dir{-};"i"+DR **\dir{-};"i"+UR **\dir{-},"i" \qw}
\newcommand{\control}{*!<0em,.025em>-=-<.2em>{\bullet}}
\newcommand{\ctrl}[1]{\control \qwx[#1] \qw}
\newcommand{\targ}{*+<.02em,.02em>{\xy ="i","i"-<.39em,0em>;"i"+<.39em,0em> **\dir{-}, "i"-<0em,.39em>;"i"+<0em,.39em> **\dir{-},"i"*\xycircle<.4em>{} \endxy} \qw}
\newcommand{\multigate}[2]{*+<1em,.9em>{\hphantom{#2}} \POS [0,0]="i",[0,0].[#1,0]="e",!C *{#2},"e"+UR;"e"+UL **\dir{-};"e"+DL **\dir{-};"e"+DR **\dir{-};"e"+UR **\dir{-},"i" \qw}
\newcommand{\ghost}[1]{*+<1em,.9em>{\hphantom{#1}} \qw}
\newcommand{\lstick}[1]{*!R!<.5em,0em>=<0em>{#1}}
\newcommand{\Qcircuit}{\xymatrix @*=<0em>}
\newtheorem{theorem}{Theorem}
\newtheorem{lemma}{Lemma}
\newtheorem{proposition}{Proposition}
\newtheorem{definition}{Definition}
\newcommand{\cP}{{\sf P}}
\newcommand{\cBPP}{{\sf BPP}}
\newcommand{\cNP}{{\sf NP}}
\newcommand{\csP}{{\sf \texttt{\#} P}}
\newcommand{\cMA}{{\sf MA}}
\newcommand{\ccoNP}{{\sf coNP}}
\newcommand{\cBQP}{{\sf BQP}}
\newcommand{\cPromiseQMA}{{\sf PromiseQMA}}
\newcommand{\cQMA}{{\sf QMA}}
\newcommand{\cQCMA}{{\sf QCMA}}
\newcommand{\ccoQCMA}{{\sf coQCMA}}
\newcommand{\cPpoly}{{\sf P/poly}}
\newcommand{\cNPpoly}{{\sf NP/poly}}
\newcommand{\cMApoly}{{\sf MA/poly}}
\newcommand{\ccoNPpoly}{{\sf coNP/poly}}
\newcommand{\cPnd}{{\sf P/O(n^d)}}
\newcommand{\cNPnd}{{\sf MA/O(n^d)}}
\newcommand{\ccoNPnd}{{\sf coMA/O(n^d)}}
\newcommand{\cBPPrpoly}{{\sf BPP/rpoly}}
\newcommand{\cMArpoly}{{\sf MA/rpoly}}
\newcommand{\cQCMAqpoly}{{\sf QCMA/qpoly}}
\newcommand{\ccoQCMAqpoly}{{\sf coQCMA/qpoly}}
\newcommand{\cPH}{{\sf PH}}
\newcommand{\cIP}{{\sf IP}}
\newcommand{\cPSPACE}{{\sf PSPACE}}
\newcommand{\gesdec}{{\sf NP/poly} \cap {\sf coNP/poly}}
\newcommand{\qgesdec}{{\sf QCMA/qpoly} \cap {\sf coQCMA/qpoly}}
\begin{document}
\sloppy
\title{Complexity-theoretic limitations on blind \\ delegated quantum computation}
\author[1]{Scott Aaronson\footnote{Email: aaronson@cs.utexas.edu}}
\author[2]{Alexandru Cojocaru\footnote{Email: a.cojocaru@sms.ed.ac.uk}}
\author[2,3]{Alexandru Gheorghiu\footnote{Email: andrugh@caltech.edu}}
\author[2,4]{Elham Kashefi\footnote{Email: ekashefi@inf.ed.ac.uk}}
\affil[1]{Department of Computer Science, University of Texas at Austin}
\affil[2]{School of Informatics, University of Edinburgh}
\affil[3]{Department of Computing and Mathematical Sciences, California Institute of Technology}
\affil[4]{CNRS LIP6, Universit\'{e} Pierre et Marie Curie, Paris}

\date{}
\maketitle

\begin{abstract}
Blind delegation protocols allow a client to delegate a computation to a server so that the server learns nothing about the input to the computation apart from its size.
For the specific case of \emph{quantum computation} we know, from work over the past decade, that blind delegation protocols can achieve information-theoretic security (provided the client and the server exchange some amount of quantum information). In this paper we prove, provided certain complexity-theoretic conjectures are true, that the power of \emph{information-theoretically secure} blind delegation protocols for quantum computation (ITS-BQC protocols) is in a number of ways constrained. 

In the first part of our paper we provide some indication that ITS-BQC protocols for delegating polynomial-time quantum computations in which the client and the server interact only classically are unlikely to exist.
We first show that having such a protocol in which the client and the server exchange $O(n^d)$ bits of communication, implies that $\cBQP \subset \cNPnd$. We conjecture that this containment is unlikely by proving that there exists an oracle relative to which $\cBQP \not\subset \cNPnd$.
We then show that if an ITS-BQC protocol exists in which the client and the server interact only classically and which allows the client to delegate quantum sampling problems to the server (such as $\textsc{BosonSampling}$) then there exist non-uniform circuits of size $2^{n - \mathsf{\Omega}(n/log(n))}$, making polynomially-sized queries to an $\cNP^{\cNP}$ oracle, for computing the permanent of an $n \times n$ matrix.

The second part of our paper concerns ITS-BQC protocols in which the client and the server engage in one round of quantum communication and then exchange polynomially many classical messages. First, we provide a complexity-theoretic upper bound on the types of functions that could be delegated in such a protocol by showing that they must be contained in $\qgesdec$. Then, we show that having such a protocol for delegating $\cNP$-hard functions implies $\mathsf{coNP^{NP^{NP}}} \subseteq \cNP^{\cNP^{\cPromiseQMA}}$.
\end{abstract}

\clearpage


\clearpage

\section{Introduction}
An important area of research in modern cryptography is that of performing \emph{computations on encrypted data}. The general idea is that a client wants to compute some function $f$ on some input $x$, but lacks the computational power to do this in a reasonable amount of time. Luckily, the client has access to a computationally powerful server (cloud, cluster etc) which can compute $f(x)$ quickly. However, because the computation might involve sensitive or classified information, or the server could be compromised remotely, we would like the input $x$ to be hidden from the server at all times. The client can simply encrypt $x$, but this raises the question: how can the server compute $f(x)$ if it doesn't know $x$?
The general problem of computing on encrypted data was first considered by Rivest, Adleman and Dertouzos \cite{rad}. Since then, instances of this problem have appeared in many areas of modern research including those of electronic voting, machine learning on encrypted data, program obfuscation and others \cite{example1, example2, example3, example4, example5, example6}.

It was shown in 2009, when Gentry produced the first \emph{fully homomorphic encryption scheme}, that performing classical computations on encrypted data is possible \cite{gentry}. In homomorphic encryption the client has a pair of efficient algorithms $(Enc, Dec)$, which respectively perform encryption and decryption, and which satisfy the property $Dec(f, x, Eval(f, Enc(x))) = f(x)$, for any function $f$ from some set $\mathcal{C}$. In other words,
the server evaluates $f$ on the encrypted input $Enc(x)$ using $Eval$ and returns this to the client which can then decrypt it to $f(x)$. Of course, the server should not be able to infer information about $x$ from $Enc(x)$, a condition which is typically expressed through the criterion of \emph{semantic security} \cite{crypto}.
If the set $\mathcal{C}$ contains all polynomial-sized circuits then the scheme becomes a fully homomorphic encryption scheme, commonly abbreviated FHE. All known FHE schemes are secure under \emph{cryptographic assumptions}.

Computing on encrypted data becomes particularly interesting when the server is a \emph{quantum computer}. This is because efficient quantum algorithms have been found for various problems which are believed to be intractable for classical computers. In fact, it has been shown that if
a classical computer and a quantum computer are both given black-box or oracle access to certain functions, then the quantum computer exponentially outperforms the classical computer \cite{bv, simon, niel, aaronson}.
Classical clients would therefore be highly motivated to delegate problems to quantum computers. However, ensuring the privacy of their inputs is challenging. In particular, we'd have to solve the following problems:
\begin{itemize}
\item Devise an encryption scheme which is secure against quantum computers and does not leak information to the server about the client's input.
\item Ensure that the encryption scheme allows the client to recover the output of the computation from the result provided by the quantum server.
\item Ensure that the protocol is efficient for the client. Ideally, the number of rounds of interaction between the client and the server as well as the client's local computations, should scale at most polynomially with the size of the input.
\end{itemize}
In spite of these stringent requirements, protocols that achieve these properties already exist and are known collectively as \emph{delegated blind quantum computing schemes} \cite{fitzsimons2016private}.
In such protocols, a probabilistic polynomial-time client is able to delegate polynomial-time quantum computations to a server in such a way that the client's input (apart from an upper bound on its size) is kept hidden from the server in an \emph{information-theoretic} sense. All of the above schemes require the client and the server to share at least one round of quantum communication.
\emph{Universal Blind Quantum Computation} (UBQC), shown schematically in Figure~\ref{fig:ubqc}, is an example of such a protocol \cite{bfk}.

\begin{figure}[htbp!]
\centering
\includegraphics[scale=0.35]{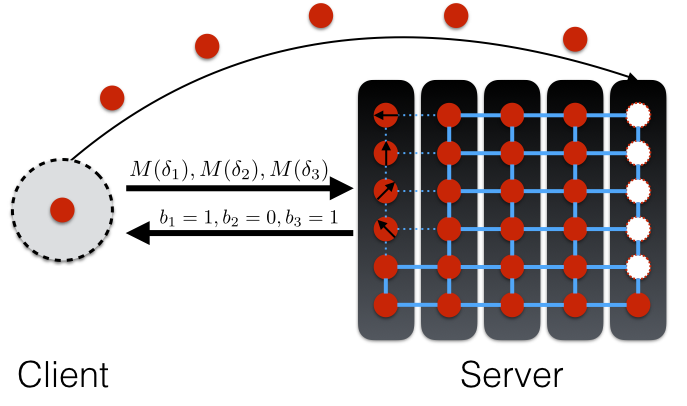}
\caption{Universal Blind Quantum Computation \cite{bfk}. In UBQC, a classical client augmented with the ability to prepare single-qubit states sends these qubits to the server along with instructions on how to entangle and measure them in order to perform a computation. The $M(\delta_i)$ indicate measurement instructions and the $b_i$ indicate the server's responses for these instructions (if he follows the protocol, these responses would represent the outcomes of the measurements that the client instructed him to perform).}
\label{fig:ubqc}
\end{figure}
The first blind delegation protocol was devised by Childs in \cite{Childs:2005:SAQ:2011670.2011674}, and since then these protocols have been improved and extended in various works \cite{PhysRevA.87.050301, PhysRevLett.111.230501, 1607.00758, fk, PhysRevLett.111.230502, Morimae:2015:GSB:2871393.2871395, 1606.06931, 1703.03754}. UBQC and related protocols require the client and the server to exchange only one quantum message, while the rest of the communication is classical \cite{bfk, abe, doi:10.1139/cjp-2015-0030}. This quantum message (which is sent by the client to the server) consists of a tensor product of single-qubit states. As such, the only quantum capability the client needs is the ability to prepare single-qubit states.

\noindent In this paper, we explore two questions pertaining to blind delegation protocols:
\begin{itemize}
\item[\textbf{(1)}] Is there a scheme for blind quantum computing that is
information-theoretically secure, and that requires only classical
communication between client and server?
\item[\textbf{(2)}] For schemes in which the client and the server are allowed one round of quantum communication, which functions can the client delegate to the server while maintaining information-theoretic security? In particular, could the client delegate the evaluation of $\cNP$-hard functions?
\end{itemize}
We provide some indication, based on complexity-theoretic conjectures, that the answer to the first question is no. In other words, provided these complexity-theoretic conjectures hold, a classical client running in polynomial time and communicating only classically with a server cannot delegate arbitrary polynomial-time quantum computations to that server while keeping its input hidden in an information-theoretic sense. Importantly, our result does not contradict recent results on quantum fully homomorphic encryption with a classical client \cite{mahadev2017classical, cryptoeprint:2018:338}, since those schemes are based on cryptographic assumptions: \emph{we are interested only in information-theoretic security}.

In answer to the second question, we provide a complexity-theoretic upper bound on the types of functions that can be evaluated by UBQC-type protocols (i.e. protocols in which the client can send one quantum message to the server\footnote{In fact our result concerns protocols in which the client and the server start with \emph{one round} of quantum communication, followed by polynomially-many rounds of classical communication. In other words, not only is there one quantum message from the client to the server, but the server is also allowed to respond with a quantum message.}). We show that, under plausible complexity-theoretic assumptions, this upper bound prevents the client from delegating $\cNP$-hard functions to the server. Thus, allowing for quantum communication between the client and the server expands the set of functions that the client can delegate to the server to include $\cBQP$, but not enough so as to include $\cNP$ as well.

\subsection{Main results} \label{subsect:main}
We phrase our results formally using the concept of a \emph{generalised encryption scheme} (GES) introduced by Abadi, Feigenbaum and Killian \cite{afk}, which is defined in Section~\ref{sect:crypto}.
Roughly speaking, a GES is a protocol between a probabilistic polynomial-time classical client and a computationally unbounded server for computing on encrypted data. The client sends the server a description of some function\footnote{Unless otherwise specified, we restrict our attention to decision problems. This is why the function $f$ has the codomain $\{0, 1\}$.} $f : \{0, 1\}^n \rightarrow \{0, 1\}$.
Using some polynomial-time algorithm denoted $E$, the client encrypts its input $x$, and sends $E(x)$ to the server.
The server and the client then interact for a number of rounds which is polynomial in the length of $x$.
Finally, using a polynomial-time decryption algorithm denoted $D$, the client decrypts the server's responses and obtains $f(x)$ with probability $1/2 + 1/\operatorname*{poly}(n)$.
Importantly throughout the protocol, the server learns no more than the length of $x$. Because the server is computationally unbounded, the scheme requires information-theoretic security. Abadi et al.\ gave a complexity theoretic upper bound on the types of functions that admit such a scheme.
They showed that any function $f$ that the client could delegate in a GES must be contained in the class $\gesdec$. We give a simplified proof of this result in Section~\ref{sect:crypto}.

The GES framework allows us to restate the questions we address in this paper as follows:
\begin{itemize}
\item[\textbf{(1)}] Can we design a GES for delegating $\cBQP$ functions? Note that, by the Abadi et al.\ result, this is the same as asking whether $\cBQP \subset \gesdec$. We will consider two variants on the GES framework: one which allows the client to delegate sampling problems to the server, and one in which the total communication between client and server is bounded by $\mathsf{O(n^d)}$, for some constant $d > 0$. For the former, we show that having such a scheme for quantum sampling problems, like $\textsc{BosonSampling}$, implies that circuits exist which can compute the permanent of a matrix more efficiently than we believe is possible. For the latter, having a GES with bounded communication for polynomial-time quantum computation implies that $\cBQP \subset \cNPnd$, and we argue that this containment is unlikely by providing an oracle separation between these classes.
\item[\textbf{(2)}] If we change the GES framework to allow one round of quantum communication between the client and the server, what functions can the client delegate to the server? We answer this question by ``quantising'' the Abadi et al.\ result and showing that such functions would be contained in $\qgesdec$ (a quantum analogue of $\gesdec$). We also show that $\qgesdec$ is unlikely to contain $\cNP$-hard functions.  
\end{itemize}

\subsubsection{Generalised encryption scheme for \textsf{BQP} decision problems}
As we have mentioned, for the case of decision problems, Abadi et al.\ showed that the class of problems that a client can delegate to a server using the GES framework is contained in $\gesdec$. They also observed that if $\cNP$-hard functions could be delegated by the client using a GES, then $\cNP \subset \gesdec$, and, in particular, $\cNP \subset \ccoNPpoly$. Yap showed that having such a containment leads to a collapse of the polynomial hierarchy at the third level \cite{yap}. In other words, it seems unlikely that $\cNP$-hard problems would admit a GES.

What about $\cBQP$-hard functions? The Abadi et al.\ result implies that having a GES for $\cBQP$-hard functions leads to $\cBQP \subset \gesdec$.
While we would like to argue, similarly, that such a containment leads to a collapse of the polynomial hierarchy, even $\cBQP = \cP$ isn't known to lead to such a collapse. We instead consider a modified GES in which the total communication between the client and the server is upper bounded by a polynomial of \emph{fixed} degree, $d > 0$, in the size of the input\footnote{Note that we impose no such restriction on the running time of the client.}. In that case, it can be shown that $\cBQP \subset \cNPnd \cap \ccoNPnd$ (see the proof of Theorem~\ref{thm:afk}). We argue that this containment is unlikely based on the following result, which we prove in Section~\ref{sect:proofs1}:
\begin{theorem} \label{thm:noGESBQP}
For each $d \in \mathbb{N}$, there exists an oracle $O_d$ such that $\cBQP^{O_d}$ is not contained in $(\cNPnd)^{O_d}$.
\end{theorem}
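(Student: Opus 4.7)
The plan is to combine the Aaronson--Raz--Tal oracle separation $\cBQP^{O} \not\subset \cPH^{O}$ (via the Forrelation problem) with a stagewise diagonalization over $O(n^d)$ bits of advice. First I would fix $N(n) := n^{d+2}$ and have the oracle $O_d$ encode, at length $N(n)$, a pair of Boolean functions $(f_n, g_n) : \{0,1\}^{N(n)} \to \{\pm 1\}$ drawn either from Aaronson's forrelated distribution $\mathcal{D}_1$ or the uniform distribution $\mathcal{D}_0$. The target language is $L_d := \{1^n : (f_n, g_n) \text{ is forrelated}\}$, which lies in $\cBQP^{O_d}$ for any choice of oracle via Aaronson's $O(1)$-query Forrelation algorithm, running in time $\operatorname{poly}(N(n)) = \operatorname{poly}(n)$ since $d$ is fixed.

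Next I would enumerate the polynomial-time $\cMA$ verifiers $(M_k)_{k \geq 1}$, where $M_k$ runs in time $n^k$ with advice of length $\leq k n^d$. For any fixed advice $a \in \{0,1\}^{k n^d}$, the predicate $M_k^{O_d}(1^n, a) = 1$ is $\cSigma_2^{O_d}$-decidable (by Sipser--G\'acs--Lautemann, relativized), so as a function of the $2^{N(n)}$ oracle bits at length $N(n)$ it is a constant-depth $\mathsf{AC}^0$ circuit of size $2^{\operatorname{poly}(n)}$. The Raz--Tal theorem then bounds its distinguishing advantage on $\mathcal{D}_1$ versus $\mathcal{D}_0$ by $\operatorname{poly}(n) \cdot 2^{-N(n)/2}$. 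At stage $k$, I would pick $n_k > n_{k-1}$ large enough that $2^{k n_k^d} \cdot \operatorname{poly}(n_k) \cdot 2^{-N(n_k)/2} < 1/2$; with $N(n) = n^{d+2}$ this is satisfied as soon as $n_k$ exceeds some explicit polynomial in $k$. A union bound over the $2^{k n_k^d}$ advice strings then guarantees that, with positive probability over $(f_{n_k}, g_{n_k}) \sim \tfrac12(\mathcal{D}_0 + \mathcal{D}_1)$, every advice string $a$ simultaneously makes $M_k^{O_d}(1^{n_k}, a)$ err on $L_d(1^{n_k})$. I fix one such realisation to define $O_d$ at length $N(n_k)$; oracle values at lengths outside $\{N(n_k) : k \geq 1\}$ may be set arbitrarily, say identically zero.

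The hard part will be verifying that the Raz--Tal advantage bound is small enough to absorb the union bound over $2^{O(n_k^d)}$ advice strings. This is precisely why $N(n) = n^{d+2}$ is chosen to dominate the advice budget $O(n^d)$: the exponent in the $2^{-N(n)/2}$ distinguishing bound is polynomially larger than the exponent in the $2^{O(n^d)}$ counting factor, and this polynomial gap is what makes the diagonalization go through. A secondary subtlety is that Merlin's witness in $\cMA$ could in principle tailor itself to the oracle, but this is handled by the $\Sigma_2$ collapse of $\cMA$: the witness folds into the outer existential quantifier of the $\mathsf{AC}^0$ circuit already analyzed by Raz--Tal, so the $\cMA$ machine never exceeds the circuit complexity bound, and the final diagonalization produces the desired $O_d$.
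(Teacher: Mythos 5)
There is a fatal gap. Your language $L_d=\{1^n : (f_n,g_n)\ \text{is forrelated}\}$ has at most one instance per input length, so relative to \emph{any} oracle it lies in $\mathsf{MA/1}\subseteq\cNPnd$: the advice for length $n$ is simply the bit $L_d(1^n)$. This is exactly the pitfall the paper flags when explaining why Simon's original length-indexed oracle cannot be used and why the hard instances must be \emph{input-dependent}, with exponentially many independent instances at each length so that $O(n^d)$ advice bits cannot simply encode all the answers. Any single-instance-per-length construction, Forrelation-based or otherwise, is dead on arrival against an advice class.

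The union bound also fails on its own terms. Raz--Tal bounds the \emph{distinguishing advantage} $\left\vert p_1-p_0\right\vert$ of the $\mathsf{AC^0}$ circuit, where $p_b$ is its acceptance probability under $\mathcal{D}_b$; the probability that a fixed advice string decides a random instance drawn from $\tfrac12(\mathcal{D}_0+\mathcal{D}_1)$ correctly is $\tfrac12+\tfrac12(p_1-p_0)\le \tfrac12+\tfrac{\delta}{2}$, not $\le\delta$. A constant circuit already succeeds with probability exactly $\tfrac12$, so summing $\Pr[a\ \text{succeeds}]$ over $2^{O(n^d)}$ advice strings gives a bound far exceeding $1$ --- and indeed one of the two constant advice strings always succeeds. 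To salvage the approach you would need $2^n$ independent Forrelation instances per length together with an advice-elimination argument; the paper does precisely this with Simon functions $f_i:\{0,1\}^{n^D}\to\{0,1\}^{n^D}$ indexed by $i\in\{0,1\}^n$ with $D>d$, eliminating at least half of the surviving advice strings at each of the $2^n$ indices until all $2^{O(n^d)}$ of them are ruled out. Your reduction of the $\cMA$ case to the nondeterministic case and the choice of domain size dominating the advice budget are in the right spirit, but they do not rescue the core counting step.
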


Essentially, the theorem shows that relative to an oracle $O_d$, there are problems that can be solved by a polynomial-time quantum algorithm, but which a classical client cannot delegate to a server in a GES with bounded communication. Since the oracle is parameterised by $d$, we are in fact defining a family of oracles. The specific problem on which the oracle $O_d$ is based is a version of \emph{Simon's problem} \cite{simon}.
Simon's problem is the following: for an input of size $n$, and given oracle access to a function $g : \{0, 1\}^n \rightarrow \{0, 1\}^n$ that is guaranteed to be either an injective function, or a $2$-to-$1$ and periodic function\footnote{In other words, there exists a period $s \in \{0, 1\}^n$, $s \neq 0^n$, such that for all $x, y \in \{0, 1\}^n$, $x \neq y$, it is the case that $g(x) = g(y)$ iff. $x = s \oplus y$.}, the task is to decide which is the case.
Simon provided a polynomial-time quantum algorithm for solving this problem, thus showing that it belongs to $\cBQP$ (relative to the function oracle).
For the case in which one should accept when the function is $2$-to-$1$, the problem can be shown to be outside of $\cMA$ (relative to the function oracle). As such, Simon's problem provides an oracle separation between $\cBQP$ and $\cMA$.

In Simon's original construction, the oracle function is the same for all inputs of size $n$. Note that,
this version of the problem can be solved with one bit of advice: for all inputs of size $n$, the advice bit simply specifies whether the function is $1$-to-$1$ or $2$-to-$1$ and periodic. Therefore such a setup would not be useful in our case.
For this reason, in our proof of Theorem~\ref{thm:noGESBQP}, the function that the oracle provides access to is input-dependent.
The problem we define, relative to this oracle, is again to decide whether the function is $1$-to-$1$ or the function is $2$-to-$1$ and periodic.
However, we can show that, by considering a sufficiently large domain for these functions --- in other words, by letting $g: \{0, 1\}^{n^D} \rightarrow \{0, 1\}^{n^D}$ for some $D > d$ --- the problem is not contained in $(\cNPnd)^{O_d}$, but is nevertheless contained in $\cBQP$. The proof uses a diagonalisation argument and can be found in Section~\ref{sect:proofs1}.

Unfortunately, the same oracle cannot be used to separate $\cBQP$ from $\cNPpoly$. This is because $D$ is a function of $d$; to prove a separation with respect to $\cNPpoly$, where the length of the advice string can be any polynomial, we would have to find an oracle that works \emph{for all} possible values of $d$. 
It would be interesting to see whether the oracle that Raz and Tal \cite{raztal} recently used to prove a separation between $\cBQP$ and $\cPH$ could also be used in order to separate $\cBQP$ from $\cNPpoly$, or even from $\mathsf{PH/poly}$.
We leave this as an open problem.

One can argue that oracle results do not constitute compelling evidence on the relationships between complexity classes. For example, it has been known for a while that there exist oracles $O_1, O_2$ such that $\cP^{O_1} \neq \cNP^{O_1}$ but $\cP^{O_2} = \cNP^{O_2}$, and that, while $\cIP = \cPSPACE$, there is an oracle such that $\cIP^O \neq \cPSPACE^O$. Nonetheless, oracles allow us to study the query complexity of problems in different models of computation.
In fact, there are situations in practice where computer programs are restricted to making black-box calls to functions in order to determine their properties \cite{blackbox}. Apart from this, oracle results have also inspired a number of important developments in algorithms and complexity theory\footnote{A notable example is the fact that Simon's oracle separation between $\cBPP$ and $\cBQP$ led to Shor's algorithm for factoring and computing the discrete logarithm \cite{shor}}.
For more arguments concerning the usefulness of oracle results, see Section 1.3 of \cite{aaronsonph}.

\subsubsection{Generalised encryption scheme for \textsf{BQP} sampling problems}
We consider what would happen if we have a generalised encryption scheme which allowed a client to delegate a sampling problem, such as \textsc{BosonSampling}, to the server.
\textsc{BosonSampling}, defined by Aaronson and Arkhipov in \cite{bosonsampling}, is essentially the problem of simulating the statistics of photons (bosons) passing through a linear optics network. One starts with a configuration of identical photons in known locations (referred to as \emph{modes}). The photons then pass through the linear optics network, which consists of optical elements (beamsplitters and phase shifters). Finally, one performs a measurement to determine the new locations of the photons in the \emph{output modes} of the system. The reason this is referred to as a sampling problem is because we have a probability distribution over the different configurations of photons in the output modes. In \emph{exact} \textsc{BosonSampling}, which is the problem we consider, the task is to produce a sample from that probability distribution.
Aaronson and Arkhipov showed that the probability of observing a particular configuration of photons is proportional to the squared permanent of a matrix that can be obtained efficiently from the description of the optical network. They also showed that no polynomial-time probabilistic algorithm can sample from this distribution, unless the polynomial hierarchy collapses at the third level \cite{bosonsampling}. As such, while a quantum computer could simulate the optical network and sample from the target distribution in polynomial time, it seems unlikely that classical computers could do the same.

In a GES for exact \textsc{BosonSampling}, the client's input would be a description of a linear optics network\footnote{In principle, one could also specify the configuration of the photons in the input modes as part of the client's input. Equivalently, however, one can always initialise the input modes to some fixed initial state, and produce whichever starting state is in fact desired by altering the linear optics network.}. The client would like to delegate to the server the task of sampling from the \textsc{BosonSampling} distribution associated with this network, while keeping the description of the network hidden from the server. In other words, upon interacting with the server and decrypting its responses, the client should obtain a sample from the \textsc{BosonSampling} distribution. At the same time, the server learns at most an upper bound on the size of the network. We show the following:

\begin{theorem} \label{thm:noGESsampBQP}
If exact \textsc{BosonSampling} admits a GES, then for any matrix $X \in \{-1, 0, 1 \}^{n \times n}$, there exist circuits of size $2^{n - \mathsf{\Omega} \left(\frac{n}{\log n} \right)}$, making polynomially-sized queries to an $\cNP^\cNP$ oracle, for computing the permanent of $X$.
\end{theorem}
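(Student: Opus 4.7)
I would prove this theorem by reducing the hypothesized GES for exact $\textsc{BosonSampling}$ to an unexpectedly efficient circuit for the permanent, in three stages: (1) apply the sampling analog of Abadi--Feigenbaum--Kilian; (2) apply Stockmeyer's approximate counting; (3) apply the Aaronson--Arkhipov embedding together with polynomial interpolation. This essentially follows the template of Aaronson--Arkhipov's classical-simulation hardness results, but in the GES (rather than classical-$\cBPP$-sampler) setting.

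In Stage~1, I mirror the decision-problem AFK argument from Section~\ref{sect:crypto}. The key insight is that, since the server in a GES learns only the input length, its contribution in any transcript can be replaced by a Merlin-style witness that the polynomial-time client verifies; a symmetric coMA-style verification handles ``bad'' transcripts. Applied to the sampling setting, this places exact $\textsc{BosonSampling}$ inside $\gessamp$.

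In Stage~2, I invoke Stockmeyer's approximate counting on the $\cSampMArpoly$ sampler: for any fixed output $z$, $\Pr[z]$ admits a multiplicative $1 \pm 1/\operatorname{poly}(n)$ approximation in $\cBPP^{\cNP}$ using the sampler as a subroutine; folding the $\cMA$ verification into one more $\cNP$ (using $\cMA \subseteq \cNP/\operatorname{poly}$) and derandomizing the $\cBPP$ gives non-uniform polynomial-size circuits with $\cNP^{\cNP}$ oracle gates that approximate any $\textsc{BosonSampling}$ probability to the desired precision. In Stage~3, for a target matrix $X \in \{-1,0,1\}^{n\times n}$, the Aaronson--Arkhipov reduction embeds $X$ into a linear-optical network with $m = \operatorname{poly}(n)$ modes and $n$ photons so that a specific Fock-basis probability equals $|\mathrm{per}(X)|^2/m^n$. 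Combined with Stage~2, we obtain a multiplicative approximation of $|\mathrm{per}(X)|^2$; polynomial interpolation of $q(t) = \mathrm{per}(X + tE)^2$ at $O(n)$ integer points together with $\cNP$-oracle sign resolution and rounding then yields the exact integer $\mathrm{per}(X)$, using that $|\mathrm{per}(X)| \le n!$ guarantees polynomial Stockmeyer precision suffices.

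The main obstacle is the quantitative circuit-size bookkeeping required to land exactly at the stated $2^{n-\Omega(n/\log n)}$ bound. While Stages~1--3 already suffice to place $\mathrm{per}$ in polynomial size with $\cNP^{\cNP}$ oracle gates, the interesting subtleties are: handling the only-inverse-polynomial success probability of the GES via amplification that respects the communication and advice bounds; converting $\mathrm{rpoly}$ randomized advice into deterministic non-uniform advice of the right length; and tracking the precision of Stockmeyer across the $O(n)$ interpolation points so that the total circuit size beats the trivial $O(n\cdot 2^n)$ implementation of Ryser's formula by the specified $2^{\Omega(n/\log n)}$ factor.
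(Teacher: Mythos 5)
There is a genuine gap in Stage~3, and it is the crux of the whole theorem. You propose to obtain a multiplicative $1\pm 1/\operatorname{poly}(n)$ approximation of $|Per(X)|^2$ via Stockmeyer and then recover the exact integer $Per(X)$ by ``polynomial interpolation \ldots together with $\cNP$-oracle sign resolution and rounding,'' asserting that $|Per(X)|\le n!$ guarantees polynomial precision suffices. It does not: since $Per(X)^2$ can be as large as $(n!)^2$, a multiplicative error of $1/\operatorname{poly}(n)$ corresponds to an \emph{additive} error of order $(n!)^2/\operatorname{poly}(n)$, which vastly exceeds $1$; rounding cannot identify the integer, and interpolating the degree-$2n$ polynomial $q(t)=Per(X+tE)^2$ from points known only to such accuracy only amplifies the error. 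To pin down the exact value multiplicatively you would need precision $1+O(1/(n!)^2)$, which Stockmeyer cannot deliver in polynomial time. A useful sanity check: if your Stage~3 worked, you would have placed the exact permanent in \emph{polynomial-size} circuits with an $\cNP^{\cNP}$ oracle (as you note yourself), i.e.\ a non-uniform collapse of $\csP$ into the polynomial hierarchy --- far stronger than the stated $2^{n-\Omega(n/\log n)}$ bound. That the theorem claims only a subexponential circuit is the signal that exactness cannot be bought so cheaply.

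The paper closes this gap with two ingredients your plan omits. First (Theorem~\ref{thm:gadget}), the multiplicative-approximation oracle is used \emph{only} to decide whether certain permanents are exactly zero (a multiplicative approximation certifies $Per(W)=0$ exactly). Gadget matrices $W$ are built from $X$ and candidate matrices $Z$ drawn from an advice list of $n^{\mathsf{O}(n)}$ tuples covering all possible ratios $Per(Z)/Per(Z^{n+2,n+2})$, sorted so that a binary search locates the $Z$ with $Per(W)=0$; then $Per(X)=-Per(X^{1,1})\cdot Per(Z)/Per(Z^{n+2,n+2})$ and one recurses on the minor. This gives \emph{exact} computation, but at the cost of $n^{\mathsf{O}(n)}$ advice --- which is why it cannot be applied directly to the full $n\times n$ matrix. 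Second, Bj{\"o}rklund's decomposition expresses $Per(X)$ as a linear combination of $\operatorname{poly}(n)\,2^{n-k}$ permanents of $k\times k$ matrices; taking $k=\Theta(n/\log n)$ makes the shared $k^{\mathsf{O}(k)}$ advice affordable and yields exactly the $2^{n-\Omega(n/\log n)}$ circuit size. Your Stages~1 and~2 (the sampling analogue of Abadi--Feigenbaum--Kilian giving a $\mathsf{BPP^{NP}/rpoly}$ sampler, plus Stockmeyer) do match the paper, but without the exact-from-approximate gadget and the Bj{\"o}rklund decomposition the argument does not reach the conclusion.
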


Computing the permanent of a matrix is a problem known to be $\csP$-hard. By Toda's theorem, this means that if computing the permanent were possible at any level of the polynomial hierarchy, the hierarchy would collapse at that level \cite{toda}. Moreover, the best known algorithm for computing the permanent, by Bj{\"o}rklund, has a run-time of $2^{\mathsf{n - \Omega\left(\sqrt{n/log(n)}\right)}}$ \cite{andreas}. Prior to that, the leading algorithm for computing the permanent was Ryser's algorithm, developed over $50$ years ago, which requires $\mathsf{O(n2^{n})}$ arithmetic operations \cite{ryser1963combinatorial}. We conjecture that the circuits of Theorem~\ref{thm:noGESsampBQP} do not exist and, thus, that there can be no GES for \textsc{BosonSampling}. The proof of this result can be found in Section~\ref{sect:proofs2}.

\subsubsection{Quantum generalised encryption scheme}
While having a GES for delegating $\cBQP$ computations seems unlikely, we know that giving the client some minimal quantum capabilities removes this limitation: schemes such as UBQC exist which allow for the information-theoretically secure blind delegation of quantum computations.
In the spirit of the Abadi et al.\ result, it is natural to consider \emph{quantum generalised encryption schemes} (or QGES), in which the client is no longer classical, and investigate the complexity-theoretic upper bounds on functions that admit such a protocol. For the QGES, we are still assuming information-theoretic security and that the encryption scheme leaks at most the size of the input. However, unlike the GES, the client is now assumed to be a quantum computer performing polynomial-time computations\footnote{It should be noted that our upper bound on the power of QGES schemes also holds if the client is restricted to $\cBPP$ computations (as is the case in UBQC), since $\cBPP \subseteq \cBQP$.}. Additionally, the client and the server perform one round of quantum communication at the beginning of the protocol. The rest of the communication is classical.

We impose one other restriction on the QGES, known as \emph{offline-ness}. Roughly speaking, an offline protocol is one in which the client does not need to commit to any particular input (of a given size), after having sent the quantum message to the server. The quantum message only depends on the size of the input. We note that offline-ness is a property which UBQC and all other currently known blind quantum computing protocols share.
From a practical perspective, this presents the client with the option of sending the first quantum message to the server and deciding at a later time on which input the server should perform the computation. One could imagine, for instance, that the client and the server have access to a quantum channel for a limited amount of time. In practice, such a situation can occur if the communication between the parties is mediated by a satellite, as is the case with satellite-based quantum-key distribution \cite{liao2017satellite}. In this case, the satellite is in the line of sight of the two parties for only a few minutes at a time. Our result is the following:

\begin{theorem} \label{thm:qges}
The class of functions that a client can delegate to a server in an offline QGES is contained in $\qgesdec$.
\end{theorem}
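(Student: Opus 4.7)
The plan is to quantize the classical Abadi-Feigenbaum-Killian style argument that any function delegated via a GES lies in $\gesdec$; the novelty of the QGES setting is one round of quantum communication at the beginning, and the \emph{offline} property is precisely what allows us to absorb this round into a piece of quantum advice. The crucial observation is that, by offline-ness, the client's initial quantum preparation and the honest server's response depend only on $n = |x|$. Consequently, after the single round of quantum communication, the client holds a quantum state $\rho_n$ that is a function of $n$ alone, and the remaining interaction is a polynomially long classical protocol between the $\cBQP$-bounded client (now equipped with $\rho_n$ and $x$) and the server.

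Taking polynomially many copies of $\rho_n$ as the quantum advice, the containment in $\cQCMAqpoly$ is obtained by replaying the classical argument. In more detail, Merlin supplies a purported classical transcript $w$ of the post-quantum-round exchange; Arthur, a $\cBQP$ verifier given $x$, $\rho_n$, and $w$, simulates the client's classical-phase computation on $\rho_n$, supplying the server messages from $w$ at the relevant rounds and comparing the simulated client's outgoing messages (obtained by performing and measuring the client's quantum operations) against $w$. Arthur accepts iff all consistency checks pass and the simulated client outputs $1$. Completeness reduces to the correctness of the QGES on input $x$ with $f(x) = 1$: in an honest execution there exists a transcript and corresponding measurement outcomes for which the client outputs $1$, and those serve as Merlin's witness (with standard amplification across the polynomially many copies of $\rho_n$ boosting the acceptance probability past the $2/3$ threshold). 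The $\ccoQCMAqpoly$ containment is obtained symmetrically by having Arthur check that the simulated client outputs $0$, and intersecting the two containments yields the conclusion $f \in \qgesdec$.

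The main technical obstacle is soundness: when $f(x) = 0$, no witness $w$ should drive Arthur to acceptance with probability $> 1/3$. This is precisely where the information-theoretic security of the QGES enters. The strategy, following the classical template, is to argue that any accepting witness $w$ implicitly defines a (possibly malicious) server strategy, and that, by security, the induced distribution on transcripts is negligibly close to the honest transcript distribution, which is itself essentially $x$-independent. Correctness of the QGES on the honest distribution then forces the client to decode $f(x) = 0$ with high probability, and Arthur rejects. I expect the subtle points to be the interplay between Merlin's classical witness and the quantum advice, in particular consuming fresh copies of $\rho_n$ across rounds and subroutines so that measurement disturbance does not invalidate later checks, and ensuring that amplification of the acceptance gap can be carried out without damaging the security reduction. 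Once these are handled, the proof mirrors the Abadi-Feigenbaum-Killian argument step by step.
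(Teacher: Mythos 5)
Your overall architecture (quantum advice for the quantum round, an AFK-style transcript argument for the classical rounds, and symmetry for the $\ccoQCMAqpoly$ half) matches the paper's, but the handling of the quantum round contains a genuine gap, and it is exactly where the real content of the theorem lives. The claim that ``after the single round of quantum communication, the client holds a quantum state $\rho_n$ that is a function of $n$ alone'' is false. What is $x$-independent is only the \emph{server's} reduced state $\rho_x = Tr_K(\ket{\psi_x}\bra{\psi_x})$ (and hence the reduced state of the encryption-plus-response registers after the server acts). The client's side is the key register, which is correlated with the encryption and therefore with the server's response, and which manifestly depends on $x$: decryption $D(\overline{s}_m,k,x)$ only works when the response is the one the server produced \emph{for the encryption under that particular key $k$}. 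Since advice may depend only on $n$, the best you can put into the advice is a purification of the $x$-independent marginal --- concretely, the joint key/encryption/response state $\omega_{x'}$ for some fixed reference input $x'$ of length $n$. Your Arthur then has no way to pair this with a key valid for the actual input $x$: sampling a fresh key for $x$ and feeding it to $D$ together with a response correlated with $x'$-keys gives no correctness guarantee, so completeness already fails, and your soundness sketch does not repair this.

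The missing idea is the one the offline condition exists to supply. By the purification principle, since the traced-out states agree, there is a unitary $V_K$ acting \emph{only on the key register} that maps the $x'$-purification to the $x$-purification; offline-ness guarantees $V_K$ has a polynomial-size circuit; and that circuit description is what Merlin must provide as the classical witness. Arthur verifies it by a \textsf{SWAP} test between $V_K^{\dagger}\ket{\kappa_{x'}}$ and $\ket{\kappa_x}$ (both key superpositions are efficiently preparable from the key-generation algorithm), then applies $V_K^{\dagger}$ to the advice state $\omega_{x'}$ to obtain $\omega_x$, measures the response register, and decrypts. A classical transcript of the post-quantum-round exchange, which is what your Merlin supplies, cannot play this role: it carries no information that lets Arthur re-key the advice state, and it is not the object whose verifiability the offline assumption buys you. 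Your treatment of the subsequent classical rounds (transcript as additional randomised advice, key as witness, exactly as in the classical GES bound) is fine once the quantum round is handled correctly.
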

\noindent Note that the class $\qgesdec$ can be seen as a quantum analogue of the class $\gesdec$ which we encounter in the GES case. We therefore view Theorem~\ref{thm:qges} as a ``quantisation'' of the Abadi et al.\ bound on the power of generalised encryption schemes.

Again, in the spirit of the Abadi et al.\ result, one can ask whether $\cNP$-complete functions are contained in $\qgesdec$. In other words: does giving quantum capabilities to the client increase the class of functions that it can securely delegate so that this class contains $\cNP$?
We give an indication that the answer is no:
\begin{theorem} \label{thm:QGESNP}
$\cNP \subset \qgesdec$ implies $\mathsf{coNP^{NP^{NP}}} \subseteq \cNP^{\cNP^{\cPromiseQMA}}$.
\end{theorem}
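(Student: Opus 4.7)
The plan is to adapt Yap's classical proof that $\cNP \subset {\sf coNP/poly}$ implies $\cPi_3 \subseteq \cSigma_3$ to the quantum setting, with the PromiseQMA oracle playing the role that Yap's existential guess of a polynomial-size classical advice string plays in the classical argument.

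From the hypothesis $\cNP \subset \qgesdec$ I extract in particular $\cNP \subset \ccoQCMAqpoly$, equivalently ${\sf coNP} \subset \cQCMAqpoly$, giving a polynomial-size quantum-advice family $\{\ket{\psi_n}\}$ and a polynomial-time QCMA verifier $V$ such that $F \in \mathrm{UNSAT}$ iff there exists a classical witness $w$ with $\Pr[V(F,w,\ket{\psi_n}) \text{ accepts}] \geq 2/3$, while on SAT-instances no classical $w$ makes $V$ accept with probability above $1/3$. For $L \in \mathsf{coNP^{NP^{NP}}}$ write $x \in L \iff \forall y_a\, \exists y_b\, \forall y_c\, \phi(x,y_a,y_b,y_c)$ for some polynomial-time predicate $\phi$. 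The innermost block $\forall y_c\phi$ is coNP in $(x,y_a,y_b)$ and Karp-reduces to the UNSAT-ness of an efficiently computable formula $F(x,y_a,y_b)$. Substituting the $\cQCMAqpoly$ protocol, and writing $N$ for the relevant input length, the true advice $\ket{\psi_N}$ gives
\[
x \in L \iff \forall y_a \, \exists (y_b,w) : V\bigl(F(x,y_a,y_b),\, w,\, \ket{\psi_N}\bigr) \text{ accepts}.
\]

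The main obstacle is to encode this equivalence inside $\cNP^{\cNP^{\cPromiseQMA}}$, in particular to eliminate the unknown quantum advice $\ket{\psi_N}$ while respecting the $\exists \forall$ quantifier prefix. In Yap's proof the outermost step existentially guesses the advice string and adds a coNP soundness check certifying that the guess never causes a SAT-instance to be misclassified as UNSAT; merging the universal blocks yields the $\exists\forall\exists$ form of $\cSigma_3$. I would emulate this quantumly by using a single PromiseQMA query at each branch, whose QMA witness is a candidate $\ket{\chi}$ together with classical data $(y_b,w)$, and whose verifier simultaneously checks (i) that $V(F(x,y_a,y_b),w,\ket{\chi})$ accepts, and (ii) a soundness test on $\ket{\chi}$ that prevents a malicious witness from certifying a false UNSAT. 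The soundness test exploits the other half $\cNP \subset \cQCMAqpoly$ of $\qgesdec$: the universal quantifier in the outer $\exists y_1\, \forall y_2$ prefix supplies classical satisfying assignments for candidate SAT-instances of $F$, and the PromiseQMA verifier is required to reject any $\ket{\chi}$ that causes $V$ to accept on such a certified SAT-instance. The hard part will be calibrating the combined verifier so that the honest choice $\ket{\chi} = \ket{\psi_N}$ gives acceptance probability at least $2/3$ when $x \in L$, while every $\ket{\chi}$ is caught either by the acceptance or by the soundness check when $x \notin L$, giving acceptance probability at most $1/3$; this is what turns the construction into a genuine $\cPromiseQMA$ promise problem amplifiable by Marriott--Watrous. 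Once this is achieved, the equivalence becomes $x \in L \iff \exists y_1\, \forall y_2\, Q(x,y_1,y_2)$ for a PromiseQMA predicate $Q$, placing $L$ in $\cNP^{\cNP^{\cPromiseQMA}}$, as required.
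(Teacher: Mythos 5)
There is a genuine gap, and it is exactly the one the paper identifies as the central obstacle. Your construction guesses the candidate quantum advice state $\ket{\chi}$ \emph{inside} the $\cPromiseQMA$ query, i.e.\ at the innermost level of the oracle hierarchy. But each branch of the outer $\exists y_1 \forall y_2$ computation corresponds to a \emph{separate} oracle call, each with its own independently chosen witness, and a quantum state cannot be passed from an outer quantifier into an oracle call. Consequently nothing forces the $\ket{\chi}$ used in one query to agree with the $\ket{\chi}$ used in another: when $x \notin L$, a cheating prover can supply, for each fixed $y_2$, a state tailored to pass that branch's local soundness test (which only inspects the particular SAT instances certified by that $y_2$) while still causing $V$ to accept the satisfiable formula $F(x,y_a,y_b)$. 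Worse, the quantifier ordering works against you: the satisfying assignment needed to catch the cheat depends on $y_b$, which the prover selects inside the query \emph{after} $y_2$ is fixed, so the universal quantifier cannot supply it. A per-query soundness check therefore has no teeth; soundness only holds if a \emph{single} advice object is constrained to behave consistently across all exponentially many branches simultaneously.

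The paper's proof resolves this with a tool your argument never invokes: the Aaronson--Drucker theorem $\cQCMAqpoly = \mathsf{YQ\cdot QCMA/poly}$, which replaces the trusted quantum advice by a trusted \emph{classical} advice string $a$ plus an untrusted quantum state verifiable against $a$. The classical string $a$ is guessed at the bottom $\cNP$ level and hence shared by every subsequent oracle call, and the majority-certificates structure guarantees (the paper's part (1)) that any two states $\rho_1,\rho_2$ passing verification against $a$ induce nearly identical acceptance probabilities on every input. This uniqueness is what makes it harmless that each $\cPromiseQMA$ query re-guesses its own copy of the quantum advice, and it is what ties the completeness check, the soundness check, and the truth-of-$S$ check to one and the same $\mathsf{YQ\cdot QCMA/poly}$ procedure. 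Without this (or some substitute mechanism for pinning down the quantum state with classical data that can traverse the quantifiers), the calibration step you flag as ``the hard part'' cannot be carried out, and the construction does not yield a sound $\cNP^{\cNP^{\cPromiseQMA}}$ predicate.
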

Note that if $\mathsf{PromiseQMA}$ in the above expression were replaced with $\cNP$, this would imply a collapse of the polynomial hierarchy at the third level. Our result is as close to a collapse of the polynomial hierarchy as one
can reasonably hope to get, given a quantum hypothesis.
Hence, while a QGES does allow the client to delegate $\cBQP$ computations, it seems to be no more useful than the regular GES for delegating $\cNP$-hard functions.

One could ask why we would even be interested in delegating $\cNP$-hard problems to a quantum computer, given that we do not expect quantum computers to be able to solve such problems in polynomial time \cite{aaronson2004limits}. 
First of all, from a theoretical perspective, note that in the QGES formalism we are not limiting the server to polynomial-time quantum computations, but instead assuming that it has unbounded computational power. Therefore, the way to view this result is not as ``how can a client blindly delegate the evaluation of $\cNP$-hard functions to a quantum computer?'' but as ``can quantum communication help in blindly delegating the evaluation of $\cNP$-hard functions to an unbounded server?''.

From a practical perspective, while we do not believe that quantum computers can solve $\cNP$-complete problems in polynomial time, they could, in principle, solve such problems quadratically faster than classical computers, thanks to Grover's algorithm \cite{grover}. 
Even though the speedup of Grover's algorithm is only
quadratic, from (say) $2^n$ to $2^{n/2}$, our result is only concerned with
the length of the computation performed on the client side, and
therefore applies to Grover's algorithm just as it would to a quantum
algorithm achieving exponential speedup.
In fact, as is mentioned in \cite{pnp}, there are $\cNP$-complete problems for which quantum
computers provide a superpolynomial speedup, at least with respect to the best
known classical algorithms. Our no-go theorem indicates that clients
cannot exploit such speedups by delegating the computation to the server, even
when allowing some quantum communication, if we also want to keep their inputs
hidden in an information-theoretic sense.

Proofs of these results can be found in Section~\ref{sect:qges}.

\subsection{Related work}
As mentioned, the problem of computing on encrypted data was first considered by Rivest, Adleman and Dertouzos \cite{rad}, which then led to the development of \emph{homomorphic encryption} and eventually to fully homomorphic encryption with Gentry's scheme \cite{gentry}. Since then there have been many other FHE protocols relying on more standard cryptographic assumptions and having more practical requirements \cite{fhe1, fhe2, fhe3}.

While FHE is similar to the GES in many respects, there are also significant differences.
For starters, FHE protocols have only one round of interaction between the client and the server, whereas a GES allows for polynomially many rounds.
Additionally, the GES assumes the server is computationally unbounded and hence requires information-theoretic security. In contrast, FHE relies on computational security. More precisely FHE schemes have semantic security against polynomial-time (quantum) algorithms \cite{gentry}.

The problem of \emph{quantum} computing on encrypted data was introduced by Childs \cite{Childs:2005:SAQ:2011670.2011674} and Arrighi and Salvail \cite{doi:10.1142/S0219749906002171}. Further development eventually led to UBQC \cite{bfk,abe} and a scheme of Broadbent \cite{doi:10.1139/cjp-2015-0030}. The latter was followed by the construction of the first schemes for quantum fully homomorphic encryption (QFHE) \cite{broadbent, dulek}.  For a review of blind quantum computing and related protocols see \cite{fitzsimons2016private}.

In the QFHE schemes of \cite{broadbent, dulek}, the server is a polynomial-time quantum computer and the client has some quantum capabilities of its own, although it is not able to perform universal quantum computations. Both the size of the exchanged messages and the number of operations of the client are polynomial in the size of the input. More recently, QFHE schemes have been proposed in which the client is completely classical \cite{mahadev2017classical, cryptoeprint:2018:338}. Similar to FHE, these protocols rely on computational assumptions for security \cite{Alagic2016} and involve one round of back and forth interaction between the client and the server.
QFHE with information-theoretic security (and a computationally unbounded server) has been considered by Yu et al.\ in \cite{qfheimposs}, where it is shown that it is impossible to have such a scheme for arbitrary unitary operations (or even arbitrary reversible classical computations). This result was later reproved by Newmann and Shi using quantum random-access codes \cite{newmannshi}.
In relation to our work, QFHE with information-theoretic security can be viewed as a one-round QGES in which the server responds with a quantum message. The complexity-theoretic upper bound we prove for QGES computable functions would then apply to QFHE as well (provided that in QFHE we only leak the size of the input to the server), since our proof allows a quantum message from the server just as well as a classical message.

The possibility of a classical client delegating a blind computation to a quantum server was considered by Morimae and Koshiba \cite{Morimae}. They showed that such a protocol in which the client leaks no information about its input to the server and there is only one round of interaction leads to $\cBQP \subseteq \cNP$, considered an unlikely containment.
We consider the more general setting of a GES for $\cBQP$ functions, where the number of rounds can be polynomial in the size of the input and we allow the encryption to leak the size of the input.
In fact, the question of whether a GES, as defined in Abadi et al. \cite{afk}, can exist for quantum computations was raised before by Dunjko and Kashefi \cite{vedran}.

\subsection{Future work}
As we remarked in Section~\ref{subsect:main}, in the case of decision problems, the existence of a GES with bounded communication, for polynomial-time quantum computations, leads to the inclusion $\cBQP \subset \cNPnd$. We argue that this containment is unlikely based on the existence of an oracle separating the two complexity classes. A natural extension of this result would be to prove an oracle separation between $\cBQP$ and $\cNPpoly$. This would provide more compelling evidence that a GES for quantum computations cannot exist.

In the case of sampling problems, we showed that a GES for $\textsc{BosonSampling}$ implies the existence of circuits of size $2^{n - \mathsf{\Omega} \left(\frac{n}{\log n} \right)}$, making polynomially-sized queries to an $\cNP^\cNP$ oracle, for computing matrix permanents. Can this result be strengthened so as to provide circuits for computing matrix permanents that would be ruled out by the \emph{strong exponential-time hypothesis}? Alternatively, could one use other quantum sampling problems (such as random circuit sampling or $\sf{IQP}$ problems \cite{harrow2017quantum}) to show that having a GES for such a problem leads to a collapse of the polynomial hierarchy?

We also defined the QGES, which extends the GES by allowing the client to send one quantum message to the server, and gave an upper bound for the set of functions that can be delegated using an offline version of such a scheme. The immediate question one could ask is: what upper bound can we give for an online QGES? 
A related question is: what upper bound can we give for a QGES that allows all of the communication between the client and the server to be quantum? The difficulty in answering both of these questions is that the offline property of the QGES is what allowed us to relate the set of functions that can be delegated to advice classes. Without this property, it seems that a different approach would be needed to provide a complexity-theoretic upper bound.

Another direction that can be explored has to do with the size of the quantum communication between the client and the server. In a QGES in which the client's quantum message is logarithmic or poly-logarithmic in the size of the input (while the classical communication is still polynomial), is it still possible to delegate $\cBQP$ functions to the server?  Of course, this question only makes
sense if we assume that the client is not able to perform $\cBQP$ computations itself.

\section{Preliminaries} \label{sect:background}
\subsection{Quantum information and computation basics} \label{sect:qinfo}
In this subsection we provide a few basic notions regarding quantum information and quantum computation and refer the reader to the appropriate references for a more in depth presentation \cite{nc, watrous2009quantum}.

A quantum state (or a quantum register) is a unit vector in a complex Hilbert space, $\mathcal{H}$. We denote quantum states, using standard Dirac notation, as $\Ket{\psi} \in \mathcal{H}$, called a `ket' state. The dual of this state is denoted $\Bra{\psi}$, called a `bra', and is a member of the dual space $\mathcal{H}^{\perp}$.
We will only be concerned with finite-dimensional Hilbert spaces.
Qubits are states in two-dimensional Hilbert spaces. Traditionally, one fixes an orthonormal basis for such a space, called \emph{computational basis}, and denotes the basis vectors as $\Ket{0}$ and $\Ket{1}$. Gluing together systems to express the states of multiple qubits is achieved through \emph{tensor product}, denoted $\otimes$. The notation $\Ket{\psi}^{\otimes n}$ denotes a state comprising of $n$ copies of $\Ket{\psi}$. If a state $\Ket{\psi} \in \mathcal{H}_1 \otimes \mathcal{H}_2$ cannot be expressed as $\ket{a} \otimes \ket{b}$, for any $\ket{a} \in \mathcal{H}_1$ and any $\ket{b} \in \mathcal{H}_2$, we say that the state is \emph{entangled}.

Quantum mechanics dictates that there are two ways to change a quantum state: \emph{unitary evolution} and \emph{measurement}. Unitary evolution involves acting with some unitary operation $U$ (so $U U^{\dagger} = U^{\dagger} U = I$, where the $\dagger$ operation denotes the hermitian adjoint, obtained through transposing and complex conjugating) on $\Ket{\psi}$, thus producing the mapping $\Ket{\psi} \rightarrow U \Ket{\psi}$.

Measurement, in its most basic form, involves expressing a state $\ket{\psi}$ in a particular orthonormal basis, $\mathcal{B}$, and then choosing one of the basis vectors as the state of the system post-measurement. The index of that vector is the classical outcome of the measurement. The post-measurement vector is chosen at random and the probability of obtaining a vector $\ket{v} \in \mathcal{B}$ is given by $| \braket{v|\psi} |^2$. There are more general types of measurement, however this is the only type that is relevant to our paper.

States denoted by kets are also referred to as \emph{pure states} as they are states of maximal information for a quantum system. In other words, having a pure state for a particular quantum system means knowing all there is to know about the state of that system. When maximal information is not available, states are referred to as \emph{mixed} and can be represented using \emph{density matrices}. These are positive semidefinite, trace one, hermitian operators. The density matrix of a pure state $\ket{\psi}$ is $\rho = \Ket{\psi} \bra{\psi}$.

An essential operation concerning density matrices is the \emph{partial trace}. This provides a way of obtaining the density matrix of a subsystem that is part of a larger system. Partial trace is linear, and is defined as follows. Given two density matrices $\rho_1$ and $\rho_2$ with Hilbert spaces $\mathcal{H}_1$ and $\mathcal{H}_2$, we have that:
\begin{equation}
\rho_1 = Tr_2(\rho_1 \otimes \rho_2) \; \; \; \; \; \; \; \rho_2 = Tr_1(\rho_1 \otimes \rho_2)
\end{equation}
In the first case one is `tracing out' system $2$, whereas in the second case we trace out system $1$. This property together with linearity completely defines the partial trace. For if we take any general density matrix, $\rho$, on $\mathcal{H}_1 \otimes \mathcal{H}_2$, expressed as:
\begin{equation}
\rho = \sum_{i,i',j, j'} a_{ii'jj'} \Ket{i}_1\Bra{i'}_1 \otimes \Ket{j}_2 \Bra{j'}_2
\end{equation}
where $\{\ket{i}\}$ ($\{\ket{i'}\}$) and $\{\ket{j}\}$ ($\{\ket{j'}\}$) are orthonormal bases for $\mathcal{H}_1$ and $\mathcal{H}_2$,
if we would like to trace out subsystem $2$, for example, we would then have:
\begin{equation}
Tr_2(\rho) = Tr_2 \left( \sum_{i,i',j, j'} a_{ii'jj'} \Ket{i}_1\Bra{i'}_1 \otimes \Ket{j}_2 \Bra{j'}_2 \right) =
\sum_{i,i',j} a_{ii'jj} \Ket{i}_1\Bra{i'}_1
\end{equation}

An important result, concerning the relationship between mixed states and pure states which we use in our paper, is the fact that any mixed state can be purified. In other words, for any mixed state $\rho$ over some Hilbert space $\mathcal{H}_1$ one can always find a pure state $\ket{\psi} \in \mathcal{H}_1 \otimes \mathcal{H}_2$ such that $dim(\mathcal{H}_1) = dim(\mathcal{H}_2)$\footnote{One could allow for purifications in larger systems, but in our paper we restrict attention to same dimensions.} and:
\begin{equation}
Tr_2(\Ket{\psi}\Bra{\psi}) = \rho
\end{equation}
Moreover, the purification $\ket{\psi}$ is not unique and so another important result is the fact that if $\ket{\phi} \in \mathcal{H}_1 \otimes \mathcal{H}_2$ is another purification of $\rho$ then there exists a unitary $U$, acting only on $\mathcal{H}_2$ (the additional system that was added to purify $\rho$) such that:
\begin{equation}
\ket{\phi} = (I \otimes U) \ket{\psi}
\end{equation}
We will refer to this as the \emph{purification principle}.

Quantum computation is most easily expressed in the \emph{quantum gates model}. In this framework, gates are unitary operations which act on groups of qubits. As with classical computation, universal quantum computation is achieved by considering a fixed set of quantum gates which can approximate any unitary operation up to a chosen precision. The most common universal set of gates is given by:
\begin{equation*}
\mathsf{X} = \begin{bmatrix}
0 & 1 \\
1 & 0 \\
\end{bmatrix} \;
\mathsf{Z} = \begin{bmatrix}
1 & 0 \\
0 & -1 \\
\end{bmatrix} \;
\mathsf{H} = \frac{1}{\sqrt{2}} \begin{bmatrix}
1 & 1 \\
1 & -1 \\
\end{bmatrix} \;
\mathsf{T} = \begin{bmatrix}
1 & 0 \\
0 & e^{i\pi/4} \\
\end{bmatrix} \;
\mathsf{CNOT} = \begin{bmatrix}
		1 & 0 & 0 & 0 \\
		0 & 1 & 0 & 0 \\
		0 & 0 & 0 & 1 \\
		0 & 0 & 1 & 0 \\		
		\end{bmatrix}
\end{equation*}
In order, the operations are known as Pauli $\mathsf{X}$ and Pauli $\mathsf{Z}$, Hadamard, the $\mathsf{T}$-gate and controlled-NOT. Note that general controlled-$U$ operations are operations performing the mapping $\Ket{0}\ket{\psi} \rightarrow \ket{0}\ket{\psi}$, $\ket{1}\ket{\psi} \rightarrow \ket{1} U\ket{\psi}$. The matrices express the action of each operator on the computational basis.
A classical outcome for a particular quantum computation can be obtained by measuring the quantum state resulting from the application of a sequence of quantum gates.

The final notion which needs mentioning is the \emph{quantum} \textsf{SWAP} \emph{test}. This is a simple procedure for determining whether two quantum states $\ket{\psi}, \ket{\phi} \in \mathcal{H}$ are close to each other or far apart. We express closeness in terms of the absolute value of their inner product $|\braket{\psi | \phi}|$.
The test involves preparing a qubit in the state $(\ket{0} + \ket{1}) / \sqrt{2}$ and performing a controlled-\textsf{SWAP} operation between that qubit and the state $\ket{\psi}\ket{\phi}$. \textsf{SWAP} is defined by the mapping $\ket{\psi}\ket{\phi} \rightarrow \ket{\phi}\ket{\psi}$, so we obtain the state:
\[
\frac{\ket{0}\ket{\psi}\ket{\phi} + \ket{1}\ket{\phi}\ket{\psi}}{\sqrt{2}}
\]
If one then applies a Hadamard operation to the first qubit and measures it in the computational basis it can be shown that the probability of obtaining outcome $\ket{0}$ is $(1 + |\braket{\psi | \phi}|^2)/2$.

\subsection{Complexity theory} \label{sect:complexity}

\subsubsection{Decision problems}
We use standard complexity theory notation and refer the reader to the Complexity Zoo \cite{zoo} for the definitions of standard complexity classes. Briefly, $\cP$ is the class of decision problems\footnote{A decision problem is a problem in which for every input $x \in \{0, 1\}^*$, the output is either ``yes'' or ''no''. A decision problem can be represented as either a function $f:\{0, 1\}^* \rightarrow \{0, 1\}$, or as a subset of $\{0, 1\}^*$ representing the ``yes'' instances to the problem. Such a set is known as a \emph{language}.} that can be solved by a deterministic polynomial-time classical algorithm (or Turing machine). If the algorithm is allowed to use randomness (and we require that it outputs the correct answer with probability greater than $2/3$) the corresponding class is $\cBPP$. The class of decision problems for which ``yes'' instances admit a polynomial-sized proof string (or witness) that can be verified by the polynomial-time algorithm is known as $\cNP$\footnote{There is an equivalent definition of $\cNP$ as the set of all decision problems that can be solved by a \emph{non-deterministic} polynomial-time algorithm (or Turing machine). The non-deterministic part simply means that the algorithm can guess a witness.}. The analogous class for ``no'' instances is $\ccoNP$, referred to as the \emph{complement} of $\cNP$. Once again, if the polynomial-time algorithm also uses randomness the corresponding classes are $\cMA$ and $\mathsf{coMA}$, respectively.
The class of decision problems that can be decided in polynomial-time by a quantum algorithm is denoted $\cBQP$. There are two quantum analogues of $\cMA$. One is $\cQMA$ which is the class of decision problems in which ``yes'' instances admit a quantum witness, of polynomially-many qubits, that can be verified by a polynomial-time quantum algorithm. The other is $\cQCMA$, which is the same as $\cQMA$ except the witness is a classical bit-string, rather than a quantum state. Complements of these classes are denoted $\mathsf{coQMA}$ and $\mathsf{coQCMA}$, respectively.
For all classes mentioned here, we say that a problem $P$ is \emph{hard} for the class $\mathcal{C}$ if for all problems $P' \in \mathcal{C}$ there exists a deterministic polynomial-time algorithm mapping each ``yes'' instance of $P'$ to a ``yes'' instance of $P$ and each ``no'' instance of $P'$ to a ``no'' instance of $P$, respectively. We also say that $P$ is \emph{complete} for $\mathcal{C}$ if $P$ is hard for $\mathcal{C}$ and $P \in \mathcal{C}$.

As a slight abuse of terminology, we will sometimes say ``$\cBPP$ machine/algorithm'' or ``$\cBQP$ machine/algorithm'' to mean either a probabilistic polynomial-time algorithm, or a polynomial-time quantum algorithm, respectively. Analogous terminology may be used for the other classes as well\footnote{For instance, a ``$\mathsf{QCMA}$ machine'' refers to a polynomial-time quantum algorithm that also receives a classical witness string.}.

An important category of complexity classes that we encounter throughout the paper is that of \emph{advice classes}. Let us provide a definition of this concept:
\begin{definition} \label{defn:adviceclass}
Let $\mathcal{C}$ be a complexity class and $\mathcal{F}$ a family of functions $f : \mathbb{N} \rightarrow \{0, 1\}^*$. The complexity class $\mathcal{C}/\mathcal{F}$, known as $\mathcal{C}$ with $\mathcal{F}$ advice, is the set of all languages $L$, for which there exists an $L' \in \mathcal{C}$ and a function $f \in \mathcal{F}$ such that for all $x \in \{0, 1\}^*$, $x \in L$ iff. $\braket{x, f(|x|)} \in L'$.
\end{definition}
\noindent As an example, consider the class $\cPpoly$. This consists of all languages that can be decided by a deterministic polynomial-time algorithm, that receives polynomially-many bits of advice for all inputs of the same length. In other words, for all inputs $x \in \{0, 1\}^n$, the algorithm also receives some string $a \in \{0, 1\}^{poly(n)}$, aiding it in deciding whether to accept $x$ or not.
Analogously $\cNPpoly$ consists of languages that can be decided by a polynomial-time verifier receiving a witness for ``yes'' instances and a trusted advice string that only depends on the size of the input.
We also encounter the class $\cNPnd$ in which the size of the advice string is $\mathsf{O(n^d)}$, for some fixed constant $d$.

Some of the advice classes used in the paper are not covered by Definition~\ref{defn:adviceclass}. For instance, the class $\sf{BPP/rpoly}$ denotes the set of languages that can be decided by a $\cBPP$ machine that receives \emph{randomised} polynomial-size advice. In other words, for all inputs $x \in \{0, 1\}^n$, the probabilistic algorithm also receives some string $a \in \{0, 1\}^{poly(n)}$, that is drawn from a distribution $\mathcal{D}_n$.
We can see that this does not satisfy Definition~\ref{defn:adviceclass} since the advice string is not the result of some deterministic function, but is a sample from a probability distribution.
It should therefore be understood that $\mathsf{rpoly}$ corresponds to polynomial-size advice drawn from a probability distribution that only depends on the size of the input. An important result we use, concerning randomised advice, is the following:
\begin{theorem}[Aaronson \cite{marpoly}] \label{thm:MANP}
$\cMArpoly = \cMApoly = \cNPpoly$
\end{theorem}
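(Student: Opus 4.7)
The plan is to establish $\cNPpoly = \cMApoly = \cMArpoly$ by proving the two non-trivial containments $\cMApoly \subseteq \cNPpoly$ and $\cMArpoly \subseteq \cMApoly$; the reverse inclusions $\cNPpoly \subseteq \cMApoly \subseteq \cMArpoly$ are immediate, because an $\cNP$ verifier is an $\cMA$ verifier that ignores its coins, and deterministic advice is the degenerate case of a randomized advice distribution (a point mass). Both non-trivial containments will follow from the same template: amplify the relevant source of randomness so that its error on any fixed input of length $n$ is much smaller than $2^{-n}$, then union-bound over all $2^n$ inputs to fix a single good sample and absorb it into the polynomial-length advice.

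For $\cMApoly \subseteq \cNPpoly$, I would take a language $L \in \cMApoly$ with verifier $V$ and advice $\alpha_n$ and first amplify $V$'s error over its own coin tosses $r$ down to $2^{-2n}$ by standard parallel repetition with majority vote, lengthening Merlin's witness accordingly. For each $x \in \{0,1\}^n$, the probability over uniformly random $r$ that $r$ fails to give the correct acceptance behaviour is at most $2^{-2n}$, so a union bound over the $2^n$ inputs of length $n$ yields a single string $r^*$ that is good for every $x$ simultaneously. Appending $r^*$ to $\alpha_n$ gives polynomial deterministic advice under which the verifier is purely deterministic, placing $L$ in $\cNPpoly$.

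For $\cMArpoly \subseteq \cMApoly$, let $\mathcal{D}_n$ denote the advice distributions. By the definition of $\cMArpoly$, for every fixed $x$ the sampled advice is ``good'' for $x$ (i.e.\ yields correct $\cMA$ behaviour on $x$) with probability at least $2/3$ over $a \sim \mathcal{D}_n$. I would form an amplified protocol that samples $k = \Theta(n)$ independent copies $a_1, \dots, a_k$ from $\mathcal{D}_n$, asks Merlin to supply one witness $w_i$ per copy, runs the original $\cMA$ protocol on each $(x, a_i, w_i)$ in parallel, and accepts iff the majority of the runs accept. For fixed $x$ the failure probability over $(a_1, \dots, a_k)$ is at most $2^{-\Omega(k)}$ by Chernoff, so choosing $k$ slightly larger than $n$ makes this below $2^{-n-1}$; a union bound over the $2^n$ inputs then fixes a tuple $(a_1^*, \dots, a_k^*)$ of polynomial total length that serves as \emph{deterministic} advice, placing $L$ in $\cMApoly$.

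The main obstacle will be cleanly disentangling the three layers of randomness in the $\cMArpoly \subseteq \cMApoly$ step---Merlin's adversarial choice of witness, Arthur's coin tosses, and the sampling of the advice. The safest route is to first amplify Arthur's coin-toss error down to inverse-exponential \emph{before} touching the advice, so that ``$a$ is good for $x$'' becomes effectively a deterministic statement about the existence or non-existence of a Merlin witness; the subsequent sampling of the $a_i$'s then yields genuinely independent goodness indicators, and the Chernoff step and the final union bound compose without subtle dependencies leaking between the layers.
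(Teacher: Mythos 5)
First, a point of comparison: the paper does not prove this statement at all --- it is imported directly from Aaronson \cite{marpoly} --- so there is no in-paper argument to measure you against. Your high-level plan (trivial inclusions in one direction; ``amplify, union-bound over the $2^n$ inputs of length $n$, and hardwire a good sample into the advice'' for the two nontrivial inclusions) is exactly the standard route by which the cited result is established. However, as written, both nontrivial steps have concrete quantitative gaps that would make the argument fail if executed literally.

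For $\cMApoly \subseteq \cNPpoly$: amplifying the per-$(x,w)$ error to $2^{-2n}$ and union-bounding over the $2^n$ inputs is not enough. The fixed string $r^*$ must make the now-deterministic verifier reject \emph{every} witness on \emph{every} no-instance; the bad event for a no-instance $x$ is $\exists w:\ V(x,w,r^*)=1$, and bounding its probability requires a union bound over all $2^m$ witnesses ($m$ the witness length), hence amplification to error below $2^{-n-m-1}$ rather than $2^{-2n}$. Your phrase ``$r$ fails to give the correct acceptance behaviour'' silently buries this witness union bound. For $\cMArpoly \subseteq \cMApoly$ there are two further issues. First, under the standard definition of randomised advice the $2/3$ success probability is taken jointly over the advice sample \emph{and} Arthur's coins, so the premise that an individual sample $a$ is ``good for $x$'' (yields a verifier with a completeness/soundness gap on $x$) with probability at least $2/3$ does not follow immediately; it needs a Markov/averaging step. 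Second, the majority-vote arithmetic: if only a $2/3$ fraction of the $a_i$ are good, each bad run may accept with probability $1$ and each good run with probability up to $1/3$, so on a no-instance the expected accepting fraction can be as large as $\tfrac{2}{3}\cdot\tfrac{1}{3}+\tfrac{1}{3}=\tfrac{5}{9}>\tfrac{1}{2}$ and the majority can wrongly accept. Your closing remark about first amplifying Arthur's coins is therefore not merely ``the safest route'' but load-bearing: it is what drives good runs' acceptance down to $2^{-n}$ so the accepting fraction concentrates near the fraction of bad $a_i$'s, which is below $1/2$. A cleaner variant that sidesteps both issues at once is to fold the sampled advice into Arthur's randomness, amplify the whole protocol (same witness across repetitions) to error $2^{-\Omega(n+m)}$, and then fix the advice coordinates by a single Markov-plus-union-bound step over the $2^n$ yes-instances and the $2^{n+m}$ (no-instance, witness) pairs.
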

\noindent It should be noted that the equalities are only known to hold for classes of decision problems.

We can similarly have quantum advice. As an example, the class $\sf{BQP/qpoly}$ denotes the set of languages that can be decided by a $\cBQP$ machine that receives as advice a quantum state of polynomially-many qubits. In other words, for all inputs $x \in \{0, 1\}^n$, the quantum algorithm also receives a quantum state $\ket{\psi_n} \in \mathcal{H}_n$, such that $dim(\mathcal{H}_n) = 2^{poly(n)}$.
Hence, the suffix $\mathsf{qpoly}$ will indicate polynomially-sized quantum advice and represents a quantum state of polynomially-many qubits that only depends on the size of the input.

The concept of oracles is also used throughout the paper.
Briefly, an oracle is a black box function that can be invoked by an algorithm (either classical or quantum) in order to obtain the solution to some problem in one time step. For example, the class of problems which can be solved by a deterministic polynomial-time algorithm with access to some oracle function $O : \{0,1\}^* \rightarrow \{0,1\}$ is denoted $\cP^O$. If $O$ is an oracle for some $\cNP$-complete problem, then the corresponding class is denoted $\cP^{\cNP}$.
Oracles can be used to define the polynomial hierarchy. The zeroth level of the polynomial hierarchy is given by the classes $\Sigma_0^{\sf P} = \cP$, $\Pi_0^{\sf P} = \cP$. The $k$'th level of the hierarchy is then defined as $\Sigma_k^{\sf P} = \cNP^{\Sigma_{k-1}^{\sf P}}$, $\Pi_k^{\sf P} = \ccoNP^{\Sigma_{k-1}^{\sf P}}$. Finally, the polynomial hierarchy is defined as $\cPH = \cup_{k \geq 0} \Sigma_k^{\sf P}$. We say that the polynomial hierarchy collapses at level $k$ iff. $\Sigma_k^{\sf P} = \Pi_k^{\sf P}$. While not a decision class, we also mention $\csP$ which is the class of all functions $f : \{0, 1\}^* \rightarrow \mathbb{N}$ that take as input a description of a polynomial-time algorithm and output the number of inputs that the algorithm accepts (i.e. the number of ``yes'' instances). An important result in complexity theory is Toda's theorem \cite{toda}, which states that $\cPH \subseteq \cP^{\csP}$.

For quantum classes, oracles are viewed as unitary operations that perform mappings of the form $U_O \ket{x}\ket{y} = \ket{x}\ket{y \oplus O(x)}$, where $O$ is the oracle function.
Additionally, whenever a result involving complexity classes remains true when those classes are given access to an oracle, $O$, we say that the result \emph{relativises}.

\subsubsection{Sampling problems and \textsc{BosonSampling}} \label{prelim:samp}
This section discusses sampling problems. These are problems for which the input specifies a probability distribution and the goal is to sample either exactly or approximately from that distribution.
In this paper, we will only be interested in exact sampling and specifically in the $\textsc{BosonSampling}$ problem, defined by Aaronson and Arkhipov \cite{bosonsampling}.
As mentioned in the introduction, in \textsc{BosonSampling}, identical photons (bosons) are sent through a linear optics network and non-adaptive measurements are performed to count the number of photons in each mode.
In more detail, for a quantum system with $m$ modes and $n$ photons, the basis states of the system are of the form $S = (s_1, ... s_m)$, where $s_i$ denotes the number of photons in mode $i$ (so $s_1 + ... + s_m = n$). A general state, is then a state of the form:
\begin{equation}
\ket{\psi} = \sum_{S} \alpha_S \ket{S}, \text{ with } \sum_S |\alpha_S|^2 = 1
\end{equation}
Note that the number of basis states is $M = {m + n - 1 \choose n}$. The action of the linear optics network can be expressed as a matrix $A \in \mathcal{U}_{m,n}$, where $\mathcal{U}_{m,n}$ is the set of all $m \times n$ column-orthonormal matrices. Let $A_{S}$ be the matrix obtained by taking $s_i$ copies of the $i$'th row of $A$, for all $i \leq m$. If the initial state of the system consists of one photon in each of the first $n$ modes (it is assumed that $m \geq n$), a state denoted as $\ket{1_n} = \ket{1,...,1,0,...,0}$, then it can be shown (see \cite{bosonsampling}) that the probability of observing the state $S$, upon passing the photons through the network described by $A$ and measuring the number of photons in each mode, will be:
\begin{equation} \label{eqn:bssamp}
Pr(S) = \frac{|Per(A_S)|^2}{s_1! s_2! ... s_m!}
\end{equation}
where $Per(M)$ denotes the permanent of a matrix $M = (m_{ij})_{i,j \leq n}$, and is defined as:
\begin{equation}
Per(M) = \sum\limits_{\sigma \in S_n} \prod\limits_{i=1}^n m_{i, \sigma(i)}
\end{equation}
with $S_n$ the symmetric group of all permutations of the elements $1$ up to $n$.

Exact \textsc{BosonSampling} is then the problem of sampling from the distribution defined by Equation~\ref{eqn:bssamp}.
This problem is believed to be hard for classical computers and to explain why we first need to state a result known as \emph{Stockmeyer's approximate counting method}:
\begin{theorem}[Stockmeyer \cite{stockmeyer}]
Let $f : \{0, 1\}^n \rightarrow \{0, 1\}$ be a function that can be computed by a deterministic polynomial-time algorithm and let:
\begin{equation} \label{eqn:stockmeyer}
p = \frac{1}{2^n}\sum_{x \in \{0, 1\}^n} f(x)
\end{equation}
Then for all $g \geq 1 + 1/poly(n)$ there exists a $\cBPP^{\cNP}$ algorithm that computes $p$ to within a multiplicative factor of $g$\footnote{In other words, the algorithm computes $\tilde{p}$ such that $p/g \leq \tilde{p} \leq gp$.}.
\end{theorem}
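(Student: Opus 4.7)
The plan is to estimate $N := 2^n p = |\{x \in \{0,1\}^n : f(x) = 1\}|$ using the now-classical ``approximate counting via hashing'' paradigm. The guiding observation is that a $\cBPP^{\cNP}$ machine can decide, for any polynomial-time verifiable predicate, whether the number of witnesses exceeds a given threshold, by combining a random hash function (sampled internally using the $\cBPP$ randomness) with a single $\cNP$ query. Sweeping over hash output lengths lets one isolate $\log_2 N$ closely enough to yield a multiplicative approximation of $p$.

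First I would fix a family of pairwise-independent hash functions $\{h_k : \{0,1\}^n \to \{0,1\}^k\}_{k=0}^{n}$. For a randomly chosen $h_k$, let $T_k = \{x : f(x) = 1 \wedge h_k(x) = 0^k\}$. Pairwise independence gives $\mathbb{E}|T_k| = N/2^k$ with variance at most the mean, so Chebyshev's inequality shows $|T_k|$ is tightly concentrated once $N/2^k$ is large enough. The key observation is that the predicate ``there exist $t$ distinct $x_1, \ldots, x_t$ satisfying $f(x_i) = 1$ and $h_k(x_i) = 0^k$'' is polynomial-time verifiable in the tuple $(x_1, \ldots, x_t)$, hence deciding it is an $\cNP$ query.

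The $\cBPP^{\cNP}$ algorithm then sweeps $k = 0, 1, \ldots, n$, using such queries (with parameter $t$ chosen according to the target accuracy) to identify the largest $k^\star$ for which $|T_{k^\star}|$ meets the threshold. This $k^\star$ pins $\log_2 N$ down to within an additive constant, so $N$ is determined to within a constant multiplicative factor. To sharpen to $g = 1 + 1/\mathrm{poly}(n)$, I would either replace pairwise by $r$-wise independence with $r = \mathrm{poly}(n)$ (yielding Chernoff-like tail bounds via moment inequalities, so that $t$ can be taken of order $\mathrm{poly}(n)/\epsilon^2$ with $\epsilon = g-1$), or run a two-stage procedure: first obtain a factor-$2$ approximation as above, then hash down so that the surviving set has size $\Theta(\mathrm{poly}(n)/\epsilon^2)$, and finally pin down its exact cardinality via $O(\log n)$ many binary-search $\cNP$ queries of the form ``is $|T_{k^\star}| \geq m$?''.

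The main obstacle is precisely this final sharpening: pairwise independence alone only gives a fairly loose approximation, so achieving multiplicative error arbitrarily close to $1$ requires stronger concentration or the two-stage reduction described above. One has to verify that the whole procedure still fits inside $\cBPP^{\cNP}$ — in particular, that all error probabilities (over the hash samples and the Chebyshev/Chernoff tails) compose to a quantity which can be driven below $1/3$ by a polynomial amount of internal randomness, and that no step secretly requires more than $\cNP$ power (e.g.\ an exact counting query that would push us into $\csP$). Verifying these bookkeeping points, while conceptually routine, is the technically delicate part of the argument.
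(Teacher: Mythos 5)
The paper does not prove this statement at all --- it is imported verbatim as Stockmeyer's approximate counting theorem and used as a black box --- so there is no in-paper argument to compare against. Your sketch is the standard and correct route: pairwise-independent hashing to isolate $\log_2 N$ where $N = 2^n p$, with the threshold predicate ``there exist $t$ distinct accepting $x_i$ all hashing to $0^k$'' packaged as a single $\cNP$ query (the hash description is polynomial-size and supplied by the $\cBPP$ machine, so the verifier's work stays polynomial). Your two-stage sharpening also goes through: once $k^\star$ is chosen so that $\mathbb{E}|T_{k^\star}| = \Theta(\mathrm{poly}(n)/\epsilon^2)$, Chebyshev gives $|T_{k^\star}| = (1\pm\epsilon)N/2^{k^\star}$ with constant probability (amplifiable by taking a median over independent hash draws), and the exact value of $|T_{k^\star}|$ is recoverable by $O(\log n)$ binary-search $\cNP$ queries with polynomial-size tuple witnesses, since $\epsilon = 1/\mathrm{poly}(n)$ keeps the set polynomially small. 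For the record, Stockmeyer's own sharpening is slightly slicker: approximate $N^m$ to within a factor of $2$ by running the coarse procedure on the $m$-fold product predicate, which yields $N$ to within $2^{1/m} = 1 + O(1/m)$; this avoids both $r$-wise independence and the second hashing stage. The only genuine omissions in your write-up are the degenerate cases --- $N = 0$ (handled by one $\cNP$ query, forcing output $0$) and $N \le \mathrm{poly}(n)/\epsilon^2$ (handled by exact counting via threshold queries without any hashing) --- but these are exactly the bookkeeping you flag, and none of them requires more than $\cNP$ power.
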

Now, suppose there existed a $\cBPP$ algorithm that, given $A$ (the description of the linear optics network) as input, could sample from the distribution of Equation~\ref{eqn:bssamp}. 
This algorithm can be viewed as a deterministic polynomial-time computable function $F$ that, given $A$ and a string $r \in \{0,1\}^{p(n)}$, drawn from the uniform distribution, for some polynomial $p$, produces a vector $S = (s_1, ... s_m)$ (of the form described above).
The fact that this algorithm can sample from the \textsc{BosonSampling} distribution can be expressed mathematically as:
\begin{equation}
\underset{r \leftarrow_R \{0, 1 \}^{p(n)}}{Pr} \left( F(A, r) = S \right) = \frac{|Per(A_S)|^2}{s_1! s_2! ... s_m!}
\end{equation}
where $r \leftarrow_R \{0, 1 \}^{p(n)}$ denotes the fact that $r$ was drawn uniformly at random from the set $\{0, 1 \}^{p(n)}$.

Consider now the state $\ket{1_n}$ (or any state in which all $s_i$ are either $0$ or $1$). What is the probability of observing $\ket{1_n}$ in the output modes? Using Equation~\ref{eqn:bssamp} we see that it is $Pr(\ket{1_n}) = |Per(A_{\ket{1_n}})|^2$. We will define a function $f$ as follows:
\begin{equation}
f(A, r) = \left\{
\begin{array}{ll}
0, & \text{ if } F(A, r) \neq \ket{1_n} \\
1, & \text{ if } F(A, r) = \ket{1_n} \\
\end{array}
\right.
\end{equation}
Note that $f$ is computable in polynomial time (since it simply involves evaluating $F$ and testing whether the output is $\ket{1_n}$).
The probability that the $\cBPP$ algorithm produces the output $\ket{1_n}$ can then be expressed as:
\begin{equation}
\underset{r \leftarrow_R \{0, 1 \}^{p(n)}}{Pr} \left( F(A, r) = \ket{1_n} \right) = \frac{1}{2^{p(n)}} \sum_{r \in \{0, 1\}^{p(n)}}  f(A, r)
\end{equation}
But this sum can be estimated, up to multiplicative error, in $\cBPP^{\cNP}$ using Stockmeyer's method. In other words, there is a $\cBPP^{\cNP}$ algorithm for estimating $|Per(A_{\ket{1_n}})|^2$.
It is shown in \cite{bosonsampling} that one can consider any matrix $M \in \{-1, 0, 1\}^{n \times n}$ and embed it in $A$ (with only an added polynomial overhead) so that the probability of sampling the $\ket{1_n}$ state is proportional to $|Per(M)|^2$.
By the above argument, this means that computing a multiplicative estimate for the squared permanent of a matrix over $\{-1, 0, 1\}$ is in $\cBPP^{\cNP}$.
However, computing such an estimate is $\csP$-hard \cite{bosonsampling}. It is also known that $\cBPP^{\cNP}$ is contained in the third level of the polynomial hierarchy \cite{bppph}, which leads us to conclude, using Toda's theorem, that the polynomial hierarchy collapses at the third level.
Such a collapse is regarded as unlikely and therefore the existence of an efficient classical algorithm for $\textsc{BosonSampling}$ is also considered unlikely.

\subsection{Generalised Encryption Scheme (GES)} \label{sect:crypto}
The basis of most of the results in our paper is the generalised encryption scheme. We state its definition from \cite{afk}:
\begin{definition}
[\cite{afk} Generalised Encryption Scheme (GES)] \label{defn:GES}
A generalised encryption scheme (GES) is a two party protocol between a classical client $C$, and an unbounded server $S$, characterised by:
\begin{itemize}
\item A function $f: \{0, 1\}^* \rightarrow \{0, 1\}$.
\item A cleartext input $x \in \{0, 1\}^*$, for which the client wants to compute $f(x)$.
\item An expected polynomial-time key generation algorithm $K$ which works as follows: for any $x \in \{0, 1\}^*$, with probability greater than $1/2 + 1/poly(|x|)$ we have $(k, success) \leftarrow K(x)$, where $k \in \{0,1\}^{poly(|x|)}$. If the algorithm does not return $success$ then we have $(k', fail) \leftarrow K(x)$, where $k' \in \{0,1\}^{poly(|x|)}$.
\item A polynomial-time deterministic algorithm $E$ which works as follows: for any $x \in \{0, 1\}^*$, $k \in \{0, 1\}^{poly(|x|)}$ and $s \in \{0,1\}^{poly(|x|)}$ we have that $y \leftarrow E(x, k, s)$, where $y \in \{0,1\}^{poly(|x|)}$.
\item  A polynomial-time deterministic decryption algorithm $D$, which works as follows: for any $x \in \{0, 1\}^*$, $k \in \{0, 1\}^{poly(|x|)}$ and $s \in \{0,1\}^{poly(|x|)}$ we have that $z \leftarrow D(s, k, x)$, where $z \in \{0,1\}^{poly(|x|)}$.
\end{itemize}
And satisfying the following properties:
\begin{enumerate}
\item There are $m$ rounds of communication, such that $m = poly(|x|)$. Denote the client's message in round $i$ as $c_i$ and the server's message as $s_i$.
\item On cleartext input $x$, $C$ runs the key generation algorithm until success to compute a key $(k, success) = K(x)$. This happens before the communication between $C$ and $S$ is initiated, and the key $k$ is used throughout the protocol.
\item In round $i$ of the protocol, $C$ computes $c_i = E(x, k, \overline{s}_{i-1})$, where $\overline{s}_{i-1}$ denotes the server's responses up to and including round $i-1$, i.e.\ $\langle s_0, s_1 ... s_{i-1} \rangle$. We assume that $s_0$ is the empty string. $C$ then sends $c_i$ to $S$.
\item In round $i$ of the protocol, $S$ responds with $s_i$, such that $s_i \in \{0,1 \}^{poly(|x|)}$. Additionally, the server's responses are drawn probabilistically from a distribution which is consistent with property $5$.
\item At the end of the protocol, $C$ computes $z \leftarrow D(\overline{s}_m, k, x)$ and with probability $1/2 + 1/poly(|x|)$, we have that $z = f(x)$.
\end{enumerate}
\end{definition}
Let us provide some intuition for this definition. The purpose of a GES is to allow a client to compute some $f(x)$ which it cannot compute with its own resources. It does this by interacting with a computationally powerful server for a number of rounds which is polynomial in the size of the input. Importantly, the GES allows the client to hide some information about $x$ from the server. We make this statement more precise through the following definition:
\begin{definition}
Let $X$ be a random variable denoting the input to a GES and $T(X)$ a random variable denoting the transcript of the protocol for input $X$ (in other words $T(X)$ is a collection of all the messages exchanged between the client and the server, in a run of the GES, on input $X$).
We say that a GES leaks at most $L(X)$ iff. $X$ and $T(X)$ are independent given $L(X)$.
\end{definition}
\noindent Finally, we state and give a simplified proof of the main theorem from \cite{afk}, which we will use throughout the paper:
\begin{theorem}[\cite{afk} GES leaking size of input] \label{thm:afk}
If a function $f$ admits a GES which leaks at most the size of the input (i.e.\ $L(X) = |X|$), then $f \in \gesdec$.
\end{theorem}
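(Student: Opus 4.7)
The plan is to establish $f \in \cMArpoly \cap \ccoMArpoly$ and then invoke Theorem~\ref{thm:MANP} ($\cMArpoly = \cMApoly = \cNPpoly$) to conclude $f \in \cNPpoly \cap \ccoNPpoly = \gesdec$. Since replacing $D$ by its bit-negation converts a GES for $f$ into one for $\bar f$ with the same leakage profile, it suffices to prove the $\cMArpoly$ containment for a generic $f$. As preprocessing, I would amplify the GES by majority vote of $O(n)$ independent parallel repetitions, yielding a new GES whose decoding error on every input of length $n$ is at most $2^{-(n+c)}$ for a sufficiently large constant $c$; the amplified protocol still leaks only $|x|$ since the joint transcript of the independent runs is independent of $X$ given $|X|$. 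Its transcript distribution $\mathcal{T}_n$ therefore depends only on $n$, and can be sampled (inefficiently) by running the amplified protocol on the reference input $0^n$.

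The $\cMArpoly$ algorithm uses as randomised advice for length $n$ a single sample $T = (c_1^T, s_1^T, \ldots, c_m^T, s_m^T)$ drawn from $\mathcal{T}_n$. Merlin's witness for input $x$ is a candidate client key $k$, and Arthur's deterministic verifier checks the round-by-round consistency $c_i^T = E(x, k, \bar{s}^T_{i-1})$ for every $i$ and, if these all pass, outputs $D(\bar{s}^T_m, k, x)$. Completeness when $f(x) = 1$ follows from the leakage coupling: sampling $T \sim \mathcal{T}_n$ is distributionally identical to running the amplified protocol on input $x$ itself, so the honest key $k$ underlying this run is a valid Merlin witness whose consistency check passes automatically and whose decoded value equals $1$ with probability at least $1 - 2^{-(n+c)}$ by amplified correctness.

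Soundness for $f(x) = 0$ is the delicate step. For any adversarial witness $k^\dagger$, the acceptance probability is $\Pr_{T \sim \mathcal{T}_n}[T$ consistent with $(x, k^\dagger) \text{ and } D(T, k^\dagger, x) = 1]$. Using the leakage coupling, this equals $\Pr_{(k, r_S)}[T(x, k, r_S)$ consistent with $(x, k^\dagger) \text{ and } D(T(x, k, r_S), k^\dagger, x) = 1]$, where $(k, r_S)$ is sampled from the honest distribution on input $x$. Under the (WLOG) assumption that $E$ is injective in the key, consistency forces $k = k^\dagger$, so the probability reduces to $\Pr_{K(x)}(k^\dagger) \cdot \Pr_{r_S}[D(T(x, k^\dagger, r_S), k^\dagger, x) = 1]$, which is a single term of the non-negative sum $\sum_k \Pr_{K(x)}(k) \Pr_{r_S}[D(T(x,k,r_S), k, x) = 1] = \Pr_{(k, r_S)}[D \ne f(x)] \le 2^{-(n+c)}$, hence well below $1/3$. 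Applying Theorem~\ref{thm:MANP} to $f$ and symmetrically to $\bar f$ then delivers $f \in \gesdec$. The main obstacle is enforcing the key-injectivity of $E$ without loss of generality; this can be handled by augmenting the first client message with a fresh one-time pad of $k$, a transformation that preserves the leakage condition and the correctness of the GES, while ensuring that distinct keys produce distinct first-round client messages and so pin down $k$ uniquely given any transcript in the support of $\mathcal{T}_n$.
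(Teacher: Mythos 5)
Your overall architecture --- a randomised transcript as advice, a candidate key as Merlin's witness, a round-by-round consistency check followed by decryption, then Theorem~\ref{thm:MANP} and complementation --- is exactly the paper's route, and your completeness argument via the leakage coupling matches the paper's. The amplification preprocessing is a reasonable addition, though with a $1/2+1/\operatorname{poly}(n)$ correctness gap you need $O(n\cdot \operatorname{poly}(n))$ parallel repetitions rather than $O(n)$ to reach error $2^{-(n+c)}$; this is harmless since the transcript remains polynomial in length.

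The genuine problem is the step you yourself flag as delicate. Your soundness bound needs consistency with the transcript to pin down the key, and you propose to enforce key-injectivity of $E$ ``without loss of generality'' by appending a fresh one-time pad of $k$ to the first client message. That transformation does not achieve injectivity: if the pad $p$ is fresh randomness, then $k\oplus p$ is uniform and independent of $k$, so for every candidate $k^{\dagger}$ there is a pad value rendering the augmented message consistent with $k^{\dagger}$, and nothing is pinned down; if instead the pad is absorbed into the key so that the verifier can recompute the message, then every pair $(k^{\dagger},k^{\dagger}\oplus m)$ is consistent with the observed value $m$, and the component of the key that $D$ actually uses remains unconstrained. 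Worse, any transformation that genuinely makes the transcript determine $k$ threatens the leakage condition, because nothing in Definition~\ref{defn:GES} prevents $K(x)$ from being correlated with $x$, so exposing $k$ through the transcript can expose $x$. Without injectivity your soundness calculation fails: two keys $k_1\neq k_2$ may produce identical client messages on every transcript while $D(\cdot,k_1,x)=0$ and $D(\cdot,k_2,x)=1$; then $k_2$ (or the all-$k_2$ tuple in the amplified scheme) is a consistent witness that decodes to $1$ with certainty even when $f(x)=0$, and its acceptance probability is not a single term of the error sum you invoke. To be fair, the paper's own proof disposes of this point by simply asserting that decrypting with any consistent key returns $f(x)$ with probability $1/2+1/\operatorname{poly}(n)$ ``by the definition of the GES''; your instinct to justify that assertion is the right one, but the justification you give does not go through, so as written the soundness half of your $\cMArpoly$ containment is unproven.
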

\begin{proof}
Suppose that $f$ admits a GES which leaks at most the size of the input. We start by first considering the simplified case of a GES with only one round of interaction between the client and the server. The protocol works as follows:
\begin{enumerate}
\item The client runs $K(x)$ until success to produce an encryption key $k$.
\item The client computes the encrypted string $y \leftarrow E(x, k, \text{`'})$ (where the last entry is the empty string) and sends it to the server.
\item The server sends a response $r$.
\item The client decrypts his response obtaining $z \leftarrow D(r, k, x)$. With probability greater than $1/2 + 1/poly(|x|)$ we have that $z = f(x)$.
\end{enumerate}

Assuming the existence of the one-round GES, let us construct an $\cMArpoly$ algorithm for computing $f(x)$. In other words we are going to construct a probabilistic polynomial-time algorithm that receives a checkable witness and randomised polynomial-sized advice (the distribution from which we sample the advice will be the same for inputs of the same size).
The algorithm takes $x$ as input and works as follows:
\begin{itemize}
\item Denoting $|x| = n$, the algorithm receives as advice a string $x_n \in \{0, 1\}^n$ as well as $r_n \in \{0, 1\}^{poly(n)}$, where $r_n$ is the server's response, in the one-round GES, when being sent $y_n \leftarrow E(x_n, k_n, \text{`'})$ from the client. Here $k_n$ is simply some key which can be used to encrypt $x_n$. The only reason we include $x_n$ as part of the advice is so that we can whether $x_n = x$. If this is the case then the algorithm simply decrypts $r_n$ obtaining $f(x)$ with high probability. The next steps assume that $x_n \neq x$.
\item From the assumption that the GES leaks at most the size of the input, there must be some key $k$, such that $y_n \leftarrow E(x, k, \text{`'})$. If there did not exist such a key and the server received $y_n$ he would know that the input could not be $x$ and thus learn more than the size of the input, which is not allowed. 
More formally, it would mean that the input and the transcript of the protocol are not independent, given the length of the input, since certain transcripts (certain $y$'s) can only occur for certain inputs.
The key, $k$, will be the witness that the algorithm receives. The algorithm can check whether $y_n \leftarrow E(x, k, \text{`'})$.
\item The algorithm now simply computes $z \leftarrow D(r_n, k, x)$, which by the definition of the GES, will be $f(x)$ with probability greater than $1/2 + 1/poly(n)$.
\end{itemize}
We have therefore given an $\cMArpoly$ algorithm for computing $f(x)$. To recap, the $\cMA$ part comes from the fact that the algorithm is probabilistic, runs in polynomial time and receives the key $k$ as a witness. The advice is $\sf{rpoly}$ because the server's response is drawn from some probability distribution (which depends only on the length of the input).
From Theorem~\ref{thm:MANP}, we know that $\cMArpoly = \cNPpoly$ hence $f \in \cNPpoly$. 
Since the GES frameworks requires that the key $k$ must exist irrespective of the value of $f(x)$, this means that in our algorithm both the $f(x) = 1$ case and the $f(x) = 0$ have a verifiable witness. Therefore it is also the case that $f \in \ccoNPpoly$ and so $f \in \gesdec$.

We now need to generalise this to the case where the client and the server interact for a polynomial number of rounds. Because the protocol is leaking at most the size of the input, denoted $n$, any transcript of the protocol will only depend on $n$. Therefore we can make the algorithm's advice to be a complete transcript of the protocol drawn from the distribution of all possible transcripts for inputs of length $n$. The witness would then be a key $k$ that makes the input $x$ compatible with this transcript. From the definition of the GES this again guarantees that we obtain the correct outcome with probability $1/2 + 1/poly(|x|)$.

Note that if the total communication between the client and the server (i.e. the size of the transcript) were bounded by $\mathsf{O(n^d)}$, for some constant $d > 0$, the above argument shows that the functions computable in this setting are contained in $\cNPnd$\footnote{Strictly speaking, the above argument shows that such functions would be contained in $\mathsf{MA}$ with \emph{randomised} advice of size $\mathsf{O(n^d)}$. However, the proof that $\mathsf{MA/rpoly} = \cNPpoly$ can be adapted to show that $\cMA$ with $\mathsf{O(n^d)}$-size randomised advice is the same as $\mathsf{MA/O(n^{d'})}$, with $d' \geq d$. Essentially, we can ``derandomise'' the advice using deterministic advice of a larger size. The same argument cannot be used, however, to derandomise the $\cMA$ part of the algorithm and obtain $\mathsf{NP/O(n^d)}$. This is because the size of the randomness used in $\cMA$ is an arbitrary polynomial.}. This is because, as we have seen, the transcript is given as advice and so it will also be bounded in length by $\mathsf{O(n^d)}$. 
\end{proof}

We have seen that the functions that can be delegated in a GES are contained in $\gesdec$. A question we might have regarding this result is: what can we say about $\cBPP^{\gesdec}$? In other words, if a $\cBPP$ machine uses the GES as an oracle, does that allow it to solve more problems? Intuitively, we would expect the answer to be no and indeed using a result of Brassard \cite{brassard} which shows that $\cP^{\cNP \cap \ccoNP} = \cNP \cap \ccoNP$, together with Adleman's theorem (that $\cBPP \subset \cPpoly$) \cite{adleman} we prove the following:
\begin{lemma}
$\cBPP^{\gesdec} = \gesdec$.
\end{lemma}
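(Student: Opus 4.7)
The inclusion $\gesdec \subseteq \cBPP^{\gesdec}$ is immediate: a $\cBPP$ machine can call the oracle once on its input and output the answer, ignoring its random tape. The work is therefore in proving the reverse inclusion $\cBPP^{\gesdec} \subseteq \gesdec$, and my plan is to chain together a relativised version of Adleman's theorem with a relativised, advice-aware version of Brassard's theorem.

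First I would apply Adleman's theorem, which relativises in a straightforward way, to conclude that $\cBPP^{\gesdec} \subseteq \cPpoly^{\gesdec}$. The standard proof (amplify the error of a $\cBPP^O$ machine below $2^{-n-1}$, then apply a union bound over all inputs of length $n$ to find a single good random string, which becomes the advice) goes through verbatim for any oracle $O$. This reduces the problem to showing $\cPpoly^{\gesdec} \subseteq \gesdec$, which is the advice-aware analogue of Brassard's theorem $\cP^{\cNP \cap \ccoNP} = \cNP \cap \ccoNP$.

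Fix $L \in \cPpoly^{A}$ with $A \in \gesdec = \cNPpoly \cap \ccoNPpoly$. Let $\alpha_n$ be the advice for the $\cPpoly$ machine $M$ deciding $L$, and let $\beta_n$ be the combined polynomial-size advice used by an $\cNP$ verifier $V_{\mathsf{NP}}$ and a $\mathsf{coNP}$ verifier $V_{\mathsf{coNP}}$ for $A$. I would construct an $\cNPpoly$ machine $M'$ for $L$ using the concatenated advice $(\alpha_n, \beta_n)$: on input $x$, simulate $M$, and for each oracle query $q_i$ guess both a bit $b_i$ (the oracle answer) and a witness $w_i$ certifying either $q_i \in A$ via $V_{\mathsf{NP}}$ (if $b_i = 1$) or $q_i \notin A$ via $V_{\mathsf{coNP}}$ (if $b_i = 0$); accept iff every witness verifies and the simulated $M$ accepts. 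If $x \in L$, the truthful sequence of oracle answers yields certified witnesses and an accepting simulation, so an accepting path exists; if $x \notin L$, any path with all witnesses verifying must have correct $b_i$'s, in which case $M$ rejects. Hence $L \in \cNPpoly$. The symmetric construction (accept iff $M$ rejects and all witnesses verify) places $\overline{L}$ in $\cNPpoly$ and therefore $L \in \ccoNPpoly$, giving $L \in \gesdec$.

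The main conceptual obstacle is making sure Brassard's guessing trick survives the advice: we must be careful that a single advice string $\beta_n$ certifies membership and non-membership in $A$ for \emph{every} query of length at most $\operatorname{poly}(n)$ that $M$ could make, not just for queries of length exactly $n$. This is handled by padding queries up to the maximum query length and taking $\beta_n$ to be the collection of all $\cNPpoly$/$\ccoNPpoly$ advice strings for those lengths, which is still polynomial in $n$. Once that is in place, the argument above closes the loop: $\cBPP^{\gesdec} \subseteq \cPpoly^{\gesdec} \subseteq \gesdec$, as required.
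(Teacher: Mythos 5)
Your proposal is correct and follows essentially the same route as the paper: a relativised application of Adleman's theorem to reduce $\cBPP^{\gesdec}$ to a deterministic machine with advice, followed by Brassard's guess-and-verify trick for $\cNP \cap \ccoNP$ oracles, carried out with the oracle's $\cNPpoly$/$\ccoNPpoly$ advice concatenated into the final advice string. The only (minor, and arguably cleaner) deviation is that you fold the $\cPpoly$ machine's own advice $\alpha_n$ directly into the $\cNPpoly$ advice, whereas the paper first argues $\cPpoly^{\gesdec} \subseteq \cP^{\gesdec}$ by recovering the advice from the oracle; both routes land in the same place.
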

\begin{proof}
It is clear that $\gesdec \subseteq \cBPP^{\gesdec}$, so only we need to show that $\cBPP^{\gesdec} \subseteq \gesdec$. To do this, we first use Adleman's theorem \cite{adleman}, that $\cBPP \subset \cPpoly$, which we know is relativising and have that $\cBPP^{\gesdec} \subseteq \cPpoly^{\gesdec}$. Next, it is easy to show that $\cPpoly^{\gesdec} \subseteq \cP^{\gesdec}$. This is because the advice received by the $\cPpoly$ machine can just as easily be obtained from the $\gesdec$ oracle. In other words, for any given input $x$ and advice $a$ for the $\cPpoly$ machine, the $\cP$ machine can simply query the $\gesdec$ oracle with $x$ in order to obtain the same advice $a$\footnote{The fact that the oracle responds with a single bit (acceptance or rejection) is not a problem, since the $\cP$ machine can query the oracle for each bit of $a$.}. It then simulates the $\cPpoly$ machine.

We have therefore reduced our problem to showing that $\cP^{\gesdec} \subseteq \gesdec$. This can be done by adapting Brassard's proof \cite{brassard} that $\cP^{\cNP \cap \ccoNP} = \cNP \cap \ccoNP$. The essential part of that proof is to show that $\cP^{\cNP \cap \ccoNP} \subseteq \cNP$, while the containment in $\ccoNP$ follows by complementation. The idea is that for any $\cP^{\cNP \cap \ccoNP}$ algorithm, $A$, deciding some language, we can devise an $\cNP$ algorithm, \emph{NA}, which also decides that language.

The \emph{NA} algorithm will simulate $A$ until it makes a query to the $\cNP \cap \ccoNP$ oracle. At this point \emph{NA} can non-deterministically guess the response to this query. To do so, note that if some language $L \in \cNP \cap \ccoNP$ then it is the case that $L \in \cNP$ and $L^c \in \cNP$, where $L^c$ is the complement of $L$. In other words, there exist non-deterministic algorithms $N_L$ and $N_{L^c}$ for deciding $L$ and $L^c$, respectively. Assuming $A$'s query is for the language $L$, \emph{NA} will simulate $N_L$, and for each non-deterministic branch of this simulation it will then also simulate $N_{L^c}$. Since $L$ and $L^c$ are complementary, it cannot happen that both the $N_L$ and the $N_{L^c}$ parts of the branches are accepting. We will therefore have branches in which both $N_L$ and $N_{L^c}$ were rejecting and branches in which either $N_L$ was accepting or $N_{L^c}$ was accepting. These latter branches determine the answer to the query for the $\cNP \cap \ccoNP$ oracle. The \emph{NA} algorithm will continue simulating $A$ on these branches and reject on all others.

We can see that the above reasoning would also work if the oracle was $\gesdec$ and the algorithm \emph{NA} were an $\cNPpoly$ algorithm receiving some advice string whose length is polynomial in the size of the input. Our modified \emph{NA} can continue to simulate the oracle queries if we assume that the advice it receives is the concatenation of advices received by the $\gesdec$ oracle for all queries. Since the number of queries is polynomial, the concatenation will also be polynomially bounded and hence constitutes a valid advice string for an $\cNPpoly$ algorithm. Therefore $\cP^{\gesdec} \subseteq \cNPpoly$ and through complementation $\cP^{\gesdec} \subseteq \gesdec$.

Because $\cBPP^{\gesdec} \subseteq \cP^{\gesdec}$, our result follows immediately.
\end{proof}

We end this section with an explanation of what it means to have a GES for sampling problems. The client's input, $x$, will be a description of a probability distribution, $\mathcal{D}_x : \{0, 1\}^{poly(|x|)} \rightarrow [0, 1]$. Upon interacting with the server and applying the decryption procedure the client obtains $z \leftarrow D(\overline{s}_m, k, x)$ such that:
\begin{equation}
Pr(z \leftarrow D(\overline{s}_m, k, x)) = Pr(z \leftarrow \mathcal{D}_x)
\end{equation}
In other words, the output of $D(\overline{s}_m, k, x))$ is distributed according to $\mathcal{D}_x$. Just as with the GES for decision problems, throughout the interaction, the server should only be able to learn $|x|$.
For the specific case of $\textsc{BosonSampling}$, with $m$ modes and $n$ photons, the input will be an $m \times n$ column-orthonormal matrix $A$. The associated distribution will be the one described in Section~\ref{prelim:samp}, namely:
\begin{equation}
Pr(S) = \frac{|Per(A_S)|^2}{s_1! s_2! ... s_m!}
\end{equation}
where $S$ is a particular configuration of the $n$ photons in the $m$ modes.

The proof of Theorem~\ref{thm:afk} applies to the sampling case as it did to the decision case. We therefore have that if the client can delegate exact sampling from $\mathcal{D}_x$ to the server, using the GES, there exists an $\mathsf{MA/rpoly}$ algorithm for exactly sampling from $\mathcal{D}_x$. Importantly, however, the result of Theorem~\ref{thm:MANP} no longer applies and we cannot equate this algorithm with a $\mathsf{NP/poly}$ sampling algorithm. That result \emph{applies to decision classes}. In fact, in the sampling case, it will be simpler to consider the sampling algorithm as a $\mathsf{{BPP}^{NP}/rpoly}$ algorithm (since $\mathsf{MA/rpoly} \subseteq \mathsf{{BPP}^{NP}/rpoly}$). In other words, the existence of a GES for sampling from $\mathcal{D}_x$ implies the existence of a probabilistic polynomial-time algorithm, with an $\cNP$ oracle, and which receives randomised polynomial-sized advice, for sampling from $\mathcal{D}_x$.

\section{Oracle separation between \textsf{BQP} and \textsf{MA/O(n\textsuperscript{d})}} \label{sect:proofs1}
In order to prove Theorem~\ref{thm:noGESBQP} we will construct an oracle using a version of the complement of Simon's problem \cite{simon}. Recall that Simon's problem is the following: given a function $f : \{0, 1\}^n \rightarrow \{0, 1\}^n$ (for some $n \in \mathbb{N}$) which is promised to be either $1$-to-$1$ or have Simon's property ($f$ is $2$-to-$1$ and there exists some $s \in \{0, 1\}^n$, $s \neq 0^n$, such that for $x \neq y$, $f(x) = f(y)$ iff $x = s \oplus y$), decide which is the case. In particular, for Simon's problem, the deciding algorithm should accept if the function has Simon's property and reject if it is a $1$-to-$1$ function. The complement of this problem simply flips these two conditions.
If one is not given an explicit description of $f$ but restricts access to this function through an oracle then Simon's problem can be used to separate $\cBPP$ from $\cBQP$.
To be precise, the oracle is some function $O: \{0, 1\}^* \rightarrow \{0, 1\}^*$ such that for $n \in \mathbb{N}$, if we consider $O$ restricted to the domain $\{0,1\}^n$, denoted $O_n : \{0, 1\}^n \rightarrow \{0, 1\}^n$, $O_n$ is either a $1$-to-$1$ function or a function satisfying Simon's property.
A language which is then contained in $\cBQP^O$ but not in $\cBPP^O$ is $L(O) = \{ 0^n | O_n \text{ is a function with Simon's property} \}$ as shown in \cite{simon}. In fact, as we've mentioned before, the complement of this language\footnote{Note that Simon's problem is a promise problem, so when speaking about the complement of $L(O)$ we are in fact referring to $L^c(O) =\{ 0^n | O_n \text{ is a 1-to-1 function} \}$.} can be used to separate $\cBQP^O$ and $\cNP^O$ \cite{aaronsonph}. Lemma~\ref{lemma:bqpnp}, which we prove below, is essentially a proof of this fact for a slightly different version of the oracle.

Before proving Lemma~\ref{lemma:bqpnp}, let us first address a technical point.
As we remarked in the introduction, an unconstrained GES for $\cBQP$ would imply $\cBQP \in \gesdec$.
Therefore, we would ideally like to construct an oracle to separate $\cBQP^O$ and $\cNPpoly^O$. The intuition in constructing this hypothetical oracle would be following: instead of considering a function $O_n$ for each input length $n$, we consider a function $O_x$, for each input string $x \in \{0, 1\}^n$. In other words, for a fixed input length, $n$, there will be $2^n$ functions which need to be decided. But the $\cNP^O$ machine receives only a polynomial amount of advice, which is the same for all of these $2^n$ functions. Therefore this advice should be insufficient to help the $\cNP^O$ machine in deciding all of these inputs.
Formalising this intuition for any polynomial is problematic, as will become clear later (see the last paragraph of the proof of Lemma~\ref{lemma:bqppnd}). For this reason, we will fix the degree of the polynomial and prove that $\cBQP^O \not\subseteq (\cNPnd)^O$.
To do this, let us first prove the separation between $\cBQP^O$ and $\cNP^O$, for our specific oracle.
\begin{lemma} \label{lemma:bqpnp}
There exists an oracle $O$, based on the complement of Simon's problem, such that $\cBQP^O \not\subseteq \cNP^O$.
\end{lemma}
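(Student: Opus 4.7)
The plan is to diagonalize over $\cNP^O$ machines. The oracle $O = \{O_x\}_{x \in \{0,1\}^*}$ will consist of functions $O_x : \{0,1\}^{|x|^D} \to \{0,1\}^{|x|^D}$, each either $1$-to-$1$ or satisfying Simon's property, and the target language will be $L(O) = \{x : O_x \text{ is } 1\text{-to-}1\}$. The containment $L(O) \in \cBQP^O$ is immediate: on input $x$, run Simon's algorithm on $O_x$ (boosted and verified a few times) and accept iff no nontrivial period is produced. The substance of the lemma is therefore the lower bound $L(O) \notin \cNP^O$, which I would establish by enumerating $\cNP^O$ machines $M_1, M_2, \ldots$ with polynomial runtime bounds $p_1, p_2, \ldots$ and, at stage $i$, defeating $M_i$ on a carefully chosen input $x_i = 0^{n_i}$.

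I would grow $n_i$ fast enough that $p_j(n_j)$ for every $j < i$ is far smaller than $n_i$, so that no earlier machine's queries could have landed inside $O_{x_i}$'s block, and provisionally set $O_x$ to a canonical $1$-to-$1$ function for every $x$ not equal to some $x_j$. Then I would inspect $M_i$ on input $x_i$ against this provisional oracle. If no witness causes $M_i$ to accept, $M_i$ rejects $x_i$ while $x_i \in L(O)$, so $M_i$ is already wrong and I finalize $O_{x_i}$ as the canonical $1$-to-$1$ function. Otherwise, I fix a witness $w^*$ that makes $M_i$ accept and record the at most $p = p_i(n_i)$ points $y_1, \ldots, y_p$ that the accepting computation queries inside $O_{x_i}$, together with their injective values $v_j = O_{x_i}(y_j)$.

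The central step is to replace $O_{x_i}$ by a Simon-type $2$-to-$1$ function that still assigns $O_{x_i}(y_j) = v_j$ for every $j$. A candidate period $s \in \{0,1\}^{n_i^D} \setminus \{0^{n_i^D}\}$ is compatible with these injective values iff $s \neq y_a \oplus y_b$ for every $a \neq b$, and since there are fewer than $p^2$ forbidden strings in a space of size $2^{n_i^D} - 1$, a valid $s$ exists for any sufficiently large $n_i$. Given such an $s$, I pair up all inputs under the involution $y \mapsto y \oplus s$, propagate each constraint $O_{x_i}(y_j) = v_j$ to its partner $y_j \oplus s$, and fill in the remaining pairs with fresh common output values to obtain a bona fide $2$-to-$1$ function with period $s$. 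Every query that $M_i$ makes on $(x_i, w^*)$ returns the same answer as before, so $M_i$'s computation path is unchanged and it still accepts, while now $x_i \notin L(O)$ -- so $M_i$ is wrong on $x_i$.

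Since each stage touches only the block $O_{x_i}$, later stages cannot undo earlier damage, and the overall construction is consistent. The only bookkeeping concern is ensuring that the stages do not interfere, which is handled by choosing $n_i$ to grow faster than all previous polynomial runtime bounds; the counting estimate $p^2 < 2^{n_i^D} - 1$ that powers the central step is then trivial. I do not expect any real obstacle in this lemma itself -- the argument is essentially the standard Simon-style diagonalization. The genuinely hard step, which the paper flags explicitly, is the following lemma, where the same framework must be sharpened to separate $\cBQP$ from $\cNPnd$, so that a single $O(n^d)$-length advice string has to be defeated simultaneously across exponentially many inputs of the same length, forcing a more refined counting argument about the ratio of distinguishable oracle blocks to available advice strings.
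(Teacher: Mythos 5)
Your proposal is correct and follows essentially the same route as the paper: diagonalize over $\cNP^O$ machines, provisionally make the relevant oracle block a $1$-to-$1$ function, and if some witness leads to acceptance, switch to a Simon function whose period avoids the at most $p(p-1)/2$ pairwise XORs of the queried points so that the accepting path is preserved while the answer flips. The only (immaterial) differences are that you blow up the domain to $\{0,1\}^{n^D}$ already here, whereas the paper reserves that device for the later advice lemma, and you spell out the completion of the $2$-to-$1$ function on unqueried pairs, which the paper leaves implicit.
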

\begin{proof}
The separation of $\cBQP$ and $\cNP$ with respect to an oracle has been shown a number of times before, \cite{bv, qrfs, watroussep}, including with the complement of Simon's problem. However, we prove this lemma for our particular version of Simon's problem where instead of assigning a function to each input length, we assign different functions to different inputs.

We proceed by defining an oracle $O$ and a language which we refer to as the complement of Simon's problem or $coSimon(O)$, such that $coSimon(O) \in \cBQP^{O}$ and $coSimon(O) \not\in \cNP^{O}$.
We start with the latter as it also clarifies what the oracle should do:
\begin{equation} \label{eqn:cosimon}
coSimon(O) = \{ \Braket{1^n, i} | i \in \{0,1\}^n \text{ and } f(x) \equiv O(1^n, i, x) \text{ is a 1-to-1 function} \}
\end{equation}
Strictly speaking, the problem we are defining is a promise problem, so the set defined above is the set of $yes$ instances to the problem, whereas the set of $no$ instances is not the complement but the set:
\begin{equation}
\{ \Braket{1^n, i} | i \in \{0,1\}^n \text{ and } f(x) \equiv O(1^n, i, x) \text{ is a Simon function} \}
\end{equation}
Here, by ``\emph{Simon function}'' we mean a function having Simon's property.

It is clear from this definition that the oracle $O$ is the one providing the functions for which we want test whether they are $1$-to-$1$ or have Simon's property. Of course, the whole point is to restrict access to the descriptions of those functions and force the algorithm solving the problem to perform queries to the oracle.
It is also clear that for any such $O$, $coSimon(O)$ will be contained in $\cBQP^{O}$ since we can just run Simon's algorithm on the given input and flip acceptance and rejection. As is standard in quantum query complexity, we assume that the behaviour of the quantum oracle is to perform the unitary operation $\Ket{1^n}\Ket{i}\Ket{x}\Ket{y} \rightarrow^O \Ket{1^n}\Ket{i}\Ket{x}\Ket{O(1^n, i, x) \oplus y}$.

The oracle $O$ can be viewed as some function taking as input the tuple $(n, i, x)$ and outputting $f_i(x)$, where $f_i : \{0, 1\}^n \rightarrow \{0, 1\}^n$ is a function which is either bijective or has Simon's property. Essentially $n$, which is given in unary, specifies the domain size of our functions, $i$ is an index for a particular function and $x$ is the value on which we evaluate $f_i$. These last two elements of the tuple are specified in binary and the oracle should be defined for all $n \in \mathbb{N}$ and all $i, x \in \{0, 1\}^n$.
We will denote the set of functions used by the oracle for domain size $n$ as $\mathcal{F}_n$, in other words:
\begin{equation}
\mathcal{F}_n = \{ f_i | i \in \{0, 1\}^n \text{ and } f_i \text{ is defined as } f_i(x) \equiv O(1^n, i, x)  \}
\end{equation}

Next, we construct a so-called \textit{adversarial oracle} $O$. This just means defining the family of sets $ \{ \mathcal{F}_n \}_{n \in \mathbb{N}}$, in such a way that every non-deterministic Turing machine using the oracle $O$ fails to decide correctly $coSimon(O)$. The proof will use a diagonalisation argument.

Since the set of non-deterministic Turing machines is countable we consider the $k$'th machine, $M_k$, and check its behaviour when $n = k + n_0$, for some $n_0 \geq 0$ which we define later on. Suppose we take some index $i \in \{0, 1\}^n$, and tentatively make the $i$'th function in $\mathcal{F}_k$ a $1$-to-$1$ function. By simulating the behaviour of $M_k$ on this input we can check to see whether it accepts or rejects. If it rejects, then we are done, since $M_k$ will incorrectly decide this input. Conversely, if $M_k$ accepts, then by definition there exists a polynomial-sized path, in the non-deterministic computation tree of the machine, which leads to acceptance. We denote this path as $\pi$, and denote the length of $\pi$ as $l = poly(n)$. $M_k$ can make at most $l$ queries to $O$ on this path which we can represent as a list of tuples: $[ (x_1, f_i(x_1)), (x_2, f_i(x_2)) ... (x_l, f_i(x_l)) ]$, where $x_1, ... x_l$ are the queried variables. An example of such a path is shown in Figure~\ref{fig:tree}.

\begin{figure}[htbp!]
\centering
\includegraphics[scale=0.35]{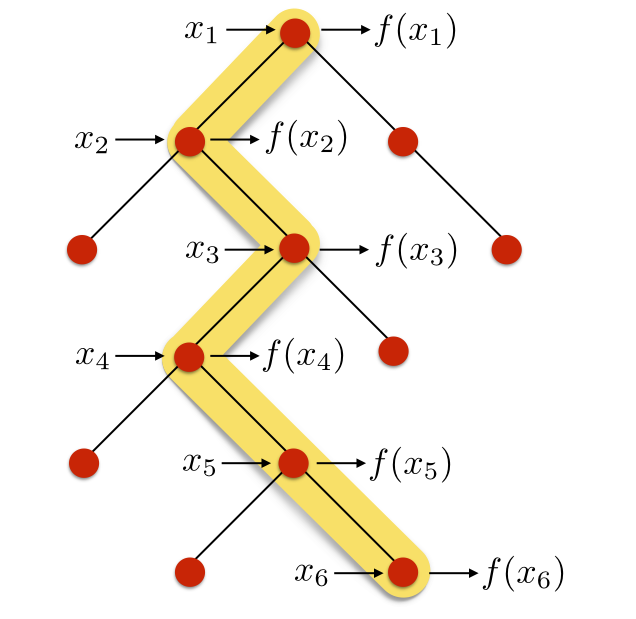}
\caption{Computation tree with queries}
\label{fig:tree}
\end{figure}

We now simply consider a Simon function $f'$ that matched $f_i$ on the queried values, i.e. $f'(x_1) = f_i(x_1), ... f'(x_l) = f_i(x_l)$. How do we know such a function exists? The number of possible bit masks $s$ such that $f(x) = f(x \oplus s)$ is $2^{n}-1$ (since $0^n$ is excluded). By having $f'$ match $f_i$ on the $l$ queried values it must be that $f'$ produces different outputs for each of these values. Therefore for any $i, j \leq l$, $i \neq j$ it must be that $s \neq x_i \oplus x_j$. This means that there are $l(l-1)/2$ values of $s$ which are restricted. But $l = poly(n)$ and since $s$ can take on $2^{n}-1$ possible values, if $n$ is sufficiently large so that $2^{n}-1 > l(l-1)/2$, then we can simply choose an $s$ which is not restricted. We therefore pick $n_0$ to be large enough so that $2^{n} -1 > l(l-1)/2$ and then take $s$ to be some mask from the available $2^{n} - 1 - l(l-1)/2$.
We thus have a Simon function which produces the same responses to the queries on path $\pi$ as the $1$-to-$1$ function $f_i$. If we now just take $f_i$ to be $f'$, then $\pi$ will still be an accepting path and therefore $M_k$ will decide incorrectly on the input $\Braket{1^n, i}$.

Through this construction, all non-deterministic Turing machines will have some input on which they decide $coSimon(O)$ incorrectly, thus $coSimon(O) \not\in \cNP^{O}$ concluding the proof.
\end{proof}

\begin{lemma} \label{cor:oraclema}
There exists an oracle $O$, based on the complement of Simon's problem, such that $\cBQP^{O} \not\subseteq \cMA^O$.
\end{lemma}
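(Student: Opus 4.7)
The plan is to reuse the diagonalisation skeleton of Lemma~\ref{lemma:bqpnp}, only enriching the step that converts an accepting non-deterministic branch into an accepting Simon-function branch so that it copes with Arthur's random tape. Enumerate the $\cMA$ machines $M_1, M_2, \ldots$ (with standard acceptance/soundness thresholds $2/3$ and $1/3$), assign machine $M_k$ its own input length $n = k + n_0$ for a sufficiently large $n_0$, and define $O$ and $coSimon(O)$ exactly as in Equation~\ref{eqn:cosimon}. Membership of $coSimon(O)$ in $\cBQP^O$ is immediate from Simon's algorithm, so only the $\cMA^O$ lower bound needs to be proved.

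Fix $M_k$ and tentatively set some function $f_i \in \mathcal{F}_n$ (at a chosen index $i \in \{0,1\}^n$) to be $1$-to-$1$, making $\langle 1^n, i\rangle$ a ``yes'' instance of $coSimon(O)$. If $M_k$ rejects this input with probability $\geq 2/3$ for every Merlin witness, then $M_k$ already errs and we are done. Otherwise there exists a witness $w^{\star}$ that causes Arthur to accept with probability at least $2/3$; by standard Chernoff amplification I may assume that $w^{\star}$ causes acceptance with probability at least $1 - 2^{-2n}$, at the cost of increasing the number of oracle queries along each computation path to some polynomial $q = q(n)$. I then try to switch $f_i$ to a Simon function $f'$ that agrees with $f_i$ on every query that Arthur could make under witness $w^{\star}$ on an accepting tape. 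Such a switch leaves Arthur's view unchanged on those tapes, hence keeps his acceptance probability bounded away from $1/3$ on the now-``no'' instance $\langle 1^n, i\rangle$, contradicting $\cMA$-soundness.

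The one nontrivial step is to produce a single period $s^{\star}$ for which the corresponding Simon function $f'_{s^{\star}}$ is consistent with $f_i$ on a large-enough fraction of accepting random tapes. For a fixed accepting tape $r$ the machine issues at most $q$ queries, and the periods $s \neq 0^n$ that are ``bad'' for $r$ (i.e.\ that collide two of those queries) form a set of size at most $\binom{q}{2}$. Averaging jointly over accepting tapes $r$ and over $s \in \{0,1\}^n \setminus \{0^n\}$, there exists $s^{\star}$ that is good for at least a $(1 - q^2/(2^n-1))$-fraction of the accepting tapes. Choosing $n_0$ large enough that $q(n)^2 \ll 2^n$, the acceptance probability on oracle $f'_{s^{\star}}$ is at least
\[
(1 - 2^{-2n})\bigl(1 - q^2/(2^n - 1)\bigr) \;>\; \tfrac{1}{3},
\]
which is the desired contradiction. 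Iterating over $k$ at disjoint input lengths yields one oracle $O$ that simultaneously defeats every $\cMA$ machine. The principal obstacle, and the reason amplification is necessary, is the averaging step: without first boosting the acceptance probability to $1 - 2^{-2n}$, the $q^2/2^n$ loss from switching to a Simon function could drop the probability below the $1/3$ soundness margin and the contradiction would disappear.
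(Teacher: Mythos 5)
There is a genuine gap at the heart of your argument: the passage from ``$s^{\star}$ is good for a $(1-q^2/(2^n-1))$-fraction of accepting tapes'' to ``Arthur's acceptance probability on $f'_{s^{\star}}$ is at least $(1-2^{-2n})(1-q^2/(2^n-1))$.'' Your notion of a good period only excludes \emph{intra-tape} collisions, i.e.\ it guarantees that no two queries made on the same tape $r$ differ by $s^{\star}$. But any Simon function $f'$ with period $s^{\star}$ is constant on every coset $\{x, x\oplus s^{\star}\}$, whereas $f_i$ is injective, so $f'$ is forced to disagree with $f_i$ on at least one point of \emph{every} coset whose two elements are both queried somewhere --- and the two elements may be queried on two \emph{different} tapes. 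Since the union of query sets over all $2^{poly(n)}$ tapes can cover essentially the whole domain, you cannot choose a single $f'_{s^{\star}}$ agreeing with $f_i$ on $X_r$ for most good tapes $r$ simultaneously, and your claimed lower bound on the acceptance probability is in fact false in general. A concrete counterexample: take $f_i$ to be the identity and let Arthur (ignoring the witness) read two independent uniform points $x_r, y_r$ off his tape, query both, and accept iff $f(x_r)=x_r$ and $f(y_r)=y_r$. Then the acceptance probability on $f_i$ is $1$, almost every $s$ is ``good'' for almost every tape in your sense, yet every Simon function agrees with the identity on at most $2^{n-1}$ points, so the acceptance probability on \emph{every} Simon function is at most $1/4 < 1/3$. (This Arthur does not decide $coSimon(O)$ --- it fails completeness on other injective functions --- but it does defeat your argument, since your diagonalisation fixes one tentative $f_i$ and then asserts the existence of a surviving Simon function.)

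The paper avoids this by taking a different route: it randomises the \emph{oracle} rather than fixing one injective $f_i$. One compares a uniformly random $1$-to-$1$ function with a uniformly random Simon function; under the natural coupling (a random Simon function is a random injection on a random set of coset representatives), a fixed $q$-query computation path sees identically distributed answers except with probability $O(q^2/2^n)$, and the ``which point of the coset survives'' problem disappears because the injective function itself is random. The paper then reduces $\cMA$ to this setting by viewing the $\cMA$ machine as a distribution over nondeterministic machines and arguing that some fixed random tape must succeed with probability $\geq 2/3$ over the random oracle. (That reduction, as stated in the paper, is itself somewhat terse --- fixing Arthur's tape and taking the OR over witnesses preserves completeness but not soundness unless one first amplifies the soundness error below $2^{-|w|}$, which your Chernoff-amplification step would supply --- but the random-oracle coupling is the essential ingredient your proposal is missing.) To repair your proof you would need to choose $f_i$ at random and argue in expectation, which collapses back to the paper's argument.
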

\begin{proof}
The arguments from Lemma~\ref{lemma:bqpnp} can be used to show that even if the deciding machine receives a polynomial amount of randomness, it still cannot correctly input (with high probability). This corresponds to showing that the complement of Simon's problem also lies outside of $\cMA$, relative to the oracle.

The idea for this case will be to pick the oracle at random and then reduce the problem to the case without randomness. Suppose the oracle is a $1$-to-$1$ function or a Simon function with equal probability (in either case, the specific function that is picked is chosen uniformly at random).
An $\cMA$ algorithm is essentially a probability distribution over $\cNP$ algorithms. If the complement of Simon's problem, with respect to the random oracle, is in $\cMA$, then there must be an $\cNP$ machine that decides the problem correctly with probability at least $2/3$, over the random choice of the oracle.

We have therefore reduced the task of showing that the problem is not in $\cMA$ to that of showing that it is not in $\cNP$. Consider a non-deterministic Turing machine that accepts a particular input (of the form described in Equation~\ref{eqn:cosimon}) of length $2n$.
If the machine accepts this input that means that there exists an accepting non-deterministic path, making $l$ queries, where $l = poly(n)$. Assuming the function is a Simon function, from the proof of Lemma~\ref{lemma:bqpnp}, it's clear that the probability of finding a collision after $l$ queries , (and thus distinguishing the function from a $1$-to-$1$ function), assuming one has not been found after $l-1$ queries, is:
\begin{equation}
\frac{l - 1}{2^n - 1 - l(l-1)/2}
\end{equation}
But since $l = poly(n)$, this probability is exponentially small in $n$. Since, for any given input, there is an equal chance of it being a $1$-to-$1$ function or a Simon function, it follows that the probability that the algorithm accepts correctly is at most $1/2 + 2^{-\Omega(n)}$, which will be smaller than $2/3$, for sufficiently large $n$.
\end{proof}

\noindent Next, we prove:
\begin{lemma} \label{lemma:bqppnd}
For each $d \in \mathbb{N}$, there exists an oracle $O$, such that $\cBQP^O \not\subset (\cPnd)^O$.
\end{lemma}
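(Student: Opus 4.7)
The plan is to extend the adversary-oracle construction from Lemma~\ref{lemma:bqpnp} to account for $O(n^d)$ bits of nonuniform advice, via a probabilistic method rather than direct diagonalisation. I would draw the oracle $O$ at random: for every input length $n$ and every index $i \in \{0,1\}^n$, independently choose $f_i$ to be a uniformly random bijection with probability $1/2$, or a Simon function with uniformly random period $\pi_i \in \{0,1\}^n \setminus \{0^n\}$ with probability $1/2$. Every input then satisfies the promise of $coSimon(O)$, and $coSimon(O) \in \cBQP^O$ via Simon's algorithm exactly as in Lemma~\ref{lemma:bqpnp}. The goal reduces to showing that with positive probability over this choice, no deterministic polynomial-time oracle machine $M$ with $O(n^d)$ advice decides $coSimon(O)$.

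Fix such a machine $M$ with running time $q(n) = \mathrm{poly}(n)$, an input length $n$, an advice string $s \in \{0,1\}^{c n^d}$, and an index $i$. The computation $M^O(\braket{1^n, i}, s)$ is a depth-$q(n)$ classical decision tree over queries to $f_i$. A standard query lower bound for Simon's problem then gives
\[
\Pr_O\!\bigl[M^O(\braket{1^n, i}, s) \text{ correct}\bigr] \leq \tfrac{1}{2} + O\!\bigl(q(n)^2 / 2^n\bigr),
\]
since the transcripts obtained from a random $1$-to-$1$ $f_i$ and a random-period Simon $f_i$ are statistically identical unless two of the at most $q(n)$ queries $x,y$ satisfy $x \oplus y = \pi_i$, which for random $\pi_i$ occurs with probability at most $\binom{q(n)}{2}/(2^n - 1)$. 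Crucially, because the functions $\{f_i\}_{i \in \{0,1\}^n}$ are mutually independent, the events ``$M^O(\braket{1^n, i}, s)$ is correct'' are independent across $i$ for any fixed $s$. Multiplying over $i$ and then taking a union bound over the $2^{c n^d}$ possible advice strings yields
\[
\Pr_O\!\bigl[\exists\, s : M^O(\cdot, s) \text{ correct on every input of length } 2n \bigr] \leq 2^{c n^d}\bigl(\tfrac{1}{2} + O(q^2/2^n)\bigr)^{2^n} = 2^{c n^d - \Omega(2^n)},
\]
which is doubly exponentially small in $n$.

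To finish, I would enumerate the countably many polynomial-time oracle machines $M_k$, pick input lengths $n_k$ large enough that the above bound for $M_k$ is at most $2^{-k}$, and take a final union bound over $k$; this makes the probability that some pair $(M_k, s_k)$ correctly decides $coSimon(O)$ on every input of length $n_k$ strictly less than $1$. Fixing any deterministic oracle realising the good event gives the desired separation. The hard part will be justifying the per-input query advantage bound cleanly in the presence of adaptivity and advice, which I would resolve by noting that the advice $s$ is fixed before $\{f_i\}$ are drawn, so each $(M,s)$ pair is a deterministic classical algorithm to which the standard Simon query lower bound applies. A related subtlety --- and the reason the statement is restricted to a fixed degree $d$ --- is that the $2^{c n^d}$ union-bound factor can be absorbed only when $d$ is fixed in advance, since only then can the input lengths $n_k$ be chosen as a function of $d$; for $\cNPpoly$-style advice no single length dominates every machine's advice length simultaneously, so a genuinely different argument would be needed.
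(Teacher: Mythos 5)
Your proposal replaces the paper's adversarial diagonalisation with a probabilistic-method argument (random oracle, per-input distinguishing bound, product over the $2^n$ indices, union bound over advice strings and machines). The per-input bound is fine: for a \emph{fixed} assignment of all the other functions, a $q$-query deterministic algorithm distinguishes a random bijection from a random-period Simon function with advantage $O(q^2/2^n)$, by the usual coupling over the choice of period. The gap is the sentence ``because the functions $\{f_i\}$ are mutually independent, the events `$M^O(\braket{1^n,i},s)$ is correct' are independent across $i$.'' This is false: on input $\braket{1^n,i}$ the machine is free to query the oracle at positions $(1^n,i',x)$ with $i' \neq i$ (and at other lengths), so each correctness event is a function of the \emph{entire} family $\{f_{i'}\}$, and distinct events are functions of overlapping collections of the independent random variables. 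Without independence, the per-input bound only yields $\Pr[\text{correct on all } 2^n \text{ inputs}] \leq 1/2 + O(q^2/2^n)$ (e.g.\ by Markov on the number of correct answers), which is hopeless against the $2^{cn^d}$ union bound over advice strings --- the product structure is exactly what your argument needs and exactly what is not justified. A martingale-style repair (revealing the runs one index at a time and controlling how much the earlier runs have learned about $f_{i_t}$) is plausible but genuinely nontrivial, since a single function can absorb up to $2^n q(n)$ queries across all runs, at which point its type is no longer hidden; one has to argue that only a vanishing fraction of indices can be ``over-queried,'' and then still chain conditional bounds over the remaining indices.

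A symptom of the gap is your closing remark about why $d$ must be fixed: nothing in your argument actually uses it. If the independence step held, the same enumeration over (machine, polynomial-bound) pairs would give an oracle separating $\cBQP$ from $\cPpoly$ with no extra work --- a statement the paper explicitly singles out as requiring ``some non-trivial modification of this proof or a totally different technique.'' In the paper's proof the role of $d$ is concrete: the adversary must exhibit a single Simon function consistent with all $2^{O(n^d)}\cdot\mathrm{poly}(n)$ input--output pairs pinned down by the accepting advice strings, which forces the domain to be enlarged to $\{0,1\}^{n^D}$ with $D>d$ so that an unrestricted period still exists; $D$ must be a constant for the problem to remain in $\cBQP^{O_d}$, and that is where fixing $d$ is indispensable. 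Your write-up has no analogue of this tension, which is the clearest sign that the load-bearing step has been assumed rather than proved.
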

\begin{proof}
To begin with, the class $\cPnd$ is the class of problems solved by a deterministic polynomial-time Turing machine $M$, which receives an advice of length $\mathsf{O(n^d)}$, when the input is of size $\mathsf{O(n)}$ (in our case the input size is $2n$ since we defined $n$ as being the length of inputs to the $1$-to-$1$ and Simon functions).

In contrast to the previous case, instead of having the ability to non-deterministically choose one of exponentially many paths, a polynomial-time Turing machine $M$ receives some non-uniform information to help it in deciding $coSimon(O)$.
Each advice determines a new behaviour for $M$ which can even involve a different sequence of queries to the oracle.  What we want to show is that irrespective of what advice $M$ might receive, it still cannot always correctly decide $coSimon(O)$. To do this, we consider functions over a larger domain than just $n$-bit strings. In other words, for each $d$ we choose $D > d$ such that the set $\mathcal{F}_n$ contains $2^n$ functions of the form $f : \{0, 1\}^{n^D} \rightarrow \{0, 1\}^{n^D}$. The oracle, which we now denote as $O_d$, still receives queries of the form $(1^n, i, x)$, where $|i| = n$, but now $|x| = n^{D}$.

First we need to argue that the problem can still be decided in $\cBQP^{O_d}$. This is indeed the case, since expanding the domains of the functions simply changes the running time of the quantum algorithm from $\mathsf{O(n)}$ to $\mathsf{O(n^{D})}$. But since $D$ is just a fixed constant, the algorithm still runs in polynomial time, hence $coSimon(O_d) \in \cBQP^{O_d}$.

The harder part is showing $coSimon(O_d) \not\in \cPnd$.
As before, we will prove this by diagonalisation by considering the set of all (deterministic) Turing machines and showing that no matter which advice the $k$'th machine receives it cannot correctly decide $coSimon(O_d)$. Care must be taken, as each advice induces a different behaviour and one must consider the oracle so that all possible advice strings lead to failure. This is in contrast with the previous case where we were only interested in the behaviour of one accepting path of the non-deterministic computation tree.

Suppose we take the $k$'th deterministic polynomial-time Turing machine, $M_k$, and examine what happens for an input of length $n = k + n_0$, where $n_0$ will be chosen later (as before).
Since the advice is a binary string of length $\mathsf{O(n^d)}$ there are $2^{\mathsf{O(n^d)}}$ possible advice strings.
Whichever one $M_k$ uses it will be the same for all $2^n$ inputs of length $n$.

Let us now consider the first index of length $n$, namely $0^n$ and assign a $1$-to-$1$ function $f : \{0,1\}^{n^D} \rightarrow \{0,1\}^{n^D}$ to this index. We can inspect the behaviour of $M_k$ for $f$ and for each possible advice string. If for more than half of the advice strings $M_k$ rejects, then we keep $f$ at index $0^n$. This means that half of all advice strings have been eliminated (there is at least one input on which those strings lead to $M_k$ deciding incorrectly).
If, however more than half of all advice strings make $M_k$ accept $f$, we will attempt to turn $f$ into a Simon function while keeping acceptance for those advice strings. This will again lead to the elimination of (at least) half of all advice strings.

For each advice $a_j$, where $1 \leq j \leq 2^{\mathsf{O(n^d)}}$,  $M_k$ will make a sequence of polynomially many queries to $f$.
Denote that sequence of queries together with the responses as:
\[
\sigma_j = [ (x_{1j}, f(x_{1j})); (x_{2j}, f(x_{2j})); ... (x_{lj}, f(x_{lj})) ]
\]
where $l = poly(n)$.
We now consider a Simon function $f' : \{0,1\}^{n^D} \rightarrow \{0,1\}^{n^D}$ such that for all $j$ in which $M_k$ with advice $a_j$ and queries $\sigma_j$ accepts and for all $t \leq l$, we have that $f'(x_{tj}) = f(x_{tj})$. In other words $f'$ will give identical responses to the queries which make $M_k$ accept.
Since $t$ ranges from $1$ to $l = poly(n)$ and $j$ ranges from $1$ to $2^{\mathsf{O(n^d)}}$, the maximum number of variables which are queried is of order $2^{\mathsf{O(n^d)}}$.
But unlike in the previous lemma, this number is exponential in the size of the input, so how can we be sure that such a Simon function even exists?
The trick is that we can choose the domain size through $D$ and make it large enough to accommodate for a Simon function with this property.

As before, because $f$ is bijective, no two queried variables will produce the same answer. Therefore, there cannot be a bit mask $s$ ($|s| = n^D$) relating any pair of the $2^{\mathsf{O(n^d)}}$ queries. These will be the restricted values of $s$. The total number of such values is also of order $2^{\mathsf{O(n^d)}}$, however the total number of possible values is $2^{\mathsf{O(n^D)}}$. Thus, if we simply choose $D$ such that $2^{\mathsf{O(n^D)}} > 2^{\mathsf{O(n^d)}}$ then we can find a Simon function $f'$ which matches the responses of $f$ on the $2^{\mathsf{O(n^d)}}$ queries.

Hence, for this case if we use $f'$ as the function for index $0^n$ we will eliminate half of the possible advice strings. Thus, no matter how $M_k$ behaves we are able to eliminate half of all possible advice strings with our first input of length $2n$.
Clearly this process can be repeated for the next index and so on until the last index. We are effectively halving the number of potentially useful advice strings with each index. Since we are doing this $2^n$ times, to eliminate all possible advice strings we just need to ensure that $2^{\mathsf{O(n^d)}} / 2^{2^n} < 1$ or $2^{\mathsf{O(n^d)}} < 2^{2^n}$. To achieve this, simply choose $n_0$ (recall that $n = k + n_0$) large enough so that the inequality holds.

We therefore have that for all $k$, and for all possible advice strings, there will always be an input to $coSimon(O_d)$ which is decided incorrectly, hence $coSimon(O_d) \not\in \cPnd$.

Note that the same proof would not work for $\cPpoly$. A crucial element in our proof was the fact that we can make $D$ (which determines the size of the domain of each function) to be much larger than $d$ (which determines the length of the advice).
But this is only possible because $d$ is fixed from the very beginning. If the advice length could be any arbitrary polynomial then no matter what constant value of $D$ we decided upon for our oracle, there would always be some $d > D$ and hence some polynomial length of the advice string for which the proof does not work.
A possible ``fix'' would be to make $D$ part of the input in some form, so that it too can increase. So if, say, $D$ was included in the input as a $g(n)$ unary string, where $g$ is some monotonically increasing function, then for sufficiently large $n$, $g(n) > d$. But we immediately notice the problem with this approach. While it is true that in this case the problem cannot be decided in $\cPpoly^{O}$ it would also no longer be decidable in $\cBQP^O$ either. This is because the query complexity of the quantum algorithm becomes $\mathsf{O(n^{g(n)})}$ which is no longer polynomial unless $g$ is the constant function.
Hence, proving separation from $\cPpoly$ seems to require some non-trivial modification of this proof or a totally different technique.
\end{proof}

\noindent Finally, we can prove Theorem~\ref{thm:noGESBQP} by combining the previous results.

\begin{proof}[Proof of Theorem~\ref{thm:noGESBQP}]
Let us first show that, relative to an oracle, the complement of Simon's problem is not contained in $\mathsf{NP/O(n^d)}$.
The oracle $O_d$ will be defined in the exact same way as for the $\cPnd$ case.
The same reasoning as before applies here. Take the $k$'th non-deterministic Turing machine and examine its behaviour for some input $\Braket{1^n, i}$, where $n = k + n_0$ and $n_0$ is chosen as before.
For each index, we tentatively pick a $1$-to-$1$ function and examine what the machine does for each advice of length $\mathsf{O(n^d)}$. If more than half of the advice strings lead to rejection then we keep the bijective function and proceed to the next index. Otherwise we replace it with a Simon function. In this case, for each advice in which the machine accepts, there will be some polynomial-sized path leading to acceptance.
We will pick one accepting path for each advice on which the machine accepts and ensure that the Simon function produces the same responses to the queries on those paths. This reduces the problem to the previous case. We know that for sufficiently large $D$ such a function exists and therefore each index will render half of the possible advice strings useless.
By also choosing $n_0$ large enough we can make sure that all advice strings are eliminated and thus that the problem is incorrectly decided by all non-deterministic Turing machines irrespective of the advice (of length $\mathsf{O(n^d)}$).

For the $\cNPnd$ case, one can use the same proof as in Lemma~\ref{cor:oraclema} to reduce to the $\mathsf{NP/O(n^d)}$ case.
It follows that $coSimon(O_d) \not\in \cNPnd$.
\end{proof}

\section{GES for exact \textsc{BosonSampling} and circuits for the permanent} \label{sect:proofs2}
\noindent To prove Theorem~\ref{thm:noGESsampBQP}, we first need to show a number of results concerning permanents of matrices. The purpose of these results is to eventually show that having an oracle for estimating the squared permanent of a matrix taking values in $\{-1, 0, 1 \}$, yields a polynomial-time algorithm, with random access to $n^{\sf{O}(n)}$ bits of advice, for exactly computing the permanent.
This result together with the assumption that a GES allows the client to sample exactly from the \textsc{BosonSampling} distribution and a result of Bj{\"o}rklund, from \cite{andreas}, will allow us to prove Theorem~\ref{thm:noGESsampBQP}.

Let us first introduce some helpful notation: for a matrix, $A$, we will denote $A^{i,j}$ as the matrix obtained by deleting row $i$ and column $j$ from $A$.
\begin{lemma} \label{lem:helper1}
Let $X = (x_{i,j}) \in \{-1, 0, 1 \}^{n \times n}$. There exists a matrix $Z = (z_{i,j}) \in \{-1, 0, 1 \}^{(n + 2) \times (n + 2)}$ such that:
\begin{itemize}
\item $z_{n+2,n+2} = 0$
\item $Per(Z) = -Per(X)$
\item $Per(Z^{n+2,n+2}) = Per(X^{1,1})$
\end{itemize}
\end{lemma}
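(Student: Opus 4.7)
The plan is to exhibit $Z$ explicitly by bordering $X$ with two extra rows and two extra columns in a sparse pattern, and then verify the three properties by two nested expansions of the permanent. Since the permanent admits a row/column expansion identical in form to the Laplace expansion of the determinant but without sign alternation, if we border $X$ with vectors that are essentially standard basis vectors, then each expansion collapses to a single subminor and the desired identities follow immediately.

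Concretely, I would take
\[
Z \;=\; \begin{pmatrix} X & e_1 & 0 \\ e_1^{\top} & 0 & 1 \\ 0^{\top} & -1 & 0 \end{pmatrix},
\]
where $e_1$ denotes the first standard basis vector of the appropriate length and $0$ an appropriately sized zero vector. By inspection every entry of $Z$ lies in $\{-1, 0, 1\}$ and $z_{n+2, n+2} = 0$, so two of the three required properties are immediate.

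For $Per(Z) = -Per(X)$, expand the permanent along the last row: the only nonzero entry is $z_{n+2, n+1} = -1$, so $Per(Z) = -Per(Z^{n+2, n+1})$. The minor $Z^{n+2, n+1}$ has its last column equal to the standard basis vector supported on row $n+1$, so expanding along that column yields exactly $Per(X)$, giving the claim. For $Per(Z^{n+2, n+2}) = Per(X^{1,1})$, observe that $Z^{n+2, n+2}$ is just $X$ bordered by the row $e_1^{\top}$, the column $e_1$, and a $0$ in the corner. Expanding along the new row selects the single $1$ in column $1$, dropping that column of $X$; expanding the resulting matrix along its last column $e_1$ selects the $1$ in row $1$, dropping that row. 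What remains is precisely $X^{1,1}$.

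The only nontrivial design choice here is the bordering pattern — it is chosen precisely so that both successive expansions terminate in one step, while simultaneously keeping all entries in $\{-1, 0, 1\}$ and producing the required sign on $Per(X)$ via the single $-1$. There is no real obstacle to the argument; once the pattern is fixed, the verifications are mechanical.
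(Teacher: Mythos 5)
Your construction is correct: every entry of your $Z$ lies in $\{-1,0,1\}$, $z_{n+2,n+2}=0$, and both permanent identities check out. Indeed, expanding $Per(Z)$ along the last row picks out only the $-1$ at position $(n+2,n+1)$, the resulting minor has last column $(0,\dots,0,1)^{\top}$ and collapses to $Per(X)$, giving $Per(Z)=-Per(X)$; and the two single-term expansions of $Per(Z^{n+2,n+2})$ along its last row and then the last column of the remaining block yield $Per(X^{1,1})$ exactly as you say. The strategy is the same as the paper's --- exhibit an explicit bordered matrix and verify by row/column expansion of the permanent --- but your gadget is different and somewhat cleaner. The paper's $Z$ embeds $X$ with rows and columns reversed and places \emph{two} nonzero entries in the last row of the border, so its expansion produces two minors, one of which satisfies $Per(Z^{n+2,n+1})=Per(X)-Per(X^{1,1})$ and must cancel against the other to give $-Per(X)$. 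Your bordering by standard basis vectors forces every expansion to terminate in a single term, so no cancellation is needed and the verification is purely mechanical; the single $-1$ supplies the required sign. Both versions prove the lemma; yours would arguably be the simpler one to include.
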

\begin{proof}
Let $Z$ be the following matrix:
\[
Z = \begin{bmatrix} 
	x_{n,n} & x_{n, n-1} & \dots & x_{n,1} & 0 & 0 \\
	x_{n-1,n} & x_{n-1,n-1} & \dots & x_{n-1,1} & 0 & 0 \\
    \vdots & \vdots & & \vdots & \vdots & \vdots \\
	x_{1,n} & x_{1,n-1} & \dots & x_{1,1} & 1 & -1 \\
	0 & 0 & \dots & 1 & 0 & 1 \\
	0 & 0 & \dots & -1 & -1 & 0 \\
    \end{bmatrix}
\]
We can see that $z_{n+2, n+2} = 0$. It is also not difficult to see that $Per(Z^{n+2,n+2}) = Per(X^{1,1})$, through a Laplace expansion.
We now perform a Laplace expansion along the last row of $Z$, to compute its permanent:
\begin{equation}
Per(Z) = - (Per(Z^{n+2, n+1}) + Per(Z^{n+2, n}))
\end{equation}
But $Per(Z^{n+2, n}) = Per(X^{1,1})$ and $Per(Z^{n+2, n+1}) = Per(X) - Per(X^{1,1})$ hence $Per(Z) = -Per(X)$.
\end{proof}

\begin{lemma} \label{lem:helper2}
Let $X = (x_{ij}) \in \{-1, 0, 1 \}^{n \times n}$, $Z = (z_{ij}) \in \{-1, 0, 1 \}^{m \times m}$, $m \geq 2$, such that $z_{m,m} = 0$ and $W = (w_{ij}) \in \{-1, 0, 1 \}^{(m + n -1) \times (m + n - 1)}$ defined as follows:
\[
W = \begin{bmatrix} 
	z_{1,1} & z_{1,2} & \dots & z_{1,m} & 0 & \dots & 0 \\
	z_{2,1} & z_{2,2} & \dots & z_{2,m} & 0 & \dots & 0 \\
    \vdots & \vdots & & \vdots & \vdots & & \vdots \\
	z_{m-1,1} & z_{m-1,2} & \dots & z_{m-1,m} & 0 & \dots & 0 \\
	z_{m,1} & z_{m,2} & \dots & x_{1,1} & x_{1,2} & \dots & x_{1,n} \\
	0 & 0 & \dots & x_{2,1} & x_{2,2} & \dots & x_{2,n} \\	
    \vdots & \vdots & & \vdots & \vdots & & \vdots \\
	0 & 0 & \dots & x_{n,1} & x_{n,2} & \dots & x_{n,n} \\	
    \end{bmatrix}
\]
Then, it is the case that:
\begin{equation}
Per(W) = Per(Z) Per(X^{1,1}) + Per(Z^{m,m}) Per(X)
\end{equation}
\end{lemma}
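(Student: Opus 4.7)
The plan is to compute $Per(W)$ directly from the definition as a sum over permutations, exploiting the block structure of $W$ to drastically restrict which permutations give nonzero contributions. Partition the rows and columns of $W$ into three groups: $R_1 = C_1 = \{1, \dots, m-1\}$ (the ``$Z$-part''), $R_2 = C_2 = \{m\}$ (the shared ``hinge''), and $R_3 = C_3 = \{m+1, \dots, m+n-1\}$ (the ``$X$-part''). By inspection of $W$, any row in $R_1$ has zeros in all columns of $C_3$, and any row in $R_3$ has zeros in all columns of $C_1$. So for a permutation $\sigma$ to contribute a nonzero term, rows in $R_1$ must map into $C_1 \cup C_2$ and rows in $R_3$ must map into $C_2 \cup C_3$. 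A counting argument (comparing sizes, using that $|R_1|=|C_1|=m-1$) then shows there are exactly two possibilities, distinguished by where row $m$ goes.

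First, I would handle Case I, where $\sigma(m) \in C_1 \cup C_2$. Then $R_1 \cup \{m\}$ maps bijectively onto $C_1 \cup C_2$ and $R_3$ maps bijectively onto $C_3$. The sum over the $R_3 \to C_3$ part is exactly $Per(X^{1,1})$, since the corresponding $(n-1) \times (n-1)$ submatrix of $W$ is the matrix $X$ with its first row and column removed. The sum over the $R_1 \cup \{m\} \to C_1 \cup C_2$ part is the permanent of the $m \times m$ block $Z'$, which agrees with $Z$ everywhere except at the $(m,m)$ entry, where $z_{m,m} = 0$ is replaced by $x_{1,1}$. Expanding $Per(Z')$ along row $m$ and using $z_{m,m}=0$ gives $Per(Z') = Per(Z) + x_{1,1}\, Per(Z^{m,m})$. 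So Case I contributes $[Per(Z) + x_{1,1}\,Per(Z^{m,m})] \cdot Per(X^{1,1})$.

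Next I would handle Case II, where $\sigma(m) \in C_3$. A short size argument shows that then $R_1$ maps bijectively to $C_1$ (contributing $Per(Z^{m,m})$) and $R_3$ maps bijectively to $(C_2 \cup C_3) \setminus \{\sigma(m)\}$. Writing $\sigma(m) = m+l$ for $l \in \{1, \dots, n-1\}$, the remaining sum over the $R_3$ part is exactly $Per(X^{1,l+1})$, the permanent of $X$ with row $1$ and column $l+1$ removed, since the relabelling $m+k \leftrightarrow k+1$ on rows and $m \leftrightarrow 1$, $m+j \leftrightarrow j+1$ on columns turns that block into the appropriate submatrix of $X$. Summing over $l$ yields $Per(Z^{m,m}) \cdot \sum_{l=1}^{n-1} x_{1,l+1}\, Per(X^{1,l+1})$.

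Finally, I would add Cases I and II and factor out $Per(Z^{m,m})$ from the cross-terms:
\begin{equation*}
Per(W) = Per(Z)\,Per(X^{1,1}) + Per(Z^{m,m}) \cdot \sum_{j=1}^{n} x_{1,j}\, Per(X^{1,j}).
\end{equation*}
The inner sum is precisely the Laplace expansion of $Per(X)$ along its first row, which gives $Per(X)$ and completes the proof. The main obstacle is purely bookkeeping: correctly justifying the size-based restriction that forces the ``row $m$'' dichotomy, and keeping indices straight when identifying the submatrix sums with permanents of $Z$, $Z^{m,m}$, $X$, and $X^{1,j}$; the modified $(m,m)$ entry in the top-left block is the only subtle point, and multilinearity of the permanent dispatches it cleanly.
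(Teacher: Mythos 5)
Your proof is correct, but it takes a genuinely different route from the paper's. The paper proves the identity by induction on $m$: it Laplace-expands $Per(W)$ along the first row, applies the induction hypothesis to the submatrices $W^{1,i}$ for $i<m$ (which have the same block form with a smaller $Z$), and then handles the leftover term $Per(W^{1,m})$ by recognising it as yet another instance of the same block structure with relabelled $Z$ and $X$ parts. You instead argue directly from the permutation-sum definition: the zero blocks force every contributing permutation to respect the partition $R_1, \{m\}, R_3$, the position of $\sigma(m)$ splits the sum into exactly two cases, and each case factors into a product of permanents of identifiable submatrices. I checked the details: in Case I the top-left $m\times m$ block is $Z$ with $z_{m,m}=0$ replaced by $x_{1,1}$, and your row-$m$ expansion $Per(Z') = Per(Z) + x_{1,1}Per(Z^{m,m})$ is valid precisely because $z_{m,m}=0$; in Case II the identification of the $R_3$ block with $X^{1,l+1}$ is right, and the final recombination $\sum_{j=1}^n x_{1,j}Per(X^{1,j}) = Per(X)$ closes the argument. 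Since the permanent carries no signs, the factorisation of the sum over block-respecting permutations into products of permanents is unproblematic. Your approach buys a shorter, non-inductive, and arguably more transparent proof that makes visible \emph{why} exactly two terms appear (the two ways the hinge row $m$ can be used); the paper's induction stays entirely within the language of Laplace expansions at the cost of a fiddlier bookkeeping step identifying $W^{1,m}$ as a smaller instance of the same form. Both are sound.
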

\begin{proof}
We will prove this by induction over $m$. For the $m=2$ case we have:
\[
W = \begin{bmatrix} 
	z_{1,1} & z_{1,2} & 0 & \dots & 0 \\
	z_{2,1} & x_{1,1} & x_{1,2} & \dots & x_{1,n} \\
	0 & x_{2,1} & x_{2,2} & \dots & x_{2,n} \\	
    \vdots & \vdots & \vdots & & \vdots \\
	0 & x_{n,1} & x_{n,2} & \dots & x_{n,n} \\	
    \end{bmatrix}
\]
By doing a Laplace expansion along the first row of $W$, we get:
\begin{equation}
Per(W) = z_{1,1} Per(X) + z_{1,2} z_{2,1} Per(X^{1,1})
\end{equation}
Now note that:
\[
Z = \begin{bmatrix}
	z_{1,1} & z_{1,2} \\
	z_{2,1} & 0 \\
\end{bmatrix}
\]
So $Per(Z) = z_{1,2} z_{2,1}$ and $Per(Z^{2,2}) = z_{1,1}$, therefore:
\begin{equation}
Per(W) = Per(Z) Per(X^{1,1}) + Per(Z^{2,2}) Per(X)
\end{equation}
We now assume the relation is true for $m - 1$ and prove it for $m$. To do this, we will first Laplace expand the permanent of $W$ along the first row:
\begin{equation}
Per(W) = \sum\limits_{i=1}^{m-1} z_{1,i} Per(W^{1,i}) + z_{1,m} Per(W^{1,m})
\end{equation}
The reason for separating the terms this way, is because $W^{1,i}$, with $i < m$, is of the same form as $W$ and we can therefore apply the induction hypothesis. Doing so yields:
\begin{equation}
Per(W) = \sum\limits_{i=1}^{m-1} z_{1,i} (Per(Z^{1,i}) Per(X^{1,1}) +  Per(Z^{(1,m),(i,m)}) Per(X)) + z_{1,m} Per(W^{1,m})
\end{equation}
Where $Z^{(1,m),(i,m)}$ is obtained from $Z$ by deleting rows $1$ and $m$ and columns $i$ and $m$.
Taking common factors we get:
\begin{equation}
Per(W) = Per(X^{1,1}) \sum\limits_{i=1}^{m-1} z_{1,i}Per(Z^{1,i})
+  Per(X)\sum\limits_{i=1}^{m-1} z_{1,i}Per(Z^{(1,m),(i,m)}) + z_{1,m} Per(W^{1,m})
\end{equation}
But notice that:
\begin{equation}
\sum\limits_{i=1}^{m-1} z_{1,i}Per(Z^{(1,m),(i,m)}) = Per(Z^{m,m})
\end{equation}
since it is a Laplace expansion along the first row of $Z^{m,m}$. This leads to:
\begin{equation} \label{eqn:intermediate}
Per(W) = Per(X^{1,1}) \sum\limits_{i=1}^{m-1} z_{1,i}Per(Z^{1,i}) + Per(X)Per(Z^{m,m}) + z_{1,m} Per(W^{1,m})
\end{equation}

\noindent The matrix $W^{1,m}$ is of the same form as $W$:
\[
W^{1,m} = \begin{bmatrix} 
	z_{2,1} & z_{2,2} & \dots & z_{2,m - 1} & 0 & \dots & 0 \\
    \vdots & \vdots & & \vdots & \vdots & & \vdots \\
	z_{m-1,1} & z_{m-1,2} & \dots & z_{m-1,m - 1} & 0 & \dots & 0 \\
	z_{m,1} & z_{m,2} & \dots & z_{m,m-1} & x_{1,2} & \dots & x_{1,n} \\
	0 & 0 & \dots & 0 & x_{2,2} & \dots & x_{2,n} \\	
    \vdots & \vdots & & \vdots & \vdots & & \vdots \\
	0 & 0 & \dots & 0 & x_{n,2} & \dots & x_{n,n} \\	
    \end{bmatrix}
\]
We can see this by taking:
\[
Z_{W^{1,m}} = 
\begin{bmatrix} 
	z_{2,1} & z_{2,2} & \dots & z_{2,m - 1}  \\
    \vdots & \vdots & & \vdots \\
	z_{m-1,1} & z_{m-1,2} & \dots & z_{m-1,m - 1} \\
	z_{m,1} & z_{m,2} & \dots & 0 \\
\end{bmatrix}
\quad \quad
X_{W^{1,m}} = 
\begin{bmatrix} 
	z_{m,m-1} & x_{1,2} & \dots & x_{1,n} \\
	0 & x_{2,2} & \dots & x_{2,n} \\	
    \vdots & \vdots & & \vdots\\
	0 & x_{n,2} & \dots & x_{n,n} \\	
\end{bmatrix}
\]
Together with the induction hypothesis this gives us:
\begin{equation}
Per(W^{1,m}) = Per(Z_{W^{1,m}}) Per(X_{W^{1,m}}^{1,1}) + Per(Z_{W^{1,m}}^{m-1,m-1}) Per(X_{W^{1,m}})
\end{equation}
Now note that $Per(X_{W^{1,m}}) = z_{m,m-1} Per(X_{W^{1,m}}^{1,1})$ and $Per(X_{W^{1,m}}^{1,1}) = Per(X^{1,1})$, hence:
\begin{equation}
Per(W^{1,m}) = Per(X^{1,1})(Per(Z_{W^{1,m}})  + z_{m,m-1} Per(Z_{W^{1,m}}^{m-1,m-1})
\end{equation}
But the term in parenthesis is $Per(Z^{1,m})$ so:
\begin{equation}
Per(W^{1,m}) = Per(X^{1,1})Per(Z^{1,m})
\end{equation}
By substituting this into Equation~\ref{eqn:intermediate}, we get:
\begin{equation}
Per(W) = Per(X^{1,1}) \sum\limits_{i=1}^{m-1} z_{1,i}Per(Z^{1,i})
+ Per(X)Per(Z^{m,m}) + z_{1,m} Per(X^{1,1})Per(Z^{1,m})
\end{equation}
After grouping terms:
\begin{equation}
Per(W) = Per(X^{1,1}) \sum\limits_{i=1}^{m} z_{1,i}Per(Z^{1,i}) + Per(X)Per(Z^{m,m})
\end{equation}
But:
\begin{equation}
\sum\limits_{i=1}^{m} z_{1,i}Per(Z^{1,i}) = Per(Z)
\end{equation}
Thus:
\begin{equation}
Per(W) = Per(Z) Per(X^{1,1}) + Per(Z^{m,m})Per(X)
\end{equation}
This concludes the proof.
\end{proof}

\noindent Using the above lemmas, we can now show the following:
\begin{theorem} \label{thm:gadget}
Let $\mathcal{O}$ be an oracle that, given a matrix $X \in \{-1, 0, 1 \}^{n \times n}$, outputs a number $\mathcal{O}(X)$ such that:
\begin{equation}
\frac{Per(X)^2}{g} \leq \mathcal{O}(X) \leq g Per(X)^2
\end{equation}
where $g \in [1, poly(n)]$. Then, for any $X \in \{-1, 0, 1 \}^{n \times n}$ there exists a polynomial time algorithm for computing $Per(X)$, which has random access to $n^{\mathsf{O}(n)}$ bits of advice and making $poly(n)$ queries to $\mathcal{O}$.
\end{theorem}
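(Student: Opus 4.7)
The plan is to turn the approximate oracle $\mathcal{O}$ for $Per^2$ into an exact computation of $Per$ by using Lemmas~\ref{lem:helper1} and~\ref{lem:helper2} as algebraic gadgets, together with polynomially many random-access lookups into the $n^{O(n)}$-bit advice. Concretely, given any auxiliary matrix $X_0 \in \{-1,0,1\}^{n \times n}$, first apply Lemma~\ref{lem:helper1} to $X_0$ to build a $(n+2)\times(n+2)$ matrix $Z$ with $z_{n+2,n+2}=0$, $Per(Z)=-Per(X_0)$ and $Per(Z^{n+2,n+2})=Per(X_0^{1,1})$; then apply Lemma~\ref{lem:helper2} with this $Z$ and the target $X$, obtaining a $\{-1,0,1\}$ matrix $W_{X_0}$ of size $O(n)$ whose permanent is
\[ Per(W_{X_0}) \;=\; -Per(X_0)\,Per(X^{1,1}) \;+\; Per(X_0^{1,1})\,Per(X). \]
As $X_0$ varies, the pair $(\alpha,\beta)=(-Per(X_0),Per(X_0^{1,1}))$ ranges over a rich set of small-integer values, so we can efficiently build polynomial-size $\{-1,0,1\}$ matrices whose permanents realise prescribed integer linear combinations $\alpha\,Per(X^{1,1})+\beta\,Per(X)$.

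With this gadget, the algorithm first queries $\mathcal{O}$ on $X$, on $X^{1,1}$, and on several $W_{X_0}$, extracting multiplicative approximations to $Per(X)^2$, $Per(X^{1,1})^2$ and $(\alpha Per(X^{1,1})+\beta Per(X))^2$ for various $(\alpha,\beta)$. Expanding the squares and combining three such estimates isolates an approximation to the cross-term $2\alpha\beta\,Per(X)\,Per(X^{1,1})$, whose sign determines the relative sign of $Per(X)$ and $Per(X^{1,1})$. Iterating the extraction --- either down a Laplace expansion of $X$, or across a polynomial family of shifts $X+tY$ realised as $\{-1,0,1\}$ matrices through repeated use of the gadget --- reduces computing $Per(X)$ exactly to determining the magnitude of $Per(X)$ from $\mathcal{O}(X)$ together with a small number of sign bits. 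The $n^{O(n)}$ advice budget is used via random access to supply those sign bits, equivalently to tabulate the exact value of $Per$ on a reference family of matrices indexed by a short fingerprint of $X$; only polynomially many $O(n\log n)$-bit lookups are needed per input, which fits comfortably into polynomial time.

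The main obstacle will be the multiplicative nature of $\mathcal{O}$'s error: the absolute slack in a single oracle estimate can be as large as $(g-1)\,Per(W_{X_0})^2$, which in turn is of the same order as $\alpha^2 Per(X^{1,1})^2+\beta^2 Per(X)^2$ and can therefore swamp the cross-term $2\alpha\beta\,Per(X)\,Per(X^{1,1})$ that carries the sign information. Overcoming this requires choosing the family of $(\alpha,\beta)$'s so that, for at least one gadget, $\alpha Per(X^{1,1})$ and $\beta Per(X)$ are of comparable magnitude and --- depending on the target sign --- either nearly cancel or nearly saturate the triangle inequality, forcing the cross-term to dominate the oracle error by a $1/poly(n)$ factor. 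Whenever this calibration still does not suffice, the advice is invoked to resolve the ambiguity, and this is precisely where the exponential advice budget is needed: the sign disambiguation has to be correct in the worst case over all $X \in \{-1,0,1\}^{n\times n}$ that the algorithm might be run on, which a polynomial-size advice string could not in general support.
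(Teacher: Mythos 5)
You have the right gadgets (Lemmas~\ref{lem:helper1} and~\ref{lem:helper2}) and you correctly identify the central obstacle --- that the multiplicative slack of $\mathcal{O}$ swamps the cross-term $2\alpha\beta\,Per(X)Per(X^{1,1})$ --- but your resolution of it does not work, and it misses the one idea that makes the paper's proof go through: \emph{a multiplicative approximation is exact at zero}, i.e.\ $\mathcal{O}(W)=0$ if and only if $Per(W)=0$. The paper never tries to read a nonzero value (or a sign) off the oracle. Instead, the advice is a table, depending only on $n$, of all $(n+2)\times(n+2)$ matrices $Z$ over $\{-1,0,1\}$ with bottom-right entry $0$ and $Per(Z^{n+2,n+2})\neq 0$, stored together with the exact ratios $f=Per(Z)/Per(Z^{n+2,n+2})$ and sorted by $f$; Lemma~\ref{lem:helper1} guarantees this table contains an entry with $f=-Per(X)/Per(X^{1,1})$, and a binary search driven by oracle calls on the gadgets $W$ of Lemma~\ref{lem:helper2} locates the entry where $Per(W)=Per(Z^{m,m})\bigl(Per(X)+f\,Per(X^{1,1})\bigr)$ vanishes exactly. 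Then $Per(X)=-f\,Per(X^{1,1})$, and $Per(X^{1,1})$ is computed by recursing down to the $1\times 1$ base case, so the exact integer value propagates upward without ever decoding a nonzero oracle answer.

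Your proposal has two concrete failures. First, the fallback of using the advice to ``supply sign bits'' for the input at hand is unavailable: the advice is non-uniform in $n$ only, and $n^{O(n)}=2^{O(n\log n)}$ bits cannot index the $3^{n^2}$ matrices $X\in\{-1,0,1\}^{n\times n}$, so no ``fingerprint of $X$'' scheme can tabulate per-input signs; the exponential advice budget is spent on enumerating all small gadget matrices and their permanent ratios, not on resolving input-specific ambiguities. Second, even granting the signs, you never explain how to recover the exact integer $Per(X)$ (which can be as large as $n!$) from $\mathcal{O}(X)\in[Per(X)^2/g,\,g\,Per(X)^2]$: for large $|Per(X)|$ that multiplicative window contains many perfect squares, so ``determining the magnitude of $Per(X)$ from $\mathcal{O}(X)$'' is not a well-defined final step. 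Your instinct to engineer near-cancellation of $\alpha\,Per(X^{1,1})$ against $\beta\,Per(X)$ is pointing at the right phenomenon; you need to push it to \emph{exact} cancellation, detected exactly through the oracle, and combine it with the recursive reconstruction $Per(X)=-f\,Per(X^{1,1})$.
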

\begin{proof}
The theorem shows that having an oracle for computing a multiplicative approximation for the squared permanent of a matrix, implies the existence of a polynomial time algorithm, with $n^{\sf{O}(n)}$ bits of advice, that can compute the permanent exactly.

The proof of this theorem is inspired from a similar result of Aaronson and Arkhipov (see Theorem 4.3 from \cite{bosonsampling}). In that case, the oracle was outputting a multiplicative approximation of the squared permanent of an arbitrary \emph{real} matrix. In our case, however, the matrices are restricted to entries from $\{-1, 0, 1\}$, which means that we cannot directly use that result.

We prove the theorem by induction. For the case of $n=1$ the algorithm simply outputs $X$. Suppose now that we have an algorithm  for computing the permanents of $(n-1) \times (n-1)$ matrices with entries from $\{-1, 0, 1\}$. We will use this algorithm to compute the permanent of $X$. 
Firstly, if $\mathcal{O}(X) = 0$, then $Per(X) = 0$ and we are done. Additionally, we are going to use the oracle to check if any of the $(n - 1) \times (n - 1)$ minors of $X$ are non-zero. If all of them are zero, then $Per(X) = 0$ again and we are done. So let's assume that $Per(X) \neq 0$ and $Per(X^{1,1}) \neq 0$\footnote{The permanent is invariant under permutations of rows and columns. Thus, if $X$ has a non-zero minor, we can simply permute the columns of $X$, so that $X^{1,1}$ is that minor.}. 

We know from Lemma~\ref{lem:helper2}, that if we take a matrix $Z = (z_{ij}) \in \{-1, 0, 1 \}^{m \times m}$, $m \geq 2$, such that $z_{m,m} = 0$ and then construct:
\[
W = \begin{bmatrix} 
	z_{1,1} & z_{1,2} & \dots & z_{1,m} & 0 & \dots & 0 \\
	z_{2,1} & z_{2,2} & \dots & z_{2,m} & 0 & \dots & 0 \\
    \vdots & \vdots & & \vdots & \vdots & & \vdots \\
	z_{m-1,1} & z_{m-1,2} & \dots & z_{m-1,m} & 0 & \dots & 0 \\
	z_{m,1} & z_{m,2} & \dots & x_{1,1} & x_{1,2} & \dots & x_{1,n} \\
	0 & 0 & \dots & x_{2,1} & x_{2,2} & \dots & x_{2,n} \\	
    \vdots & \vdots & & \vdots & \vdots & & \vdots \\
	0 & 0 & \dots & x_{n,1} & x_{n,2} & \dots & x_{n,n} \\	
    \end{bmatrix}
\]
we have that:
\begin{equation}
Per(W) = Per(Z^{m,m}) Per(X) + Per(Z) Per(X^{1,1})
\end{equation}
If $Per(W) = 0$ and $Per(Z^{m,m}) \neq 0$, then:
\begin{equation} \label{eqn:recursion}
Per(X) = - Per(X^{1,1}) \frac{Per(Z)}{Per(Z^{m,m})}
\end{equation}
From Lemma~\ref{lem:helper1}, we know that for any $n \times n$ matrix $X$, there exists an $(n+2) \times (n+2)$ matrix $Z$, such that $z_{n+2,n+2} = 0$, $Per(Z) = - Per(X)$ and $Per(Z^{n+2, n+2}) = Per(X^{1,1})$. If one used such a $Z$ in the construction of $W$, then it is immediate that $Per(W) = 0$ and that $Per(Z^{n+2,n+2}) \neq 0$.
The algorithm will search for such a $Z$, construct the corresponding $W$ and use the oracle to test if $Per(W) = 0$. If the permanent of $W$ is zero, then one can compute the permanent of $X$ using Equation~\ref{eqn:recursion}.

But how do we search for $Z$ and, furthermore, how do we compute $Per(Z)/Per(Z^{n+2,n+2})$? This is where we make use of advice. Note that since $Z \in  \{-1, 0, 1 \}^{(n+2) \times (n+2)}$, we have that:
\begin{equation}
-(n+2)! \leq Per(Z) \leq (n+2)!
\end{equation}
hence, there are at most $n^{\mathsf{O}(n)}$ possible values for the permanents of $Z$ matrices. Similarly, there are at most $n^{\mathsf{O}(n)}$ possible values for the permanents of the $Z^{n+2,n+2}$ minors of $Z$ matrices.

The advice, to our algorithm, will consist of tuples $(Z_i, Per(Z_i^{n+2,n+2}), f_i = Per(Z_i)/Per(Z_i^{n+2,n+2}))$, comprising of a matrix $Z_i$ together with the permanent of its top left $(n+1) \times (n+1)$ minor and the ratio between that matrix's permanent and the permanent of its top left minor, with $i \leq n^{cn}$, for some constant $c > 0$. Here $Z_i \in  \{-1, 0, 1 \}^{(n+2) \times (n+2)}$, with the bottom right entry being $0$ and $Per(Z_i^{n+2,n+2}) \neq 0$. The matrices in the tuples are such that all possible values for the fraction $f_i = Per(Z_i)/Per(Z_i^{n+2,n+2})$ are covered. From the above discussion, it's clear that there will be at most $n^{\mathsf{O}(n)}$ such tuples. Furthermore, the tuples are sorted in ascending order with respect to those fractions.

For a given matrix $X$, our algorithm should search through this advice in order to find a matrix $Z_i$ such that $\mathcal{O}(W_i) = 0$, where $W_i$ is constructed from $X$ and $Z_i$ as in Lemma~\ref{lem:helper2}. When such a matrix is found, we have that:
\begin{equation}
Per(X) = -Per(X^{1,1}) f_i
\end{equation}
But $f_i$ is given in the advice tuple and $Per(X^{1,1})$ is computed recursively by our algorithm, hence we have computed the permanent of $X$.

To find the matrix $Z_i$ we will perform a binary search over the advice. Suppose that $i$ ranges from $1$ to $l = n^{\mathsf{O}(n)}$. Additionally, let $\alpha_i = Per(Z_i^{n+2,n+2})$, so that:
\begin{equation}
Per(W_i) = \alpha_i (Per(X) + f_i Per(X^{1,1}))
\end{equation}
This means that computing $\sqrt{\mathcal{O}(W_i)}/|\alpha_i|$ gives us a multiplicative approximation for $|Per(X) + f_i Per(X^{1,1})|$. Because the $f_i$ values are sorted in ascending order, this means that the function:
\begin{equation}
h(i) = Per(X) + f_i Per(X^{1,1})
\end{equation}
is monotonically increasing as a function of $i$ and furthermore that there is a unique value $i$ such that $h(i) = 0$. In our case, however, we have a multiplicative approximation for $|h(i)|$, which we denote as $t(i) = \sqrt{\mathcal{O}(W_i)}/|\alpha_i|$. This function will be monotonically decreasing between $1$ and $i$ and increasing between $i$ and $l$. 
We look for $i$ using binary search as follows: compute $t(v)$ and $t(w)$ for the middle two points, $v$ and $w$ of the interval $[1,l]$. If either of them is $0$, then we are done. Otherwise, if $t(v) < t(w)$, then search the interval $[2,v]$, otherwise the interval $[w,l]$. Repeat this recursively until the minimum is found.

Given that the advice is of length $n^{\mathsf{O}(n)}$, the algorithm will query it (and consequently $\mathcal{O}$ as well) at most $\mathsf{O}(n \log n)$ times. Additionally, the construction of each $W_i$ takes time $\mathsf{O}(n^2)$ and since this is done at most $\mathsf{O}(n \log n)$ times, the complexity of this step is $\mathsf{O}(n^3 \log n)$. Finally, the algorithm performs recursive calls to itself (in order to compute $Per(X^{1,1})$) and if we add up the running times of each step we find that the total runtime will be $\mathsf{O}(n^4 \log n)$.
\end{proof}

\begin{theorem} \label{thm:sampcirc}
If a $\cBPPrpoly$ algorithm can sample exactly from the \textsc{BosonSampling} distribution then for any matrix $X \in \{-1, 0, 1 \}^{n \times n}$, there exist circuits of size $2^{n - \mathsf{\Omega} \left(\frac{n}{\log n} \right)}$, making polynomially-sized queries to an $\cNP$ oracle, for computing $Per(X)$.
\end{theorem}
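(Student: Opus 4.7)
The plan is to convert the hypothesised classical sampler for exact \textsc{BosonSampling} into the approximate-squared-permanent oracle required by Theorem~\ref{thm:gadget}, and then to realise the resulting exact-permanent procedure as a non-uniform $\cNP$-oracle circuit. Concretely, I first apply Stockmeyer's approximate counting method (recalled in Section~\ref{prelim:samp}) to the polynomial-time indicator function $f(A,r) = \mathbf{1}[F(A,r) = \ket{1_n}]$ of the assumed sampler $F$. The $\{-1,0,1\}$-to-\textsc{BosonSampling} embedding of~\cite{bosonsampling} turns any $X \in \{-1,0,1\}^{n \times n}$ into a polynomially-sized linear-optics matrix $A$ whose probability of producing the collision-free outcome $\ket{1_n}$ equals $|\text{Per}(X)|^2$ up to a known polynomially-bounded factor. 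Combining these ingredients, I obtain a $\cBPP^{\cNP}/\text{rpoly}$ procedure that, on input $X$, produces a multiplicative $(1+1/\text{poly}(n))$-approximation to $|\text{Per}(X)|^2$.

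Next, fixing the polynomial-length random bits and the randomised advice string of this $\cBPP$ machine collapses it, for each input length, into a deterministic polynomial-size oracle circuit with polynomially-sized $\cNP$ queries; this is exactly the oracle $\mathcal{O}$ that Theorem~\ref{thm:gadget} requires. Invoking Theorem~\ref{thm:gadget} then supplies a polynomial-time algorithm that, using random access to $n^{O(n)}$ bits of advice and $\text{poly}(n)$ calls to $\mathcal{O}$, computes $\text{Per}(X)$ exactly. I would realise this algorithm as a non-uniform circuit family $\{C_n\}$ by hard-wiring the Theorem~\ref{thm:gadget} advice into $C_n$ and replacing each $\mathcal{O}$-call by the fixed-randomness sub-circuit above, so that $C_n$'s $\cNP$-oracle queries inherit from that sub-circuit and remain polynomially sized.

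The main obstacle will be tightening the size bound to $2^{n - \Omega(n/\log n)}$: hard-wiring the full $n^{O(n)}$-bit Theorem~\ref{thm:gadget} advice at dimension $n$ directly yields circuits of size $2^{O(n \log n)}$, which exceeds not only the target but even $2^n$. To close the gap, I expect one should not apply Theorem~\ref{thm:gadget} to the $n \times n$ input directly but only to submatrices of some smaller dimension $m$, whose associated advice weighs in at $2^{O(m \log m)}$ bits, and then reconstruct $\text{Per}(X)$ from the submatrix permanents via a combinatorial identity whose outer enumeration is sub-exponential. The delicate step will be picking $m$ and the identity so that both the inner advice table and the outer enumeration fit within the $2^{n - \Omega(n/\log n)}$ budget while the per-call $\cNP$ queries stay polynomially sized.
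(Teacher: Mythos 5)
Your outline follows the same skeleton as the paper's proof --- Stockmeyer on the sampler's indicator function to implement the oracle of Theorem~\ref{thm:gadget}, then a non-uniform circuit with the gadget advice hard-wired --- and you correctly diagnose that applying Theorem~\ref{thm:gadget} at dimension $n$ blows the size budget. But the proof is not complete as written, because the ``combinatorial identity'' you defer to is precisely the load-bearing ingredient. The paper uses Bj{\"o}rklund's result \cite{andreas}: for any $k \leq n$, $Per(X)$ equals a linear combination of $poly(n)\,2^{n-k}$ permanents of $k \times k$ matrices (each efficiently computable from $X$, though not in general submatrices of $X$). Choosing $k = c\,n/\log n$ makes the outer enumeration cost $poly(n)\,2^{n-k} = 2^{n - \Omega(n/\log n)}$ while the shared advice table for the $k \times k$ instances is $k^{\mathsf{O}(k)} = 2^{O(n)}$ --- in fact $k^{\mathsf{O}(k)} \ll 2^{n-\Omega(n/\log n)}$ for suitable $c$ --- so both pieces fit the budget. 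Without identifying this decomposition and this choice of $k$, the size bound in the statement is not established.

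There is a second, unflagged gap in your reduction to the oracle $\mathcal{O}$: you propose to ``fix the randomised advice string'' of the $\cBPPrpoly$ sampler. This does not work, because the exact-sampling guarantee $\sum_y q_y \Pr\left[\mathcal{A}(A_M,y) \text{ outputs } \ket{1_n}\right] = p$ holds only \emph{on average} over the advice distribution $\mathcal{D}_k$; for a single fixed advice string $y$ the conditional output distribution need not resemble the \textsc{BosonSampling} distribution at all, so no multiplicative guarantee on the estimate survives the fixing. The paper instead places $k^{\mathsf{O}(k)}$ weighted samples from $\mathcal{D}_k$ (together with their probabilities $q_y$) into the non-uniform advice, so that the truncated weighted sum $p_{est}$ is a multiplicative approximation of $p$ whenever $p \neq 0$ (using that the smallest nonzero value of $p$ is $1/k^{\mathsf{O}(k)}$), and then runs Stockmeyer on the resulting double sum over advice samples and random strings. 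This is a substantial part of why the advice is $k^{\mathsf{O}(k)}$-sized, and it is another reason the gadget must be invoked at dimension $k = O(n/\log n)$ rather than $n$. Fixing the internal randomness of the resulting $\cBPP^{\cNP}$ estimator, by contrast, is fine via a union bound over the finitely many $k \times k$ inputs.
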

\begin{proof}
The starting point for our proof is a result by Bj{\"o}rklund \cite{andreas}. He showed that, for $k \leq n$, the permanent of an $n \times n$ matrix, $X$, can be expressed as a linear combination of $poly(n) 2^{n - k}$ permanents of $k \times k$ matrices. It should be noted that these matrices are not necessarily minors of the original matrix. Nevertheless, each $k \times k$ matrix can be computed efficiently given the input matrix, $X$.

Our task will be to compute all of these $poly(n) 2^{n - k}$ permanents and then perform the linear combination so as to arrive at the permanent of $X$. 
We will use the result of Theorem~\ref{thm:gadget} together with the fact that there exists a $\cBPPrpoly$ algorithm for exact \textsc{BosonSampling}, to show that the permanent of a $k \times k$ matrix can be computed in polynomial time, using random access to $k^{\mathsf{O}(k)}$ bits of advice and polynomially-sized queries to an $\cNP$ oracle.
Crucially, the $k^{\mathsf{O}(k)}$-sized advice will be the same for all $k \times k$ matrices. This means that all permanents can be computed in $poly(n) 2^{n - k}$ time with access to $k^{\mathsf{O}(k)}$ bits of advice.
The explicit value of $k$, as a function of $n$, will be chosen later.

Consider a $k \times k$ matrix, $M$, and a value $\epsilon > 0$, to be chosen later. We embed $\epsilon M$, a scaled version of $M$, as a submatrix of a \textsc{BosonSampling} matrix $A_M$. In other words, $A_M \in \mathbb{C}^{m \times k}$, with $m = poly(k)$ (see \cite{bosonsampling} for more details).
We then have that the probability of detecting one photon in each output mode, a state which we denote as $\ket{1}$, is:
\begin{equation}
p = Per(\epsilon M)^2 = {\epsilon}^{2k} \cdot Per(M)^2
\end{equation}
Since $Per(M) \leq k!$, to ensure that $p \leq 1$, it suffices to set $\epsilon = 1/k$.

If a $\cBPPrpoly$ machine can simulate the output distribution of a \textsc{BosonSampling} instance, $A_M$, that means that:
\begin{equation} \label{eqn:p}
\sum_{y} q_y Pr( \mathcal{A}(A_M, y) \text{ outputs } \ket{1}) = p
\end{equation}
where $\mathcal{A}$ is a $\cBPP$ algorithm and $y$ is the $\mathsf{rpoly}$ advice string, of size polynomial in $k$, drawn from the distribution $\mathcal{D}_{k} = \{ q_y \}_y$. Note that $\mathcal{D}_k$ only depends on $k$.
If we can estimate $p$ up to multiplicative error in polynomial time (potentially with the help of an $\cNP$ oracle and $k^{\mathsf{O}(k)}$ bits of advice) then we will effectively be simulating the oracle $\mathcal{O}$ from Theorem~\ref{thm:gadget}.

To do this, first note that if $Per(M) \neq 0$, the smallest possible value of $p$ is $1/k^{\mathsf{O}(k)}$. We will therefore consider our advice string to consist of $k^{\mathsf{O}(k)}$ samples from $\mathcal{D}_k$, along with their associated probabilities\footnote{The fact that the advice has size $k^{\mathsf{O}(k)}$ will ensure that, if $p$ is non-zero, then a $y$ such that $Pr( \mathcal{A}(A_M, y) \text{ outputs } \ket{1}) > 0$ is overwhelmingly likely to be sampled when we generate the advice.}. Denote the set of these samples as $S$. This allows us to define:
\begin{equation}
p_{est} = \sum_{y \in S} q_y Pr( \mathcal{A}(A_M, y) \text{ outputs } \ket{1}) 
\end{equation}
as a multiplicative estimate for $p$. But $\mathcal{A}$ is a $\cBPP$ algorithm, which means that we can view it as a polynomial-time function, $f_{\mathcal{A}}(A_M, y, r)$ which receives as input (apart from $A_M$ and $y$) a random string $r \in \{0, 1\}^{l(k)}$, for some polynomial $l$. The function will output either $1$, corresponding to the cases where $\mathcal{A}$ outputs $\ket{1}$, or $0$, corresponding to the cases where $\mathcal{A}$ produces some other output. We therefore have that:
\begin{equation}
Pr( \mathcal{A}(A_M, y) \text{ outputs } \ket{1}) = \frac{1}{2^{l(k)}} \sum_{r \in \{0, 1\}^{l(k)}} f_{\mathcal{A}}(A_M, y, r)
\end{equation}
and this leads to our estimate of $p$ being:
\begin{equation}
p_{est} = \frac{1}{2^{l(k)}} \sum_{y \in S} \sum_{r \in \{0, 1\}^{l(k)}} q_y f_{\mathcal{A}}(A_M, y, r)
\end{equation}

Computing $p_{est}$ exactly would require summing exponentially many terms. However notice that $p_{est}$ is a sum of exponentially many \emph{positive} numbers, each of which can be evaluated in polynomial time (given access to the $k^{\mathsf{O}(k)}$ advice). 
For this reason, we can use Stockmeyer's approximate counting method to compute a multiplicative estimate of $p_{est}$ \cite{stockmeyer}. This will, of course, also yield a multiplicative estimate for $p$ itself.

We have thus given an algorithm for computing a multiplicative estimate of a $k \times k$ matrix $M$ that works in time polynomial in $k$, performs queries to an $\cNP$ oracle and has random access to $k^{\mathsf{O}(k)}$ bits of advice.
This algorithm can now be viewed as an implementation of the oracle $\mathcal{O}$ from Theorem~\ref{thm:gadget}.
Using that theorem, leads to a polynomial-time algorithm, with access to an $\cNP$ oracle and $k^{\mathsf{O}(k)}$-size advice, for computing $Per(M)$ \emph{exactly}.
However, since the advice is the same for all $k \times k$ matrices, by repeating this procedure for all $poly(n) 2^{n - k}$ $k \times k$ matrices and combining the results we obtain an algorithm for computing $Per(X)$ that runs in time $poly(n) 2^{n - k}$, uses $k^{\mathsf{O}(k)}$ bits of advice and makes polynomially-sized queries to an $\cNP$ oracle.

The last step is to convert this algorithm into a circuit. Since the algorithm has a running time of $poly(n) 2^{n - k}$, by choosing $k = c \; n / \log n$, for some suitable constant $c > 0$, we will have circuits of size at most $2^{n - \Omega \left( \frac{n}{log(n)} \right)}$. These circuits, must also operate on the $k^{\mathsf{O}(k)}$ bits of advice. Note that $k^{\mathsf{O}(k)} \ll 2^{n - \Omega \left( \frac{n}{log(n)} \right)}$. To reproduce the random access to these bits, we will assume that the gates have unbounded fan-in. The advice bits are therefore hardcoded into the circuit and ``fed'' into each part of the algorithm that makes use of them. Since only polynomially-many bits of the advice are used at any given point, this can only increase the size of our circuit by a polynomial factor.
This concludes the proof.
\end{proof}

With the above result, the proof of Theorem~\ref{thm:noGESsampBQP} is immediate:
\begin{proof}[Proof of Theorem~\ref{thm:noGESsampBQP}]
Recall from the end of Section~\ref{sect:crypto} that the existence of a GES for sampling from a distribution, $\mathcal{D}_x$, implies the existence of a probabilistic polynomial-time algorithm, with an $\cNP$ oracle, and which receives randomised polynomial-sized advice, for sampling from $\mathcal{D}_x$. In other words, having a GES for \textsc{BosonSampling} means having a $\sf{BPP^{NP}/rpoly}$ algorithm that can sample from the exact \textsc{BosonSampling} distribution.
We now notice that the result of Theorem~\ref{thm:sampcirc} relativises. In particular, this means that if
a $\sf{BPP^{NP}/rpoly}$ algorithm can sample from the exact \textsc{BosonSampling} distribution, then for any matrix $X \in \{-1, 0, 1 \}^{n \times n}$, there exist circuits of size $2^{n - \mathsf{\Omega} \left(\frac{n}{\log n} \right)}$, making polynomially-sized queries to an $\cNP^\cNP$ oracle, for computing $Per(X)$.
\end{proof}

\section{Quantum GES} \label{sect:qges}
\subsection{An upper bound for QGES functions}
This section is, in a certain sense, dedicated to `quantising' the Abadi et al.\ result. First of all, we need to define the quantum generalised encryption scheme. Since we're defining this in analogy to UBQC and other blind quantum computing protocols, our QGES considers a single quantum message from the client to the server, rather than allowing the entire interaction between client and server to be quantum. More formally, we have the following:
\begin{definition}
[Quantum Generalised Encryption Scheme (QGES)] \label{defn:QGES}
A quantum generalised encryption scheme (QGES) is a two party protocol between a quantum polynomial-time client $C$, and an unbounded server $S$, characterised by:
\begin{itemize}
\item A function $f: \{0, 1\}^* \rightarrow \{0, 1\}$.
\item A cleartext input $x \in \{0, 1\}^*$, for which the client wants to compute $f(x)$.
\item An expected polynomial-time key generation algorithm $K$ which works as follows: for any $x \in \{0, 1\}^*$, with probability greater than $1/2 + 1/poly(|x|)$ we have $(k, success) \leftarrow K(x)$, where $k \in \{0,1\}^{poly(|x|)}$. If the algorithm does not return $success$ then we have $(k', fail) \leftarrow K(x)$, where $k' \in \{0,1\}^{poly(|x|)}$.
\item A quantum polynomial-time algorithm $QE$, that takes as input classical bit strings and produces as output a quantum state, which works as follows: for any $x \in \{0, 1\}^*$, $k \in \{0, 1\}^{poly(|x|)}$ we have that $\Ket{y} \leftarrow QE(x, k)$, where $\Ket{y} \in \mathcal{H}$ and $dim(\mathcal{H}) = 2^{poly(|x|)}$.
\item A polynomial-time deterministic algorithm $E$ which works as follows: for any $x \in \{0, 1\}^*$, $k \in \{0, 1\}^{poly(|x|)}$ and $s \in \{0,1\}^{poly(|x|)}$ we have that $w \leftarrow E(x, k, s)$, where $w \in \{0,1\}^{poly(|x|)}$.
\item  A polynomial-time deterministic decryption algorithm $D$, which works as follows: for any $x \in \{0, 1\}^*$, $k \in \{0, 1\}^{poly(|x|)}$ and $s \in \{0,1\}^{poly(|x|)}$ we have that $z \leftarrow D(s, k, x)$, where $z \in \{0,1\}^{poly(|x|)}$.
\end{itemize}
And satisfying the following properties:
\begin{enumerate}
\item There are $m$ rounds of communication, such that $m = poly(|x|)$. Denote the client's message in round $i$ as $c_i$ and the server's message as $s_i$.
\item On cleartext input $x$, $C$ runs the key generation algorithm until success to compute a key $(k, success) = K(x)$. This happens before the communication between $C$ and $S$ is initiated, and the key $k$ is used throughout the protocol.
$C$ then runs $QE(x,k)$ to obtain a quantum encryption of the input, $\Ket{y}$ and sends it to $S$\footnote{It should be noted that it makes no difference for our definition if the client sends the whole state $\ket{y}$ to the server or part of it. The state received by the server will be mixed and the only important property we require is that the client has a purification of this state.}.
\item In round $i$ of the protocol, $C$ computes $c_i = E(x, k, \overline{s}_{i-1})$, where $\overline{s}_{i-1}$ denotes the server's responses up to and including round $i-1$, i.e.\ $\langle s_0, s_1 ... s_{i-1} \rangle$. We assume that $s_0$ is the empty string. $C$ then sends $c_i$ to $S$.
\item In round $i$ of the protocol, $S$ responds with $s_i$, such that $s_i \in \{0,1 \}^{poly(|x|)}$. 
\item At the end of the protocol, $C$ computes $z \leftarrow D(\overline{s}_m, k, x)$ and with probability $1/2 + 1/poly(|x|)$, it must be that $z = f(x)$.
\end{enumerate}
\end{definition}

As mentioned in the introduction, we add an additional requirement to our QGES known as offline-ness. What this means is that the client can send a quantum state to the server, representing an encryption of the input, and decide afterwards which input it intends to use. Essentially the client is free to change its mind about the input and not commit to a particular input when the protocol commences. More formally, we define the offline condition as follows:
\begin{definition} 
[Offline QGES] \label{defn:offline}
In a QGES, let:
\begin{equation}
\Ket{\psi_x} = \frac{1}{\sqrt{|\mathcal{K}_C(x)|}} \sum\limits_{k^x_i \in \mathcal{K}_C(x)} \Ket{k^x_i}_K \Ket{y^x_i}_E
\end{equation}
be the state of the client's system on input $x$ in a superposition over possible keys and encryptions.
Here, $\mathcal{K}_C(x)$ is the set of encryption keys which are compatible with $x$ (i.e.\ could have resulted from the key generation algorithm acting on $x$, $K(x)$) and $\Ket{y^x_i}_E \leftarrow QE(x, k^x_i)$ is the quantum encryption of $x$ using the key $k^x_i$. The subscripts $K$ and $E$ indicate the key register and the encrypted input register, respectively.

For $n > 0$, let $x_1, x_2 \in \{0, 1\}^n$ be two different inputs for $f$ such that $x_1 \neq x_2$.
We say that the QGES is offline if there exists a polynomial-sized quantum circuit, $V$, acting only on the key register, $K$, such that $V \ket{\psi_{x_1}} = \ket{\psi_{x_1}}$.
\end{definition}
The definition essentially says that there exists an efficient procedure that the client can apply on her local system in order to map from one input of size $n$, to a different input of the same size.
One might ask whether this property is not immediately satisfied by a QGES leaking only the size of the input. Indeed, leaking only the size of the input is equivalent to saying that the density matrix corresponding to the quantum encryption, which the server receives, is the same for all inputs of the same size. In other words, it is the case that:
\begin{equation}
Tr_{K}(\ket{\psi_{x_1}}\bra{\psi_{x_1}}) = Tr_{K}(\ket{\psi_{x_2}}\bra{\psi_{x_2}})
\end{equation}
for any $x_1, x_2 \in \{0, 1\}^n$.
Since the density matrix is the same, that means that there exists a unitary (acting on the client's system) which can map one purification of this state, corresponding to one input, to another, corresponding to a different input. This unitary must be verifiable in the sense that the client can check (using a quantum \textsf{SWAP} test) whether the unitary maps to the correct purification. However, this unitary \emph{need not have a polynomial-sized quantum circuit representation}. 
The offline condition simply imposes that such a circuit always exists.
As we've mentioned before, UBQC and all other existing delegated blind quantum computation protocols satisfy this property.

We are now ready to prove Theorem~\ref{thm:qges}, showing that functions that can be delegated by the client to the server in an offline QGES are contained in $\qgesdec$:
\begin{proof}[Proof of Theorem~\ref{thm:qges}]
For an input $x$ for which the client wants to compute $f(x)$, consider the state: \\
\begin{equation}
\Ket{\psi_x} = \frac{1}{\sqrt{|\mathcal{K}_C(x)|}} \sum\limits_{k^x_i \in \mathcal{K}_C(x)} \Ket{k^x_i}_K \Ket{y^x_i}_E
\end{equation}
where the notation is the same as in Definition~\ref{defn:offline}.
If we trace out the key register, $K$, the resulting density matrix is the mixed state of possible encrypted states which the server will receive:
\begin{equation}
\rho_x = \frac{1}{|\mathcal{K}_C(x)|} \sum\limits_{k^x_i \in \mathcal{K}_C(x)} \Ket{y^x_i}\Bra{y^x_i}
\end{equation}
The assumption that the protocol only leaks the size of the input $x$ to the server implies that for any two inputs $x_1$, $x_2$ it is the case that $\rho_{x_1} = \rho_{x_2}$. In fact, something stronger is true. 
The server's `view' of the protocol should be the same in both cases. By this we mean that the server's system as well as the distribution of his classical messages must be independent of $x$, given the size of $x$. Therefore, we should consider a state comprising of his system and his response after receiving the quantum encryption, for a particular input, $x$:
\begin{equation}
\Ket{\phi_x} = \frac{1}{\sqrt{|\mathcal{K}_C(x)|}} \sum\limits_{k^x_i \in \mathcal{K}_C(x)}
 \Ket{k^x_i}_K \; U_{ERS} \Ket{y^x_i}_E \Ket{0}^{\otimes t}_{R} \Ket{anc}_{S}
\end{equation}
Here, $U_{ERS}$ is the unitary performed by the server in order to produce his response, which will be stored in the response register, initially set to $\Ket{0}^{\otimes t}_{R}$, where $t = poly(|x|)$. This unitary will of course involve the encrypted state provided by the client and the server's private ancilla, denoted as $\Ket{anc}_S$ (but will not involve the key register).
Note that the server's response is a classical bit string. Hence, the state in the response register, obtained through the application of the unitary $U_{ERS}$, will be a probabilistic mixture over computational basis states. This, however, makes no difference in our proof and we can just as well assume that his response is a general quantum state.

If we again trace out the register $K$ we obtain some state $\sigma_x = Tr_{K}(\Ket{\phi_x}\Bra{\phi_x})$. This state encodes the distribution of possible messages exchanged by the client and the server in one round of interaction, as well as the server's private system. Since $\rho_{x_1} = \rho_{x_2}$, it is also the case that $\sigma_{x_1} = \sigma_{x_2}$. By the purification principle, this means that if we consider the states:
\begin{equation}
\Ket{\phi_{x_1}} = \frac{1}{\sqrt{|\mathcal{K}_C(x_1)|}} \sum\limits_{k^{x_1}_i \in \mathcal{K}_C(x_1)}
 \Ket{k^{x_1}_i}_K \; U_{ERS} \Ket{y^{x_1}_i}_E \Ket{0}^{\otimes t}_{R} \Ket{anc}_{S}
\end{equation}
\begin{equation}
\Ket{\phi_{x_2}} = \frac{1}{\sqrt{|\mathcal{K}_C(x_2)|}} \sum\limits_{k^{x_2}_i \in \mathcal{K}_C(x_2)}
 \Ket{k^{x_2}_i}_K \; U_{ERS} \Ket{y^{x_2}_i}_E \Ket{0}^{\otimes t}_{R} \Ket{anc}_{S}
\end{equation}
there exists a local unitary, $V_K$, acting only on the key register which can map $\Ket{\phi_{x_1}}$ to $\Ket{\phi_{x_2}}$, for any two inputs $x_1$ and $x_2$. In fact, let us examine the states of the system for inputs $x_1$ and $x_2$ before $U_{ERS}$ is applied:
\begin{equation}
\Ket{\chi_{x_1}} = \frac{1}{\sqrt{|\mathcal{K}_C(x_1)|}} \sum\limits_{k^{x_1}_i \in \mathcal{K}_C(x_1)}
 \Ket{k^{x_1}_i}_K \; \Ket{y^{x_1}_i}_E \Ket{0}^{\otimes t}_{R} \Ket{anc}_{S}
= \Ket{\psi_{x_1}}_{KE} \Ket{0}^{\otimes t}_{R} \Ket{anc}_{S}
\end{equation}
\begin{equation}
\Ket{\chi_{x_2}} = \frac{1}{\sqrt{|\mathcal{K}_C(x_2)|}} \sum\limits_{k^{x_2}_i \in \mathcal{K}_C(x_2)}
 \Ket{k^{x_2}_i}_K \; \Ket{y^{x_2}_i}_E \Ket{0}^{\otimes t}_{R} \Ket{anc}_{S}
 = \Ket{\psi_{x_2}}_{KE} \Ket{0}^{\otimes t}_{R} \Ket{anc}_{S}
\end{equation}
These states are also related by $V_K$. This can be inferred from the following relations.
First, we know that:
\begin{equation}
(V_K \otimes I_{ERS}) \Ket{\phi_{x_1}} = \Ket{\phi_{x_2}}
\end{equation}
And also that:
\begin{equation}
(I_K \otimes U_{ERS}) \Ket{\chi_{x_1}} = \Ket{\phi_{x_1}} \; \; \; \; \; \;
(I_K \otimes U_{ERS}) \Ket{\chi_{x_2}} = \Ket{\phi_{x_2}}
\end{equation}
Therefore:
\begin{equation}
(I_K \otimes U^{\dagger}_{ERS})(V_K \otimes I_{ERS})(I_K \otimes U_{ERS}) \Ket{\chi_{x_1}} = \Ket{\chi_{x_2}}
\end{equation}
But $(I_K \otimes U^{\dagger}_{ERS})$ and $V_K \otimes I_{ERS}$ commute because they act on different systems and therefore $(I_K \otimes U^{\dagger}_{ERS})$ and $(I_K \otimes U_{ERS})$ will cancel out, leaving:
\begin{equation}
(V_K \otimes I_{ERS}) \Ket{\chi_{x_1}} = \Ket{\chi_{x_2}}
\end{equation}
This is also illustrated in the following diagram:
\begin{center}
\scalebox{1.5}{
\begin{xy}
\xymatrix@C=8em@R=4em{
   \ket{\chi_{x_1}} \ar@{-->}[r]^{V_K \otimes I_{ERS}} \ar@<-4pt>[d]_{I_K \otimes U_{ERS}} & \ket{\chi_{x_2}}  \\
   \ket{\phi_{x_1}} \ar[r]^{V_K \otimes I_{ERS}} & \ket{\phi_{x_2}}  \ar@<-2pt>[u]_{I_K \otimes U^{\dagger}_{ERS}}
}
\end{xy}
}
\end{center}
Since the protocol is offline, we know that $V_K$ must be a polynomial-sized quantum circuit.
Note that even if we trace out the server's ancilla from the states $\Ket{\phi_{x_1}}$ and $\Ket{\phi_{x_2}}$, the resulting states are still related by $V_K$ on the key register.
This allows us to define a $\cQCMAqpoly$ algorithm computing any function which admits a QGES.
To do so we first introduce some notation. We will consider the following states:
\begin{equation}
\Ket{\kappa_x} = \frac{1}{\sqrt{|\mathcal{K}_C(x)|}} \sum\limits_{k^x_i \in \mathcal{K}_C(x)} \Ket{k^x_i}_K
\end{equation}
\begin{equation}
\Ket{\kappa_{x'}} = \frac{1}{\sqrt{|\mathcal{K}_C(x')|}} \sum\limits_{k^{x'}_i \in \mathcal{K}_C(x')} \Ket{k^{x'}_i}_K
\end{equation}
which are simply superpositions over the valid keys for two different inputs $x$ and $x'$. Next we consider:
\begin{equation}
\Ket{\phi_{x}} = \frac{1}{\sqrt{|\mathcal{K}_C(x)|}} \sum\limits_{k^{x}_i \in \mathcal{K}_C(x)}
 \Ket{k^{x}_i}_K \; U_{ERS} \Ket{y^{x}_i}_E \Ket{0}^{\otimes t}_{R} \Ket{anc}_{S}
\end{equation}
\begin{equation}
\Ket{\phi_{x'}} = \frac{1}{\sqrt{|\mathcal{K}_C(x')|}} \sum\limits_{k^{x'}_i \in \mathcal{K}_C(x')}
 \Ket{k^{x'}_i}_K \; U_{ERS} \Ket{y^{x'}_i}_E \Ket{0}^{\otimes t}_{R} \Ket{anc}_{S}
\end{equation}
which include the encrypted states and the server's response. Lastly, we trace out the server's ancilla from both these states resulting in:
\begin{equation}
\omega_x = Tr_{S}(\Ket{\phi_{x}}\Bra{\phi_{x}})
\end{equation}
\begin{equation}
\omega_{x'} = Tr_{S}(\Ket{\phi_{x'}}\Bra{\phi_{x'}})
\end{equation}
From the above argument the two states $\Ket{\kappa_x}$ and $\Ket{\kappa_{x'}}$ and the two states $\omega_x$ and $\omega_{x'}$ are related through the same polynomial-sized quantum circuit $V_K$ acting only on the key register.

We can now present the algorithm. Let us first consider the one round case. The algorithm would work as follows:
\begin{enumerate}
\item The input to the algorithm is some string $x$ for which we want to compute $f(x)$.
\item The algorithm receives as advice the string $x'$ which is simply some string of the same length as $x$. Additionally,
it receives the state $\omega_{x'}$.
It is clear that both of these only depend on $|x|$ and have a length which is polynomial in $|x|$ hence constituting a valid advice.
\item From the definition of the key generating function, the algorithm can efficiently produce the states $\Ket{\kappa_x}$ and $\Ket{\kappa_{x'}}$.
\item The classical witness is a description of the quantum circuit $V^{\dagger}_K$.
\item The algorithm tests that $V^{\dagger}_K$ maps $\Ket{\kappa_{x'}}$ to $\Ket{\kappa_{x}}$. This can be done through a quantum \textsf{SWAP} test.
\item Use $V^{\dagger}_K$ to map $\omega_{x'}$ to $\omega_x$.
\item By measuring the response register of $\omega_x$, the algorithm obtains the response that the server would have produced in an interaction with the client in the QGES protocol. Applying the decryption algorithm to this response will yield the correct result $f(x)$ with high probability.
\end{enumerate}
The probability of success of the algorithm can be boosted by providing polynomially many copies of $\omega_{x'}$ as advice and performing multiple \textsf{SWAP} tests. Additionally this algorithm can be made to compute the complement of $f(x)$ as well which would gives us a $\ccoQCMAqpoly$ containment.

Let us now consider the case of polynomially many rounds of interaction, of which the first round involves quantum interaction. We've shown that for that first round, we can design a polynomial-time quantum algorithm, receiving quantum advice and a classical witness, to emulate the client in the QGES. For the rounds after the quantum interaction, we essentially have a classical GES. 
Our algorithm would then operate as in the proof of Theorem~\ref{thm:afk}. In other words, apart from the quantum advice and the witness providing the circuit $V_K$, the algorithm will also receive classical (randomised) advice corresponding to a transcript of the classical interaction in the protocol, and a key $k$ to decrypt this transcript. It follows that $f$ will be contained in $\qgesdec$.
\end{proof}

A question we might ask is: why can't we adapt this proof to the case of full quantum communication between the client and the server? For this case, we would need a stronger version of the offline-ness condition that imposes the existence of a polynomial-size quantum circuit for mapping \emph{the entire transcript} of the protocol, corresponding to some input $x_1$ to that of some input $x_2$ (where $|x_1| = |x_2|$).
Another question is whether the class of interest should in fact be $\cBQP^{\qgesdec}$ since a $\cBQP$ client could use the QGES as an oracle. Just as we showed that $\cBPP^{\gesdec} = \gesdec$ we can show that:
\begin{lemma}
$\cBQP^{\qgesdec} = \qgesdec$
\end{lemma}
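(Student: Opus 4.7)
The plan is to mirror the proof of $\cBPP^{\gesdec} = \gesdec$ given just above. The direction $\qgesdec \subseteq \cBQP^{\qgesdec}$ is immediate, since a $\cBQP$ machine can forward its input directly to a single oracle query and echo the response. For the reverse direction, $\cBQP^{\qgesdec} \subseteq \qgesdec$, I would follow a quantum analogue of the three-step template of the classical proof: a quantum version of Adleman's theorem, absorption of advice into a simulating $\cQCMA$ verifier, and a Brassard-style simulation of the oracle using the classical witness.

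The first step is the relativising inclusion $\cBQP \subseteq \cBQPqpoly$ (taking the empty state as advice), giving $\cBQP^{\qgesdec} \subseteq \cBQPqpoly^{\qgesdec}$. For the second step, I would observe that both the quantum advice of the $\cBQPqpoly$ machine and the quantum advice of the $\cQCMAqpoly$ verifier for the oracle language depend only on the input length, so they can be concatenated into a single poly-qubit advice state for a simulating verifier. It therefore suffices to prove $\cBQP^{\qgesdec} \subseteq \cQCMAqpoly$; the $\ccoQCMAqpoly$ containment follows by applying the same argument to the complement language, using that $\qgesdec$ is closed under complement by definition.

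For the third step, given a $\cBQP$ oracle machine $M$ deciding some language with oracle $L \in \qgesdec$, with associated $\cQCMA$ verifiers $V_L$ and $V_{L^c}$ sharing quantum advice $\rho_n$, I would construct a $\cQCMAqpoly$ verifier $W$ that receives polynomially many copies of $\rho_n$ as its quantum advice, together with a polynomial-size classical witness whose role is to let $W$ coherently implement the oracle unitary $|y\rangle|b\rangle \mapsto |y\rangle|b\oplus L(y)\rangle$ at each of the $\mathsf{poly}(n)$ oracle calls made during the simulation of $M$. Since both yes and no instances of $L$ admit classical $\cQCMA$ witnesses, either possible response at a given query can in principle be certified via $V_L$ or $V_{L^c}$; consistency of the witness with the quantum advice can be enforced by \textsf{SWAP}-style checks analogous to the ones appearing in the proof of Theorem~\ref{thm:qges}.

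The main obstacle lies precisely in this third step: a $\cBQP$ machine queries its oracle in superposition, so there is no single classical list of $(y, L(y))$ pairs to which query-by-query witnesses can be attached, as in the classical Brassard argument. Overcoming this requires the classical witness to describe a coherent simulator of $L$ --- most naturally, a small description of a unitary built from $V_L$ and $V_{L^c}$ that acts correctly on every basis state of the query register --- and to certify its correctness against $\rho_n$. Establishing that a polynomial-size such description always exists whenever $L \in \qgesdec$, and that its verification by $W$ is both complete and sound, is the technically delicate heart of the argument, and parallels the careful intertwining of $\cNP$ and $\ccoNP$ witnesses along nondeterministic branches in the classical proof.
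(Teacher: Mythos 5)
Your outline matches the paper's proof in its skeleton: the forward inclusion is trivial, the reverse reduces to showing $\cBQP^{\qgesdec} \subseteq \cQCMAqpoly$ with the $\ccoQCMAqpoly$ side following by complementation, the oracle is simulated by running the $\cQCMA$ and $\ccoQCMA$ verifiers for $L$ and $L^c$ together and exploiting their complementarity, and the quantum advice is handled by concatenating the simulator's advice with the oracle's advice states. (Your first step, a quantum Adleman-style passage through $\cBQPqpoly$, is harmless but not needed and not present in the paper; the $\cBQP$ oracle machine is already the right base for a $\cQCMA$ verifier.) However, your proposal stops exactly where the paper's actual technical content begins. You correctly identify that superposition queries are the obstacle, but you then defer the resolution to an unspecified ``coherent simulator'' whose polynomial-size description and verification you acknowledge you have not established. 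That deferred step is the entire content of the lemma's proof, so as written the argument is incomplete.

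What the paper does there is concrete and does not require the witness to encode a description of a unitary at all. After amplifying $Q_L$ and $Q_{L^c}$ to error $2^{-poly(n)}$, it builds a reversible circuit $SimQuery$ on $2t+1$ qubits: apply $Q_L$ and $Q_{L^c}$ in parallel to the query register together with their witness and ancilla registers, apply an $\mathsf{X}$ to the output qubit of $Q_{L^c}$, apply a $\mathsf{CCNOT}$ from the two output qubits onto a fresh answer qubit, undo the $\mathsf{X}$, and then apply $Q_L^{\dagger}$ and $Q_{L^c}^{\dagger}$ to uncompute. The uncomputation is the key device you are missing: it is what makes the whole block act (up to exponentially small error) as the oracle unitary $\ket{x}\ket{b} \mapsto \ket{x}\ket{b \oplus L(x)}$ even on superposed query registers, since the garbage produced by the verifiers is erased before the branches recombine. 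The simulating machine then substitutes $SimQuery$ for each of the polynomially many oracle calls and controls the accumulated error by a union bound. The soundness of $Q_L$ and $Q_{L^c}$ on any witness, combined with completeness on the correct one, is what makes the $\mathsf{CCNOT}$ fire exactly when $x \in L$; no certification of a simulator against the advice via \textsf{SWAP} tests is needed. (You are right that the input-dependence of the witnesses on superposed queries is a delicate point; the paper treats the witness registers as part of the query simulation rather than as a single global certificate, and this is the one place where your concern has real bite, but your proposal offers no alternative mechanism, whereas the paper at least supplies the explicit circuit and error analysis.)
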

\begin{proof}
This proof is similar to the one showing that $\cBPP^{\gesdec} = \gesdec$. Just like in that case, the inclusion
$\qgesdec \subseteq \cBQP^{\qgesdec}$ is immediate and we need only show that $\cBQP^{\qgesdec} \subseteq \cQCMAqpoly$. The containment in $\ccoQCMAqpoly$ follows by complementation.

Consider a quantum algorithm \emph{QA} for deciding problems in $\cBQP^{\qgesdec}$. We will show that this algorithm can be simulated by a $\cQCMAqpoly$ algorithm, denoted \emph{NQA}. Since $\cBQP$, $\cQCMA$ and $\ccoQCMA$ have bounded error in deciding problems, we can assume, from standard amplification techniques, that this error is of order $2^{-poly(n)}$, where $n$ is the size of the input. We will also assume that for all quantum algorithms measurements are postponed until the end of the circuit.

We will treat the case without advice first, and then explain how to deal with the quantum advice at the end.
To start with, \emph{NQA} will simulate \emph{QA} until it makes a query to the oracle. In the standard definition of oracles the oracle is just a classical function that solves a decision problem. However, when dealing with quantum algorithms such as \emph{QA} it is also possible to speak of quantum oracles, where the oracle can be viewed as some unitary operation (technically a sequence of unitary operations for each possible input length, see \cite{ak} for more details) which \emph{QA} can query even in superposition. Our result will cover this more general case of quantum oracles. We would therefore like the \emph{NQA} algorithm to be able to simulate this quantum oracle.

Firstly, just like in the classical case we have that if some language $L \in \cQCMA \cap \ccoQCMA$ then $L \in \cQCMA$ and $L^c \in \ccoQCMA$, where $L^c$ is the complement of $L$. This means that there exist polynomial-sized quantum circuits $Q_L$ and $Q_{L^c}$ which take some input $x$ along with classical witnesses $w_1$ and $w_2$, respectively, and decide correctly, when the output is measured, with probability at least, $1 - 2^{-poly(|x|)}$.
In other words, $Q_L$ receives as input $\Ket{x}\Ket{w_1}\ket{0^m}$ and $Q_{L^c}$ receives as input $\ket{x}\ket{w_2}\ket{0^m}$, respectively, where $m = poly(|x|)$.
If we were to run both $Q_L$ and $Q_{L^c}$ on $x$, because $L$ and $L^c$ are complementary, the output qubits, when measured, will also be complementary with high probability.

Assume that $Q_L$ and $Q_{L^c}$ are circuits which act on $t = poly(|x|)$ qubits.
We define a new quantum circuit called $SimQuery$ which operates on $2t + 1$ qubits. $SimQuery$ applies $Q_L$ to the first $t$ qubits and $Q_{L^c}$ to the next $t$ qubits. It then applies a Pauli $\mathsf{X}$ to the output qubit of $Q_{L^c}$ and a $\textsf{CCNOT}$ operation from the output qubits of $Q_L$ and $Q_{L^c}$ onto the the $2t+1$'th qubit.
It then applies $\mathsf{X}$ again to the output qubit of $Q_{L^c}$ and then $Q_L^{\dagger}$ and $Q_{L^c}^{\dagger}$ on the first $2m$ qubits. An illustration of this circuit (acting on the $\ket{00...0}$ input) is given in Figure~\ref{fig:simquery}.
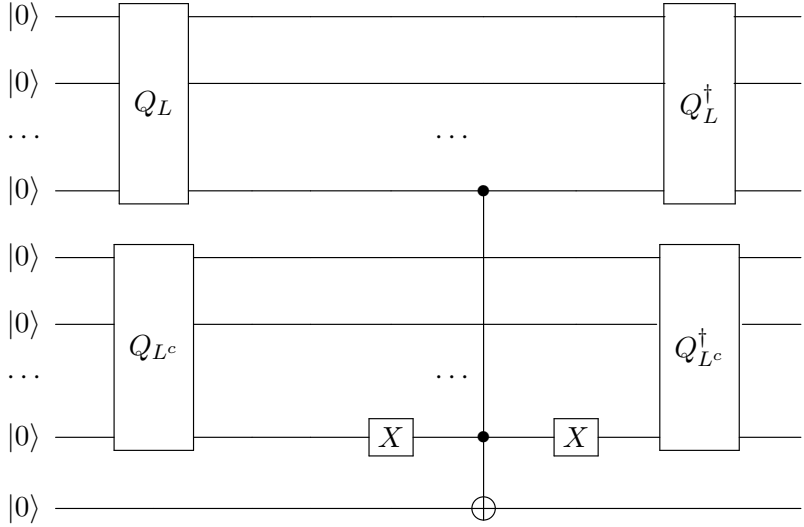
\begin{figure}[ht!]
\[
\Qcircuit @C=2em @R=1.4em {
	& \lstick{\Ket{0}} & \multigate{3}{Q_L} & \qw & \qw & \qw & \qw & \qw & \multigate{3}{Q_L^{\dagger}} & \qw \\
	& \lstick{\Ket{0}} & \ghost{Q_L} & \qw & \qw & \qw & \qw & \qw & \ghost{Q_L^{\dagger}} & \qw \\
	& \lstick{\cdots}  &  &  &  & & \lstick{\cdots} &  &  &  \\
	& \lstick{\Ket{0}} & \ghost{Q_L} & \qw & \qw & \qw & \ctrl{4} & \qw & \ghost{Q_L^+} & \qw \\		
	& \lstick{\Ket{0}} & \multigate{3}{Q_{L^c}} & \qw & \qw & \qw & \qw & \qw & \multigate{3}{Q_{L^c}^{\dagger}} & \qw \\
	& \lstick{\Ket{0}} & \ghost{Q_{L^c}} & \qw & \qw & \qw & \qw & \qw & \ghost{Q_{L^C}^+} & \qw \\
	& \lstick{\cdots}  &  &  &  &  & \lstick{\cdots} &  &  &  \\
	& \lstick{\Ket{0}} & \ghost{Q_{L^c}} & \qw & \qw & \gate{X} & \ctrl{1} & \gate{X} & \ghost{Q_{L^c}^{\dagger}} & \qw \\
	& \lstick{\Ket{0}} & \qw & \qw & \qw & \qw & \targ & \qw & \qw & \qw
}
\]
\caption{Quantum circuit for $SimQuery$ acting on the $\ket{00...0}$ input.}
\label{fig:simquery}
\end{figure}

The $\textsf{CCNOT}$ operation flips its target qubit if the control qubits are in the state $\ket{11}$. The effect of the first Pauli $\mathsf{X}$ is to flip the outcome when the control qubits are in the state $\ket{10}$. Roughly speaking, $SimQuery$ will flip the final qubit if $Q_L$ accepts and $Q_{L^c}$ rejects. The reason for then applying the two circuits in reverse is to `uncompute' their result and only leave the $2t+1$ qubit flipped whenever $Q_L$ accepts and $Q_{L^c}$ rejects.

We can now explain how \emph{NQA} can use $SimQuery$ to simulate a query of \emph{QA}. Suppose \emph{QA} queries the oracle for some input $x$ testing to see if it is in $L$, for some $L \in \cQCMA \cap \ccoQCMA$. In other words, $\ket{x}\Ket{0} \rightarrow \Ket{x}\ket{1}$, with high probability, if $x \in L$ and $\ket{x}\Ket{0} \rightarrow \Ket{x}\ket{0}$, with high probability, if $x \not\in L$. \emph{NQA} will then run $SimQuery$ with input $\ket{x}\ket{w_1}\ket{0^m}\ket{x}\ket{w_2}\ket{0^m}\ket{0}$, where $w_1$ and $w_2$ are the witnesses from before and $m = poly(|x|)$.
The effect of this will be to flip the final qubit if $x \in L$ and leave it unchanged if $x \not\in L$, with high probability. This is true because of the complementarity of $Q_L$ and $Q_{L^c}$ (when one accepts the other rejects and viceversa, except with small probability).

This procedure will simulate the query that \emph{QA} performs. \emph{NQA} then uses the last qubit from $SimQuery$ as the query response qubit and continues to do this for all other queries of \emph{QA} and otherwise simulate \emph{QA} exactly. Note that each simulated query has some small probability of not matching the actual query of \emph{QA} when a measurement is performed. However, as mentioned, this probability is exponentially small. Since there are polynomially many queries in total, by a union bound, the probability that at least one simulated query behaves incorrectly will still be exponentially small.

Adding quantum advice to this picture does not change much. Just like in the classical case, we can assume that \emph{NQA} receives as advice a concatenation of all advice states used by the oracle of \emph{QA}. The quantum circuits $Q_L$, $Q_{L^{c}}$ and $SimQuery$ are then extended with polynomially many qubits to act on this advice as well.

It is therefore the case that $\cBQP^{\qgesdec} \subseteq \qgesdec$ and our result follows directly.
\end{proof}

\noindent Note that in the QGES, the client need not be a fully $\cBQP$-capable machine (and indeed, for UBQC the only quantum capabilities of the client are to prepare single qubits).

What happens if we drop the ``offline'' requirement for this scheme? As mentioned, that would imply that the mapping from one purification of the encrypted quantum state to another can in principle be any unitary operation so long as the client can check that the correct mapping was performed (with high probability). Having such a weak restriction on this unitary makes it very difficult to impose an upper bound on the types of computations that are allowed by such a scheme. Indeed, the offline-ness condition plays a crucial role in our proof of Theorem~\ref{thm:qges}. At the same time, it is arguably a very natural condition to have in any realistic protocol. We therefore leave the online case as an open problem.

\subsection{QGES and $\textsf{NP}$-hard functions}
Theorem~\ref{thm:qges} can be viewed as a quantum version of Theorem~\ref{thm:afk} which, as mentioned,
was used by Abadi et al.\ to show that there can be no GES for $\cNP$-hard functions unless the polynomial hierarchy collapses.
As we have stated before, for quantum computers, the possibility of delegating $\cNP$-complete problems makes, arguably, even more sense since Grover's algorithm offers a quadratic speed-up in solving such problems \cite{grover}. Alas, we show that even with a QGES, delegating such problems seems unlikely.

Since we have shown that functions which admit an offline QGES are contained in $\qgesdec$, and since if $\cNP \subset \qgesdec$ then $\ccoNP \subset \qgesdec$, to prove Theorem \ref{thm:QGESNP}, it suffices to show that if $\ccoNP \subset \cQCMAqpoly$  then, informally speaking, the polynomial hierarchy \textquotedblleft
comes about as close to collapsing as one could reasonably hope to prove given
a quantum hypothesis\textquotedblright---and more specifically, that $\Pi
_{3}^{\mathsf{P}}\subseteq\left(  \Sigma_{2}^{\mathsf{P}}\right)
^{\mathsf{P{}romiseQMA}}$. \ Here a $\mathsf{P{}romiseQMA}$\ oracle means an
oracle for some $\mathsf{P{}romiseQMA}$-complete promise problem $\left(
\Pi_{\operatorname*{YES}},\Pi_{\operatorname*{NO}}\right)  $, whose responses
can be arbitrary on inputs $x\notin\Pi_{\operatorname*{YES}}\cup
\Pi_{\operatorname*{NO}}$\ that violate the promise. \ We don't even demand
that the oracle's responses, on promise-violating inputs, be consistent from
one query to the next. \ On the other hand, it does need to be possible to
query the $\mathsf{P{}romiseQMA}$\ oracle on some promise-violating inputs,
without such queries causing the entire algorithm to abort.

The starting point for all such collapse results, of course, is the
Karp-Lipton Theorem \cite{kp}, which says that if $\mathsf{NP}\subset
\mathsf{P/poly}$\ then $\Pi_{2}^{\mathsf{P}}\subseteq\Sigma_{2}^{\mathsf{P}}$,
and hence the polynomial hierarchy collapses to the second level. \ An easy
extension of the Karp-Lipton theorem, proved by Yap \cite{yap}, which we now reprove for completeness,
shows that if $\mathsf{coNP}\subset\mathsf{NP/poly}$, then $\mathsf{PH}%
$\ collapses to the \textit{third} level.

\begin{proposition}
\label{klext}If $\mathsf{coNP}\subset\mathsf{NP/poly}$, then $\Pi
_{3}^{\mathsf{P}}\subseteq\Sigma_{3}^{\mathsf{P}}$.
\end{proposition}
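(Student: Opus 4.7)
The plan is to generalize the standard Karp--Lipton self-correction argument by one quantifier level. I would start by fixing an arbitrary $L \in \Pi_3^{\mathsf{P}}$ and writing $x \in L \iff \forall y_1 \, \exists y_2 \, \forall y_3 \; \phi(x, y_1, y_2, y_3)$ for a polynomial-time predicate $\phi$ with polynomially bounded quantifiers. Since $\mathsf{TAUT}$ is $\mathsf{coNP}$-complete and $\mathsf{coNP} \subset \mathsf{NP/poly}$ by hypothesis, there exist a polynomial-size advice sequence $\{a_n\}$ and a polynomial-time verifier $V$ such that $\psi \in \mathsf{TAUT} \iff \exists w \; V(\psi, w, a_{|\psi|}) = 1$. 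Letting $r$ be the polynomial-time reduction of the inner $\mathsf{coNP}$ predicate $\forall y_3 \, \phi(x, y_1, y_2, y_3)$ to $\mathsf{TAUT}$, membership in $L$ under the correct advice $a$ becomes $\forall y_1 \, \exists (y_2, w) \; V(r(x, y_1, y_2), w, a) = 1$, which would be $\Sigma_3^{\mathsf{P}}$ if we could existentially guess $a$.

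The main obstacle, and the step that has to be handled carefully, is that a naively guessed advice could be unsound and produce false positives. I would sidestep this by bundling a soundness certificate into the outer guess: the advice $a$ must satisfy $\forall (\psi, w, z) \; \bigl[V(\psi, w, a) = 0 \lor \psi(z) = 1\bigr]$, i.e., $V$ accepts $\psi$ under $a$ only when $\psi$ is genuinely tautological. Because $\mathsf{TAUT}$-hood expands as $\forall z \, \psi(z) = 1$, this guard is a purely universal condition in $a$. Merging it with the earlier expression collapses the whole statement into
\begin{equation*}
x \in L \iff \exists a \, \forall (\psi, w, z, y_1) \, \exists (y_2, w') \; \bigl[V(\psi, w, a) = 0 \lor \psi(z) = 1\bigr] \land V(r(x, y_1, y_2), w', a) = 1,
\end{equation*}
which is manifestly $\Sigma_3^{\mathsf{P}}$.

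Finally, I would verify both directions. For $x \in L$, the true advice is sound by definition and certifies, for every $y_1$, some $(y_2, w')$ satisfying the inner clause, so it witnesses the outer $\exists a$. For $x \notin L$, there is a $y_1$ such that every $r(x, y_1, y_2)$ is non-tautological; any $a$ passing the soundness check then cannot make $V$ accept any of them, so no $(y_2, w')$ works for that $y_1$, defeating the outer $\exists a$. Hence $L \in \Sigma_3^{\mathsf{P}}$, as required.
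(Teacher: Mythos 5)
Your proposal is correct and takes essentially the same route as the paper's proof: existentially guess the $\mathsf{NP/poly}$ advice, enforce its soundness with a purely universal clause, and then use the verifier to replace the innermost universal quantifier of the $\Pi_3^{\mathsf{P}}$ sentence by an existential witness, yielding an $\exists\forall\exists$ (i.e.\ $\Sigma_3^{\mathsf{P}}$) formula. The only differences are cosmetic --- you work with $\mathsf{TAUT}$ rather than UNSAT, merge everything into one prenex formula, and omit the paper's separate ``completeness of $C$'' clause, which is indeed redundant here since the genuine advice already witnesses the outer existential quantifier.
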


\begin{proof}
Abusing notation, here and later in this section we'll use $\Phi$, $\Psi$,
etc.\ to refer not only to SAT\footnote{SAT stands for \emph{boolean satisfiability} and a SAT instance, in our context, simply refers to a boolean formula. The problem of deciding whether a given boolean formula admits and assignment of variables that evaluates to true (i.e. a satisfying assignment) is $\cNP$-complete.} instances but to strings encoding those
instances. \ Also, if (say) $\Phi\left(  x,y,z\right)  $ is a SAT instance
taking multiple strings as input, then by $\Phi\left(  x,y\right)  $, we'll
mean the instance obtained from $\Phi$ by fixing the variables in $x$ and $y$,
and leaving only the variables in $z$ as free variables.

A $\Pi_{3}^{\mathsf{P}}$ sentence has the form%
\[
S=\text{\textquotedblleft}\forall x\exists y\forall z~\Phi\left(
x,y,z\right)  \text{\textquotedblright}%
\]
where $x,y,z$\ are strings of some given polynomial size, and $\Phi$\ is a
polynomial-time computable predicate (without loss of generality, a SAT
instance). \ Under the stated hypothesis, we need to show how to decide $S$ in
$\Sigma_{3}^{\mathsf{P}}$.

Let $C$ be the assumed $\mathsf{NP/poly}$\ algorithm for $\mathsf{coNP}$, and
let $a$ be its advice. \ Then by hypothesis, for all SAT instances $\Psi$, if
$\Psi$\ is unsatisfiable then there exists a witness $w$ such that $C\left(
\Psi,w,a\right)  $\ accepts, while if $\Psi$\ is satisfiable then $C\left(
\Psi,w,a\right)  $\ rejects for all $w$.

Our $\Sigma_{3}^{\mathsf{P}}$\ rewriting of $S$ is now as follows:\bigskip

\textbf{There exists an advice string }$a$\textbf{, such that}

\begin{enumerate}
\item[(1)] \textbf{(Completeness of }$\mathbf{C}$\textbf{)} For all SAT
instances $\Psi$, either there exists a $z$ that satisfies $\Psi$, or else
there exists a $w$ such that $C\left(  \Psi,w,a\right)  $\ accepts.

\item[(2)] \textbf{(Soundness of }$\mathbf{C}$\textbf{)} For all SAT instances
$\Psi$, all satisfying assignments $z$\ for $\Psi$, and all $w$, the procedure
$C\left(  \Psi,w,a\right)  $\ rejects.

\item[(3)] \textbf{(Truth of }$\mathbf{S}$\textbf{)} For all $x$, there exists
a $y$ as well as a witness $w$ such that $C\left(  \urcorner\Phi\left(
x,y\right)  ,w,a\right)  $\ accepts. \ (In other words, there is no $z$ that
makes $\Phi\left(  x,y,z\right)  $\ false.)
\end{enumerate}
\end{proof}

Proposition \ref{klext} is what we seek to imitate in the quantum setting,
getting whatever leverage we can from the weaker assumption $\mathsf{coNP}%
\subset\mathsf{QCMA/qpoly}$.

Note that, had we assumed (say) $\mathsf{coNP}\subset\mathsf{QCMA/poly}$, it
would be routine to mimic the usual Karp-Lipton argument, merely substituting
the class $\mathsf{P{}romiseQCMA}$ for $\mathsf{NP}$\ at appropriate points in
the proof of Proposition \ref{klext}. \ This would give us the collapse
$\Pi_{3}^{\mathsf{P}}\subseteq\left(  \Sigma_{2}^{\mathsf{P}}\right)
^{\mathsf{P{}romiseQCMA}}$. \ However, the fundamental difficulty we face is
that our hypothesised nonuniform algorithm uses \textit{quantum advice
states}. \ And while a $\mathsf{P{}romiseQMA}$\ machine can simply guess a
quantum advice state $\sigma$, it can't then pass $\sigma$\ to an oracle, at
least not with conventional oracle calls. \ (To allow the passing of quantum
states to oracles, we would need \textit{quantum oracles}, as studied for
example by Aaronson and Kuperberg \cite{ak}.)

To get around this difficulty, we'll rely essentially on a 2010 result of
Aaronson and Drucker \cite{qkp, aaronsondrucker},\ characterising the power of
quantum advice. \ These authors proved that $\mathsf{BQP/qpoly}$\ is contained
in $\mathsf{QMA/poly}$---and even more strongly,

\begin{theorem}
\label{adthm}
$\mathsf{BQP/qpoly=YQP/poly}$.
\end{theorem}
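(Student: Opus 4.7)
The approach is to prove both containments separately. The direction $\mathsf{YQP/poly} \subseteq \cBQPqpoly$ is the easier one: given a $\mathsf{YQP/poly}$ verifier $V$ with classical advice $a_n$ and a good quantum witness $\ket{\psi_n}$, I would bundle both pieces into the quantum advice for a $\cBQPqpoly$ machine by setting the advice to be $\ket{a_n}\ket{\psi_n}$. The classical part, sitting in the computational basis, can be read off without being disturbed, and the $\cBQPqpoly$ machine then simulates $V$ on the input together with $\ket{\psi_n}$, accepting iff $V$ does. Completeness is immediate; soundness follows from the soundness of $V$ against arbitrary quantum witnesses, since the classical advice is correct by construction.

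For the harder direction $\cBQPqpoly \subseteq \mathsf{YQP/poly}$ I would employ a ``trusted challenges, untrusted witness'' construction. Fix a $\cBQPqpoly$ algorithm $C$ with advice $\ket{\psi_n}$. The $\mathsf{YQP/poly}$ verifier receives as classical advice a short list of challenge inputs $x_1,\dots,x_k$ (with $k=\mathrm{poly}(n)$) together with the correct answers $b_i = C(x_i,\ket{\psi_n})$, and as quantum witness what is purported to be $\ket{\psi_n}^{\otimes m}$ for some $m=\mathrm{poly}(n)$. On actual input $x$, the verifier splits the witness into $m$ registers, uses most of them to run $C(x_i,\cdot)$ on the challenges and takes a majority vote to test agreement with each $b_i$; if every challenge passes, it runs $C(x,\cdot)$ on a fresh register to obtain its output, and otherwise it rejects/outputs ``don't know''. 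When the witness is honest, all tests pass with high probability and the final answer is correct.

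The crux is the soundness argument: one needs to choose $(x_i,b_i)_{i\leq k}$ so that \emph{any} witness $\rho$ surviving all challenges also forces $C(x,\rho)$ to agree with $C(x,\ket{\psi_n})$ on the actual input $x$. The plan is an adversary/potential argument of the style used by Aaronson for $\cBQPqpoly \subseteq \mathsf{PP/poly}$, strengthened to an online-learning statement. Starting from the uniform prior over possible advice states, iteratively pick the challenge input on which the worst still-consistent adversary state disagrees most strongly with $\ket{\psi_n}$; hard-coding the correct answer as part of the advice list eliminates a constant fraction of remaining adversaries (measured by an appropriate information-theoretic potential), so a polynomial number of challenges exhausts the adversary pool.

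The main obstacle is handling adversarial, possibly entangled witnesses: since $\rho$ is supplied by the prover, the $m$ registers may be wildly correlated, and measurements carried out during the challenge phase can disturb the global state and bias the final evaluation. I would address this with a two-part reduction: first, a symmetrisation step that randomly permutes the registers so a de Finetti-type bound lets us reason as if the blocks were nearly i.i.d., and second, a gentle-measurement argument exploiting the fact that once the majority checks are passing, each challenge measurement has outcome very close to deterministic and therefore barely disturbs the register later used for $C(x,\cdot)$. Composing these with the online-learning bound above, the final register behaves operationally like $\ket{\psi_n}$ on input $x$, yielding soundness and completing the containment.
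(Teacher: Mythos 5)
The paper does not prove this theorem; it imports it from Aaronson and Drucker and only sketches the method, namely the ``majority-certificates'' technique. Your easy direction ($\mathsf{YQP/poly} \subseteq \cBQPqpoly$) is fine. The hard direction has the right overall architecture (classical advice used to verify an untrusted witness purporting to be the advice state), but the step you yourself identify as the crux is the one that fails as described. A polynomial list of challenge pairs $(x_i, b_i)$ does not pin down the behaviour of an arbitrary state $\rho$ on the remaining exponentially many inputs: the adversary's states form a continuum, and ``eliminating a constant fraction of adversaries'' per challenge has no meaning without first reducing to a finite class. Learnability results for quantum states (Aaronson's quantum Occam's razor, or mistake-bounded online learning) bound the errors of a \emph{specific learner's hypothesis}, not of an arbitrary adversarial state that happens to satisfy the polynomially many constraints. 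This is exactly the gap the majority-certificates lemma exists to close: one does not verify the original advice state $\ket{\psi_n}$ at all, but replaces it by a boosted state $\rho$ with special structure (together with a modified algorithm $Q'$ reproducing the original acceptance probabilities), takes a \emph{majority over several independently certified states} drawn from a finite, discretised function class of polynomial fat-shattering dimension, and proves via a minimax/boosting argument that $O(\mathrm{poly})$ certificates force correct behaviour on \emph{all} inputs. The paper explicitly flags that the verification ``works only because of special structure in $\rho$''; your plan tries to certify the unstructured $\ket{\psi_n}$ directly, which is the thing that does not work.

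Your proposed fix for entangled witnesses is also quantitatively unsound: the finite quantum de Finetti theorems carry error terms scaling polynomially in the local dimension $d$, and here each register holds a $\mathrm{poly}(n)$-qubit state, so $d = 2^{\mathrm{poly}(n)}$ and the bound is vacuous. Aaronson and Drucker must (and do) handle entanglement across witness registers by a different, more hands-on argument tied to the structure of the certificates, not by symmetrisation plus de Finetti. The gentle-measurement half of your argument is salvageable once the tested state is the structured $\rho$ (whose certificate measurements are nearly deterministic by construction), but as stated it is being applied to a state for which near-determinism has not been established.
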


Here $\mathsf{YQP}$, known as Yoda quantum polynomial-time, is the class of problems solvable by a polynomial-time
quantum algorithm with help from a polynomial-size \textit{untrusted} quantum
advice state that depends only on the input length $n$. \ In other words,
Theorem \ref{adthm}\ says that we can simulate trusted quantum advice by
trusted classical advice combined with untrusted quantum advice, by using the
classical advice to verify the quantum advice for ourselves.

By using Theorem \ref{adthm}, to replace a quantification over quantum advice
states by a quantification over classical advice strings, Aaronson and Drucker were able to show the following:

\begin{theorem}
\label{adcollapse}If $\mathsf{NP}\subset\mathsf{BQP/qpoly}$, $\Pi
_{2}^{\mathsf{P}}\subseteq\mathsf{QMA}^{\mathsf{P{}romiseQMA}}$.
\end{theorem}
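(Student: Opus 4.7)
My plan is to use Theorem~\ref{adthm} to replace the trusted quantum advice of the hypothesis by an \emph{untrusted} quantum advice state that can be safely guessed as a $\mathsf{QMA}$ witness inside a $\mathsf{PromiseQMA}$ oracle call.

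First I would invoke Theorem~\ref{adthm} to upgrade the hypothesis $\mathsf{NP}\subset\mathsf{BQP/qpoly}$ to $\mathsf{NP}\subset\mathsf{YQP/poly}$. So we may fix a $\mathsf{YQP/poly}$ algorithm $B$ for SAT with polynomial-size classical advice $b$ and polynomial-size untrusted quantum advice $\tau$. The two $\mathsf{YQP}$ guarantees that will do all the work are: (completeness) some $\tau$ makes $B$ decide every length-$n$ SAT instance correctly with high probability, and (soundness) for \emph{every} state $\tau'$ and every input $\Psi$, $B(\Psi,b,\tau')$ outputs either the correct yes/no answer or an explicit ``abstain''---it never, with non-negligible probability, outputs an incorrect yes/no answer.

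Next, for a $\Pi_2^{\mathsf{P}}$ sentence $S=\forall x\,\exists y\,\Phi(x,y)$, I would design the $\mathsf{QMA}^{\mathsf{PromiseQMA}}$ decider as follows. The outer $\mathsf{QMA}$ verifier takes the classical advice $b$ as its witness (embedded as computational-basis bits of its quantum witness), makes one call to a $\mathsf{PromiseQMA}$-complete oracle $\mathcal{O}$ on input $(b,S)$, and accepts $S$ iff $\mathcal{O}$ rejects. The oracle $\mathcal{O}$ is the $\mathsf{PromiseQMA}$ problem whose verifier parses a quantum witness as a classical string $x$ together with a state $\tau'$, runs $B(\Phi(x,\cdot),b,\tau')$, and accepts iff $B$ declares $\Phi(x,\cdot)$ unsatisfiable. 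If $S$ is false, some $x^*$ makes $\Phi(x^*,\cdot)$ unsatisfiable, and the witness $(x^*,\tau)$ with the true $\mathsf{YQP}$ advice causes $B$ to declare ``unsatisfiable'' by completeness, so $(b,S)$ is a YES instance of $\mathcal{O}$. If $S$ is true, then $\Phi(x,\cdot)$ is satisfiable for every $x$, and by $\mathsf{YQP}$ soundness no choice of $x$ or $\tau'$ can induce $B$ to declare it unsatisfiable (doing so would be a wrong answer, which soundness forbids), so $(b,S)$ is a NO instance of $\mathcal{O}$. Hence the promise is always satisfied and the outer verifier decides $S$ correctly.

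The main obstacle is exactly what makes quantum Karp--Lipton subtler than the classical version: an adversarial witness supplying a bad $\tau'$ must not trick the oracle into accepting on a true $\Pi_2$ sentence. The one-sided $\mathsf{YQP}$ soundness---bad quantum advice can at worst cause the algorithm to abstain, never to flip a ``satisfiable'' into an ``unsatisfiable''---is exactly what keeps the ``accept iff $B$ says unsatisfiable'' gate clean, and it is precisely the content imported from Theorem~\ref{adthm}. Without this one-sidedness one would be forced to pass $\tau'$ as a genuine quantum input to the oracle, landing in the stronger quantum-oracle model of \cite{ak} rather than in $\mathsf{QMA}^{\mathsf{PromiseQMA}}$.
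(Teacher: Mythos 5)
Note first that the paper does not actually prove Theorem~\ref{adcollapse}; it quotes it from Aaronson and Drucker \cite{aaronsondrucker}, and the closest argument carried out in the text is the proof of the analogous Theorem~\ref{thm:QGESNP}. Measured against either, your construction has a genuine soundness gap: you never verify that the guessed classical advice string $b$ is correct, yet every guarantee you invoke (``$\mathsf{YQP}$ completeness,'' ``$\mathsf{YQP}$ soundness,'' ``the promise is always satisfied'') is a property of the \emph{true} advice only. Concretely, suppose $S$ is false. Your outer verifier accepts iff the oracle $\mathcal{O}$ rejects, so $\cQMA$ soundness requires that \emph{every} witness $b'$ make $(b',S)$ an instance on which $\mathcal{O}$ accepts. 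For the correct $b$ this holds, as you argue. But take a garbage string $b'$ whose induced validity test rejects every state, so that $B(\cdot,b',\tau')$ always abstains: then no witness $(x,\tau')$ makes the inner verifier declare anything unsatisfiable, $(b',S)$ is a legitimate NO instance of $\mathcal{O}$, the oracle rejects, and your outer verifier accepts the false sentence $S$. You flagged the wrong failure mode --- the danger is not a bad $\tau'$ fooling the oracle into accepting on a true $S$ (which the one-sidedness from Theorem~\ref{adthm} does handle, for the correct $b$), but a bad $b'$ neutralising the oracle so that it rejects on a false $S$. A $\cQMA$ witness is adversarially chosen, and the $\mathsf{YQP}$ guarantees say nothing about adversarial classical advice.

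Supplying the missing mechanism is exactly the crux of this family of results: the paper's proof of Theorem~\ref{thm:QGESNP} devotes three of its four quantified conditions to verifying the guessed advice (uniqueness of behaviour across all valid quantum states, plus completeness and soundness of the induced procedure), and explicitly identifies advice verification as the reason for the structure of the collapse. For Theorem~\ref{adcollapse} specifically, one can instead sidestep advice verification via downward self-reducibility of SAT, as in the classical Karp--Lipton argument: the single oracle query asks whether there exist an $x$ and (copies of) a state $\rho$ passing the validity test determined by $b$ such that the search-to-decision procedure driven by $B(\cdot,b,\rho)$ fails to output a satisfying assignment of $\Phi(x,\cdot)$; the final, unconditional assignment check then does the work that verifying $b$ would otherwise have to do. Some device of this kind is indispensable, and your proposal contains none.
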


By adapting our argument from later in this section, one can actually improve
Theorem \ref{adcollapse}, to show that if $\mathsf{NP}\subset
\mathsf{BQP/qpoly}$ then $\Pi_{2}^{\mathsf{P}}\subseteq\mathsf{NP}%
^{\mathsf{P{}romiseQMA}}$. \ In any case, we now seek a common generalisation
of the proofs of Proposition \ref{klext} and Theorem \ref{adcollapse}, to get
a collapse from the assumption $\mathsf{coNP}\subset\mathsf{QCMA/qpoly}$.

As Aaronson and Drucker \cite{aaronsondrucker}\ pointed out, a simple
extension of their proof of Theorem \ref{adthm}\ gives $\mathsf{QCMA/qpoly}%
\subseteq\mathsf{QMA/poly}$, and even the following.

\begin{theorem}
\label{adthm2}%
$\mathsf{QCMA/qpoly}=\mathsf{YQ\cdot QCMA/poly}.$
\end{theorem}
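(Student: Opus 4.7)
The plan is to derive the nontrivial direction $\cQCMAqpoly \subseteq \mathsf{YQ\cdot QCMA/poly}$ by a padding reduction to Theorem~\ref{adthm}; the reverse inclusion will be immediate. Given $L \in \cQCMAqpoly$ decided by a polynomial-time quantum verifier $V(x,w,\rho_{|x|})$ with quantum advice $\{\rho_n\}$, my first step would be to fold the classical witness $w$ into the input. After amplifying $V$'s errors to $2^{-n^c}$ for a large constant $c$, I define the promise problem $L'$ whose YES instances are the pairs $\langle x,w\rangle$ with $\Pr[V(x,w,\rho_{|x|})=1] \geq 1-2^{-n^c}$ and whose NO instances are those with that probability at most $2^{-n^c}$. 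Then $L'$ lies in $\cPromiseBQP/\mathsf{qpoly}$ using the very same advice $\rho_n$.

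Next, I would apply Theorem~\ref{adthm} to $L'$ (whose proof extends to promise problems with no essential change), placing $L' \in \mathsf{YQP/poly}$. Unpacking this yields a polynomial-time quantum procedure $V'$, a trusted classical advice string $a_n$, and an honest untrusted quantum state $\sigma_n^\ast$ such that (i) $V'(\langle x,w\rangle,\sigma_n^\ast,a_n)$ outputs the correct decision with probability $1-o(1)$, and (ii) for every state $\sigma$, with probability $1-o(1)$ the procedure $V'(\langle x,w\rangle,\sigma,a_n)$ either outputs the correct decision or a distinguished ``reject-advice'' symbol. The claimed $\mathsf{YQ\cdot QCMA/poly}$ protocol for $L$ is then: on input $x$ with classical witness $w$, untrusted quantum advice $\sigma$, and classical advice $a_n$, run $V'(\langle x,w\rangle,\sigma,a_n)$ and accept iff it outputs ``yes''. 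Completeness follows from (i) applied to an honest QCMA witness $w^\ast$ for any $x \in L$; soundness follows from (ii), since for $x \notin L$ every $w$ makes $\langle x,w\rangle$ a NO instance of $L'$, so $V'$ on every $\sigma$ with high probability outputs either ``no'' or ``reject-advice'', both of which cause our protocol to reject.

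The main obstacle I anticipate is ensuring that the Yoda-style soundness of $V'$ survives against a malicious prover who jointly chooses $w$ and $\sigma$ to fool the verifier. I expect this to go through because the Aaronson--Drucker certification is input-independent: the classical advice $a_n$ checks $\sigma$ via its behaviour on a fixed polynomial-size test set that does not depend on the particular $\langle x,w\rangle$ being decided, so the soundness conclusion holds uniformly over all inputs of length $n+\mathrm{poly}(n)$, and in particular over every $(w,\sigma)$ pair an adversary could present. Finally, the reverse containment $\mathsf{YQ\cdot QCMA/poly} \subseteq \cQCMAqpoly$ is trivial: I would package $\sigma_n^\ast$ together with $a_n$ (encoded in the computational basis) as a single trusted quantum advice state, and run the given $\mathsf{YQ\cdot QCMA/poly}$ verifier unchanged.
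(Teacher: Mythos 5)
The paper does not actually prove Theorem~\ref{adthm2}; it cites Aaronson and Drucker, who obtain it as a ``simple extension'' of Theorem~\ref{adthm} via the majority-certificates method. Your argument---folding the classical witness into the input, applying Theorem~\ref{adthm} to the resulting promise problem, and invoking the input-independence of the certificate test so that soundness holds against a jointly chosen pair $(w,\sigma)$---is a correct instantiation of exactly that intended extension, so it is essentially the same approach.
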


Here the $\mathsf{YQ\cdot}$ operator simply adds untrusted quantum advice to
whatever (quantum) complexity class it acts on. \ Thus $\mathsf{YQ\cdot
BQP}=\mathsf{YQP}$, while for completeness:

\begin{definition}
$\mathsf{YQ\cdot QCMA}$ is the class of languages $L$ for which there exist
polynomial-time quantum algorithms $C$ and $V$, such that for all input
lengths $n$:

\begin{itemize}
\item There exists a polynomial-size quantum advice state $\sigma_{n}$ such
that $V\left(  0^{n},\sigma_{n}\right)  $\ accepts with probability at least
$0.99$. \ If $V\left(  0^{n},\sigma\right)  $\ accepts with probability at
least $0.98$, then we call the advice state $\sigma$\ \textquotedblleft
valid\textquotedblright\ for input length $n$.

\item For all inputs $x\in\left\{  0,1\right\}  ^{n}\cap L$ and all valid
$\sigma$, there exists a polynomial-size classical witness $w$ such that
$C\left(  x,w,\sigma\right)  $ accepts with probability at least $2/3$\ .

\item For all inputs $x\in\left\{  0,1\right\}  ^{n}\setminus L$, all
classical witnesses $w$, and all valid $\sigma$, we have that $C\left(
x,w,\sigma\right)  $ accepts with probability at most $1/3$.
\end{itemize}
\end{definition}

In what follows, we'll need one additional observation about the proof of
Theorem \ref{adthm2}. \ Namely, in our $\mathsf{YQ\cdot QCMA/poly}%
$\ simulation of $\mathsf{QCMA/qpoly}$, without loss of generality we can
choose the classical advice string $a=a_{n}$ in such a way that there's
essentially just \textit{one} valid quantum advice state\ compatible with $a$.
\ Or more precisely: we can ensure that, for all $\rho_{1},\rho_{2}$\ such
that $V\left(  0^{n},a,\rho_{1}\right)  $\ and $V\left(  0^{n},a,\rho
_{2}\right)  $\ both accept with probability at least $0.98$, and all $x$ and
$w$, we have (say)%
\[
\left\vert \Pr\left[  C\left(  x,w,a,\rho_{1}\right)  \text{ accepts}\right]
-\Pr\left[  C\left(  x,w,a,\rho_{2}\right)  \text{ accepts}\right]
\right\vert <\frac{1}{20}.
\]
This is because Theorem \ref{adthm2}, like Theorem \ref{adthm}, is proven via
the method of \textquotedblleft majority-certificates,\textquotedblright\ in
which given a polynomial-time quantum algorithm $Q$, one verifies that an
unknown quantum state $\rho$\ leads to approximately the desired values of
$\Pr\left[  Q\left(  x,\rho\right)  \text{ accepts}\right]  $\ for
\textit{each} of exponentially many different inputs $x$, via a measurement of
$\rho$\ that takes only polynomial time. \ We note that this works only
because of special structure in $\rho$---but for any state $\sigma$, there
exists another state $\rho$\ that has the requisite special structure, as well
as a modified quantum algorithm $Q^{\prime}$, such that%
\[
\Pr\left[  Q^{\prime}\left(  x,\rho\right)  \text{ accepts}\right]  \approx
\Pr\left[  Q\left(  x,\sigma\right)  \text{ accepts}\right]
\]
for all $x$.

We're finally ready to prove Theorem \ref{thm:QGESNP}.
\begin{proof}[Proof of Theorem~\ref{thm:QGESNP}]
Essentially, we are going to show that if $\mathsf{coNP}\subset\mathsf{QCMA/qpoly}$, then $\Pi_{3}%
^{\mathsf{P}}\subseteq\left(  \Sigma_{2}^{\mathsf{P}}\right)  ^{\mathsf{P{}%
romiseQMA}}$.
A $\Pi_{3}^{\mathsf{P}}$ sentence has the form%
\[
S=\text{\textquotedblleft}\forall x\exists y\forall z~\Phi\left(
x,y,z\right)  \text{\textquotedblright}%
\]
where $x,y,z$\ are strings of some given polynomial size, and $\Phi$\ is a
polynomial-time computable predicate. \ Under the stated hypothesis, we need
to show how to decide $S$ in $\mathsf{NP}^{\mathsf{NP}^{\mathsf{P{}romiseQMA}%
}}$.

By Theorem \ref{adthm2}, the hypothesis $\mathsf{coNP\subset QCMA/qpoly}$\ is
equivalent to $\mathsf{coNP\subset YQ\cdot QCMA/poly}$. \ In other words: we
can assume that there exists a polynomial-time quantum algorithm $C\left(
\Phi,w,a,\sigma\right)  $, which takes as input a SAT instance $\Phi$, a
classical witness $w$, a classical advice string $a$, and a quantum advice
state $\sigma$. \ Assuming $a$ and $\sigma$ are the correct $\mathsf{YQ\cdot
QCMA/poly}$ advice, $C$ checks whether $w$ is a witness to $\Phi$'s
\textit{un}satisfiability. \ This is a sound and complete proof system for
$\mathsf{coNP}$, in the sense that, again assuming the correctness of $a$ and
$\sigma$,

\begin{enumerate}
\item[(i)] for every unsatisfiable $\Phi$, there exists a $w$ such that
$C\left(  \Phi,w,a,\sigma\right)  $\ accepts with probability at least $2/3$,

\item[(ii)] for no satisfiable $\Phi$ does there exist a $w$ such that
$C\left(  \Phi,w,a,\sigma\right)  $\ accepts with probability more than $1/3$.
\end{enumerate}

Moreover, as discussed above, there exists an $a$ such that the state $\sigma
$\ is essentially unique, in the sense that%
\[
\Pr\left[  C\left(  \Psi,w,a,\rho_{1}\right)  \text{ accepts}\right]
\approx\Pr\left[  C\left(  \Psi,w,a,\rho_{2}\right)  \text{ accepts}\right]
\]
for all valid $\rho_{1},\rho_{2}$.

Our job is to rewrite $S$ as an $\mathsf{NP}^{\mathsf{NP}^{\mathsf{P{}%
romiseQMA}}}$ sentence. \ Our rewriting will be as follows:\bigskip

\textbf{There exists a classical advice string }$a$\textbf{ such that}

\begin{enumerate}
\item[(1)] for all valid quantum advice states $\rho_{1},\rho_{2}$, all SAT
instances $\Psi$, and all assignments $w$, we have%
\[
\left\vert \Pr\left[  C\left(  \Psi,w,a,\rho_{1}\right)  \text{ accepts}%
\right]  -\Pr\left[  C\left(  \Psi,w,a,\rho_{2}\right)  \text{ accepts}%
\right]  \right\vert <\frac{1}{10}.
\]

(In words: the classical advice string $a$ uniquely determines the behaviour of
$C$, once we find a valid quantum advice state $\sigma$ that's compatible with
$a$.)

\item[(2)] For all SAT instances $\Psi$, there exists a valid quantum advice
state $\sigma$, as well as either an assignment $z$ that satisfies $\Psi$, or
else a classical witness $w$ such that $C\left(  \Psi,w,a,\sigma\right)  $
accepts with probability at least $2/3$.

(In words: the advice $a$ leads to a complete procedure for deciding the class
$\mathsf{coNP}$, and specifically the UNSAT problem, in $\mathsf{YQ\cdot
QCMA/poly}$. \ That is, once we find a valid advice state $\sigma$, the
quantum algorithm $C$ then accepts every SAT instance $\Psi$ that has no
satisfying assignment.)

\item[(3)] For all valid quantum advice states $\sigma$, all SAT instances
$\Psi$, all $z$, and all $w$, if $z$ satisfies $\Psi$ then $C\left(
\Psi,w,a,\sigma\right)  $ rejects with probability at least $2/3$.

(In words: $a$ leads to a \textit{sound} procedure for deciding UNSAT. \ That
is, once we find a valid $\sigma$ that's compatible with $a$, the quantum
algorithm $C$ accepts no SAT instance $\Psi$ that \textit{has} a satisfying assignment.)

\item[(4)] For all $x$, there exists a valid quantum advice state $\sigma$, as
well as a $y$ and a classical witness $w$, such that $C\left(  \urcorner
\Phi\left(  x,y\right)  ,w,a,\sigma\right)  $ accepts with probability at
least $2/3$.

(In words: $C$ verifies that for all $x$, there exists a $y$ such that
$\urcorner\Phi\left(  x,y\right)  $ is unsatisfiable. \ In other words, $C$
verifies that for all $x$, there exists a $y$ such that for all $z$, we have
$\Phi\left(  x,y,z\right)  $. \ In other words, $C$ verifies the truth of the
$\Pi_{3}^{\mathsf{P}}$-sentence $S$.)\bigskip
\end{enumerate}

As a point of clarification, whenever we quantify over quantum states (such as
$\sigma$), we can actually take a tensor product of a polynomial number of
copies of the states, as needed. \ Of course, we can't rule out the
possibility that we'll get a state that's entangled across all the registers.
\ Fortunately, though, we don't use the witness state registers in such a way
that it ever matters whether they're entangled or not.

As a second point of clarification, in forming the statement above, whenever
we have a condition that involves a quantum algorithm (say, $V$ or $C$)
accepting with probability at least $2/3$, it's implied that if the condition
fails, then the algorithm accepts with probability at most $1/3$. \ This makes
verifying the condition a quantum polynomial-time operation. \ Likewise, for
part (1), it can be guaranteed that there exists an $a$ such that, for all
$\rho_{1},\rho_{2}$ consistent with $a$ and all $\Psi$ and $w$, the difference
between the two acceptance probabilities is at most (say) $1/20$. \ In such a
case, one can verify in quantum polynomial time that the difference is at most
$1/10$.

With these clarifications, it's not hard to see that we've given an
$\mathsf{NP}^{\mathsf{NP}^{\mathsf{P{}romiseQMA}}}$ procedure. \ The
$\mathsf{NP}$ at the bottom guesses the classical advice string $a$. \ The
$\mathsf{NP}$\ in the middle guesses $\Psi$\ for part (2) and $x$\ for part
(4), and is not needed for parts (1) and (3). \ Finally, the $\mathsf{P{}%
romiseQMA}$ on top guesses the quantum advice state $\sigma$ (or $\rho
_{1},\rho_{2}$\ for part (1)), as well as $\Psi$, $w$, $y$, and $z$ as needed.
\ Crucially, quantum states are only ever guessed in the topmost,
$\mathsf{P{}romiseQMA}$\ quantifier: once guessed, they never need to be
passed on to another quantifier, which is impossible with conventional oracle calls.

But why does the procedure we've given correctly decide the $\Pi
_{3}^{\mathsf{P}}$-sentence $S$? \ Well, firstly, \textit{if} $a$ is a correct
trusted advice string, then part (4) of the procedure just directly expresses
$S$, using the assumed $\mathsf{YQ\cdot QCMA/poly}$ algorithm for
$\mathsf{coNP}$ to eliminate one of the three quantifiers in the usual manner
of Karp-Lipton theorems.

That leaves the problem of verifying that $a$ is a correct trusted advice
string. \ Parts (2) and (3) of the procedure verify the latter, by quantifying
over all possible SAT instances $\Psi$ of the appropriate polynomial size, and
checking that for each one, either $\Psi$ has a satisfying assignment or else
there's a witness $w$ that causes $C$ to accept $\Phi$, but not both. \ (In
other words, $C$ decides $\mathsf{coNP}$ in $\mathsf{YQ\cdot QCMA/poly}$.)

Now, for parts (2) and (4), we additionally needed an \textit{existential}
quantifier over the untrusted quantum advice state $\sigma$, which is then
verified using the trusted classical advice string $a$. \ The reason is that,
in parts (2) and (4), the third and final quantifier needed, over the
classical strings $y$, $z$, or $w$, happens to be existential---so that third
quantifier simply \textit{must} do \textquotedblleft double
duty\textquotedblright\ by also guessing the state $\sigma$. \ As mentioned
before, passing a quantum state from an earlier quantifier to a later one is
impossible with conventional oracle calls.

However, this need to quantify existentially over $\sigma$\ opens up a
problem. \ Namely, what if the existential quantifiers, in parts (2) or (4),
can be satisfied by \textit{different} advice states $\sigma$---states that
are all compatible with $a$, but that lead to different behaviours of $C$\ on
some inputs? \ For example, perhaps there exists an $a$ such that some
$\sigma$'s compatible with $a$ give rise to a complete verification procedure
for UNSAT, while other $\sigma$'s compatible with $a$\ give rise to a sound
verification procedure for UNSAT, but the same $\sigma$\ never gives rise to
both. \ If so, then the $\sigma$ that we find in part (4) need not give rise
to a correct $\mathsf{YQ\cdot QCMA/poly}$ algorithm for $\mathsf{coNP}$.

Fortunately, we can fix this problem using part (1). \ In part (1), we
enforced that \textit{every} state $\sigma$ compatible with $a$ must give rise
to essentially the same behaviour on every input. \ Thus, from that point
forward, it doesn't even matter whether we find $\sigma$ via a universal
quantifier or an existential one: every $\sigma$ that passes verification will
give rise to the same behaviour, and parts (2), (3), and (4) are all talking
about the same $\mathsf{YQ\cdot QCMA/poly}$ procedure that correctly decides
$\mathsf{coNP}$.
\end{proof}

\bibliography{report}
\bibliographystyle{unsrt}

\end{document}